\documentclass[runningheads]{llncs}

\usepackage[dvipsnames]{xcolor}

\usepackage{xspace}

\usepackage{microtype}
\usepackage{todonotes}
\usepackage{amsmath}
\usepackage{amsmath,amsfonts,amssymb} 
\usepackage{url}
\usepackage{stmaryrd}
\usepackage{mathrsfs}
\usepackage{wasysym}
\usepackage{bm}
\usepackage{alltt}

\usepackage[inline,shortlabels]{enumitem}
\usepackage{todonotes}
\usepackage{wrapfig}
\usepackage{subcaption}
\usepackage{pgfplots}
\usepackage{multirow}
\usepackage{verbatim}
\usepackage{mathtools}
\usepackage{booktabs}
\usepackage{extarrows}
\usepackage{mathtools}

\usepackage{algorithm}
\usepackage{algorithmic}
\usepackage{listings}
\usepackage[title]{appendix}
\usepackage{marginnote}
\usepackage{listings}
\usepackage{url}
\usepackage{wrapfig}
\usepackage{graphicx}
\usepackage{colortbl}
\usepackage{pifont}
\usepackage{hhline}
\usepackage[colorlinks=true,citecolor=blue,linkcolor=blue,urlcolor=blue]{hyperref}
\usepackage{graphicx}
\usepackage{adjustbox}
\usepackage[autostyle]{csquotes}

\definecolor{Blueberry}{RGB}{4,51,255}


\usepackage{filecontents}

\usepackage{tikz, tikzscale, pgfplots}
\usetikzlibrary{arrows,arrows.meta,calc,automata,positioning,decorations.pathreplacing,shapes.geometric,shapes.misc,graphs,backgrounds,shadows.blur,snakes}
\tikzset{align at top/.style={baseline=(current bounding box.north)}}
\tikzstyle{every node}=[font=\scriptsize]
\tikzstyle{state} = [draw,fill=white,ellipse,thick,align=center,inner sep=0pt,minimum size=4.5mm]
\tikzstyle{vvert} = [draw,fill=white,ellipse,thick,align=center,inner sep=-2pt,minimum size=8mm]
\tikzstyle{rvert} = [draw,fill=white,rectangle,thick,align=center,inner sep=3pt,minimum size=7mm]
\tikzstyle{dot} = [fill,circle,inner sep=0mm,minimum size=1.25mm,line width=0mm]

\makeatletter

\def\orcidID#1{\smash{\href{http://orcid.org/#1}{\protect\raisebox{-1.25pt}{\protect\includegraphics{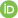}}}}}
\makeatother

\allowdisplaybreaks


\newcommand{\Nat}{\ensuremath{\mathbb{N}}}
\newcommand{\Reals}{\ensuremath{\mathbb{R}}}
\newcommand{\Realsnn}{\ensuremath{\Reals_{\geq0}}}

\newcommand{\NP}{\mathsf{NP}}
\newcommand{\coNP}{\mathsf{coNP}}

\newcommand{\powerset}{\mathscr{P}}
\newcommand{\powersetf}{\powerset_{\mathsf{f}}}

\newcommand{\last}{\textit{last}}

\newcommand{\mult}{\ }

\newcommand{\proj}{\mathrm{pr}}

\newcommand{\charac}{\mathbf{1}}

\newcommand{\Finally}{\Diamond}
\newcommand{\Until}{\mathbin{\mathsf{U}}}

\newcommand{\limp}{\Rightarrow}


\newcommand{\abs}[1]{\lvert{#1}\rvert}
\newcommand{\bvec}[1]{\boldsymbol{#1}}
\newcommand{\Poly}{\mathsf{Poly}}
\newcommand{\Simp}{\mathsf{Simp}}
\newcommand{\DPoly}{\mathsf{DPoly}}
\newcommand{\DSimp}{\mathsf{DSimp}}
\newcommand{\Triang}{\mathsf{Triang}}

\newcommand{\vertices}{\mathbb{V}}


\newcommand{\Dist}{\mathsf{Dist}}
\newcommand{\Dirac}{\delta}

\newcommand{\Salg}{\bm{\Sigma}}
\newcommand{\PMeas}{\mathsf{PMeas}}
\newcommand{\diff}{\mathbf{d}}
\newcommand{\sect}[1]{{|_{#1}}}

\newcommand{\Prob}{\mathbb{P}}
\newcommand{\Expect}{\mathbb{E}}


\newcommand{\maxplay}{{\Box}}
\newcommand{\minplay}{{\Diamond}}

\newcommand{\psgtrans}{\Theta}
\newcommand{\psgnodes}{\sgnodes}
\newcommand{\psgnmax}{\sgnmax}
\newcommand{\psgnmin}{\sgnmin}

\newcommand{\StochK}{\mathcal{K}}

\newcommand{\reward}{\mathit{r}}


\newcommand{\sgtrans}{\theta}
\newcommand{\sgnodes}{\mathcal{S}}
\newcommand{\sgnmax}{\sgnodes_\maxplay}
\newcommand{\sgnmin}{\sgnodes_\minplay}

\newcommand{\sgactions}{\mathcal{A}}

\newcommand{\enabled}{\sgactions}

\newcommand{\StochG}{\mathcal{G}}
\newcommand{\StochH}{\mathcal{H}}

\newcommand{\GamePaths}{\mathsf{Paths}}
\newcommand{\FinGamePaths}{\mathsf{FPaths}}


\newcommand{\textrew}{\mathsf{rew}}

\newcommand{\Rewards}[1]{\textrew_{#1}}
\newcommand{\TRewards}{\Rewards{\mathsf{t}}}
\newcommand{\ARewards}{\Rewards{\mathsf{a}}}
\newcommand{\DRewards}[1]{\Rewards{#1}}
\newcommand{\GRewards}{\Rewards{\fgen}}
\newcommand{\discfactor}{\gamma}
\newcommand{\fgen}{\mathsf{f}}

\newcommand{\FRewards}[1]{\widehat{\textrew}_{#1}}
\newcommand{\FTRewards}{\FRewards{\mathsf{t}}}
\newcommand{\FARewards}{\FRewards{\mathsf{a}}}
\newcommand{\FDRewards}[1]{\FRewards{#1}}
\newcommand{\FGRewards}{\FRewards{\fgen}}

\newcommand{\goal}{\mathit{G}}


\newcommand{\StochGK}{{\StochG_{\StochK}}}
\newcommand{\StochGKp}{{\StochG_{\StochK'}}}
\newcommand{\StochHK}{{\StochH_{\StochK}}}


\newcommand{\strat}{\pi}
\newcommand{\starredstrat}{\strat^*}
\newcommand{\supv}[1]{{#1}^{\mathsf{v}}}
\newcommand{\supx}[1]{{#1}^{\mathsf{x}}}
\newcommand{\stratv}{\supv{\strat}}
\newcommand{\stratx}{\supx{\strat}}
\newcommand{\memorylessidx}{\textsl{M}}
\newcommand{\deterministicidx}{\textsl{D}}
\newcommand{\semimarkovidx}{\textsl{S}}
\newcommand{\extremeidx}{\textsl{X}}

\newcommand{\Strategies}[1]{\Pi_{#1}}
\newcommand{\MemorylessStrats}[1]{\Strategies{#1}^{\memorylessidx}}
\newcommand{\SemiMarkovStrats}[1]{\Strategies{#1}^{\semimarkovidx}}

\newcommand{\XSemiMarkovStrats}[1]{\Strategies{#1}^{\extremeidx\semimarkovidx}}
\newcommand{\DetMemorylessStrats}[1]{\Strategies{#1}^{\memorylessidx\deterministicidx}}
\newcommand{\XDetMemorylessStrats}[1]{\Strategies{#1}^{\extremeidx\memorylessidx\deterministicidx}}

 


















\def\val{\mathop{\textup{val}}}





 

\def\StochH{\mathcal{H}}



\def\val#1#2{\mathit{Val}_{#1}(#2)}



\mathchardef\mhyphen="2D


\def\PRISM{\textsf{PRISM}\xspace}
\def\PRISMGAMES{\textsf{PRISM-Games}\xspace}

\def\LTL{\textsf{LTL}}


\spnewtheorem*{proofofclaim}{Proof of claim}{\itshape}{\rmfamily}
\def\qedclaim{\hfill\emph{(End of claim)}\qed}

\usepackage{color}

\definecolor{lightblue}{RGB}{220,220,255}
\definecolor{lightred}{RGB}{255,224,224}
\definecolor{lightgreen}{RGB}{224,255,224}
\definecolor{lightyellow}{RGB}{255,255,224}
\definecolor{lightpurple}{RGB}{255,224,255}
\definecolor{darkerred}{RGB}{64,0,0}
\definecolor{darkred}{RGB}{128,0,0}
\definecolor{darkblue}{RGB}{0,0,128}
\definecolor{darkgreen}{RGB}{0,128,0}
\definecolor{darkpurple}{RGB}{128,0,128}
\definecolor{black}{RGB}{0,0,0}

\makeatletter
\def\THICKhrulefill{\leavevmode \leaders \hrule height 5pt\hfill \kern \z@}
\makeatother

\definecolor{codegreen}{rgb}{0,0.6,0}
\definecolor{codegray}{rgb}{0.5,0.5,0.5}
\definecolor{codepurple}{rgb}{0.58,0,0.82}
\definecolor{backcolour}{rgb}{0.95,0.95,0.92}

\lstdefinestyle{polyprism}{
    commentstyle=\color{codegreen},
    keywordstyle=\color{magenta},
    numberstyle=\tiny\color{codegray},
    stringstyle=\color{codepurple},
    basicstyle=\ttfamily\scriptsize,
    breakatwhitespace=false,         
    breaklines=true,                 
    captionpos=b,                    
    keepspaces=true,                 
    numbers=none,                    
    numbersep=5pt,                  
    showspaces=false,                
    showstringspaces=false,
    showtabs=false, 
    columns=flexible,
    morecomment = [l]{//},                 
    tabsize=2,
    escapechar=$
}

\urldef{\mailsa}\path|pedro.dargenio@unc.edu.ar|
\urldef{\mailsb}\path|castro@dc.exa.unrc.edu.ar|

\graphicspath{{./Figs/}}

\title{Polytopal Stochastic Games
  \thanks{This work was supported by
    Agencia {I+D+i} PICT 2019-03134,
    SeCyT-UNC 33620230100384CB (MECANO), and
    EU Grant agreement ID: 101008233 (MISSION).}}

\titlerunning{\vspace{-3cm}Polytopal Stochastic Games}

\author{
Pablo F. Castro \inst{1,3} \orcidID{0000-0002-5835-4333}\and
Pedro R. D'Argenio \inst{2,3} \orcidID{0000-0002-8528-9215}
}
\authorrunning{P.F. Castro et al.}
\institute{Universidad Nacional de 
  R\'{\i}o Cuarto, FCEFQyN, Departamento de Computaci\'on,
  R\'{\i}o Cuarto, C\'ordoba,
  Argentina,
  \mailsb
  \and 
  Universidad Nacional de C\'ordoba, FAMAF,
  C\'ordoba,
  Argentina,
  \mailsa
  \and
  Consejo Nacional de Investigaciones Cient\'ificas y T\'ecnicas (CONICET), Argentina}


\begin{document}
\maketitle

\begin{abstract}
  In this paper we introduce \emph{polytopal stochastic games}, an
  extension of two-player, zero-sum, turn-based stochastic games, in
  which we may have uncertainty over the transition probabilities.  In
  these games the uncertainty over the probability distributions is
  captured via linear (in)equalities whose space of solutions forms a
  polytope.  We give a formal definition of these games and prove
  their basic properties: determinacy and existence of optimal
  memoryless and deterministic strategies.  We do this for
  reachability and different types of reward objectives and show that
  the solution exists in a finite representation of the game.  We also
  state that the corresponding decision problems are in
  $\NP\cap\coNP$.  We motivate the use of polytopal stochastic games
  via a simple example.
\end{abstract}

\section{Introduction}

    In the last decades,  stochastic systems have become ubiquitous in computer science: communication and security protocols, fault analysis in critical systems,  autonomous devices,  to name a few examples, typically use techniques coming from probability theory.  Furthermore,  well-known techniques in artificial intelligence, such as reinforcement learning \cite{ReinfLearning}, are based on stochastic models.
    In view of this, the verification and formal analysis of stochastic systems is one of the most active areas of research in software verification.
    Christel Baier and Joost-Pieter Katoen's book \cite{BaierK08} is considered a standard reference in the area,  it introduces common concepts and techniques for model checking probabilistic systems, this includes algorithms for verifying temporal assertions over Markov chains (MCs) and Markov Decision processes (MDPs). The latter can be considered as one player stochastic games, in which the system has to select strategies to solve non-determinism in stochastic settings.  
In general,  game theory offers a powerful mathematical framework for specifying and verifying computing systems. The idea is appealing,  a computing system can be thought of as a player playing against an environment, or another system, while trying to achieve certain goals.
For instance, a security system can be seen as a player that selects different countermeasures to possibly different types of maneuvers executed by an attacker (a second player) each of which may succeed with certain probabilities.  The objective of the defense system is to minimize the probability that the attack succeeds while the attacker wants to maximize it.
%
This scenario can be modeled as a stochastic game, and then analysed using techniques coming from game theory.   Examples of applications of game theory  to the analysis of systems can be found almost everywhere in the last years: self-driving cars~\cite{WangWTGS21},  robotics~\cite{DBLP:conf/qest/Junges0KTZH18},  UAVs~\cite{DBLP:journals/tase/FengWHT16},  security~\cite{DBLP:conf/csfw/AslanyanNP16},  etc.  Furthermore,  in recent years,  some model checkers have been extended  to provide support for stochastic games,  e.g., this is the case of \PRISMGAMES~\cite{DBLP:conf/tacas/ChenFKPS13}, which offers support for several versions of stochastic games. 

In this paper we focus on two-player, zero-sum,  turn-based perfect-information stochastic games.  Intuitively,  they are non-deterministic probabilistic transition systems in which the vertices are partitioned into two sets:  vertices belonging to player $\maxplay$ and vertices belonging to player $\minplay$.  When the current state belongs to a given player, say $\maxplay$, she performs an action by selecting one of the non-deterministic outgoing transitions which would lead to different states with some given probabilities.
Typically,  the players want to fulfill or maximize/minimize some objectives. Standard quantitative objectives are discounted sum (the players collect an amount of rewards during the play which are multiplied by a discount factor in each step), total sum (the players want to maximize/minimize the cumulative sum of the rewards collected during a play), mean-payoff (the objective is to maximize or minimize the long-run average reward), or simply a reachability objective, that is, they aim to  maximize/minimize the probability of reaching certain subset of states. These kinds of objectives can be used and combined to model different kinds of systems,  e.g., the case of a self-driving car intending to maximize the probability of reaching some zone in a city can be seen as a multiobjective game \cite{DBLP:conf/qest/ChenKSW13}.   
     
Most of the time, when modeling stochastic systems, one assumes that the probability distributions are exactly  known,  which may not always be the case due to measurement inaccuracies, lack of data, or other issues.  In this paper we propose an extension of stochastic games that adds the possibility of having uncertainty over the probabilities.  Games with some kinds of uncertainty have been considered for $1\frac{1}{2}$-player games, i.e.,  Markov Decision Processes (MDPs). For instance, Interval-valued Discrete-Time Markov Chains (IDTMCs)~\cite{JonssonL91,KozineU02,SenVA06}, Interval Markov Decision Process (IMDP)~\cite{SenVA06}, and Convex MDPs~\cite{DBLP:conf/cav/PuggelliLSS13}.  To the best of our knowledge,  these approaches  have not been extended to stochastic games (i.e., $2\frac{1}{2}$-player games).  A key challenge for doing so is that in multiplayer games one needs to prove determinacy results,  this ensures that the games possess  a well-defined value,  which does not depend on the players' knowledge.
In the aforementioned approaches the notion of uncertainty is usually adversarially resolved,  that is,  each time a state is visited,  the adversary 
picks a transition distribution that respects the constraints, and takes a probabilistic step according to the chosen distribution. However, it is interesting to note that, in two-player games we may adopt two ways of resolving uncertainty:   a controllable one, in which the actual player resolves the uncertainty following her goals; and an adversarial one in which the adversary resolves the uncertainty in her favor. The former approach  is useful in those scenarios in which the uncertainty affects the adversary as she does not precisely know the possible movements of  our player; while the latter is helpful to reason in worst-case scenarios.

We therefore introduce \emph{polytopal stochastic games (PSG)}. PSGs, as defined in Section \ref{sec:polytopal-games}, allow one to model uncertainties over probability distributions using linear (closed) inequalities.  Geometrically, these linear inequalities correspond to polytopes,  i.e., bounded polyhedra.  As PSGs are two-player games,  both ways of resolving uncertainty are possible: the adversarial approach and the controllable one.  Furthermore, we show that in all the cases  these kinds of games preserve some good properties of standard stochastic games for several objectives: reachability, total rewards, average sum,  and mean payoff.  In particular we show that these games are determined and admit optimal memoryless and deterministic strategies.
We also show that these inherently infinite games can be reduced to equivalent finite stochastic games that traverse exclusively through the vertices of the original polytopes.  As such, they are amenable to standard algorithmic solutions.
Finally, we prove that the complexity of these games for the aforementioned objectives remain  in $\NP \cap \coNP$, that is,  they stay in the standard complexity class of simple stochastic games, even when polytopal games support for  an uncountable number of actions for the players and the discretization may grow exponentially.

\paragraph{Related work.}
Definitions of infinite stochastic games do exist (see, for instance, \cite{Kucera11}) though they are of discrete nature, contrary to the type of games presented here.  In fact,  PSGs are related to IDTMCs~\cite{JonssonL91,KozineU02,SenVA06}, IMDPs~\cite{SenVA06}, and Convex MDPs~\cite{DBLP:conf/cav/PuggelliLSS13},  but they are variants of MDPs and hence they are $1\frac{1}{2}$-games.  In particular, PSGs adopt a semantics similar to IMDPs and Convex MDPs~\cite{DBLP:conf/cav/PuggelliLSS13} in which the uncertainty introduced by the polytope is interpreted as an uncountable non-deterministic branching.
While  in \cite{DBLP:conf/cav/PuggelliLSS13}  interior-point algorithms are used to solve Convex MDPs, we use a discretization through the vertices of polytopes to solve PSGs.  Though this has an exponential impact, this is very mild in practice as we will show later.
A much simpler variant of PSG was used in~\cite{CastroDPD23} to provide an algorithmic solution for a fault tolerant measure.  This incipient idea served as the starting point for the generalization presented here.

Somewhat related are the stochastic timed games (STGs)~\cite{BouyerF09,AkshayBKMT16}.  However, the continuous non-determinism introduced by the time in STGs is resolved by uniform and exponential distributions and the remnant non-determinism (resolved by the strategies) is still discrete.  This does not make these models simpler since undecidability has been shown for games with at least 3 clocks~\cite{BouyerF09}.

\paragraph{Outline of the paper.}
Section~\ref{sec:roborta} presents a motivating example.  Section~\ref{sec:preliminaries} introduces the background needed for tackling the rest of the paper.  The definition of PSGs, their semantics and basic properties are given in Section~\ref{sec:polytopal-games}.  The main results are presented in Section~\ref{sec:discretazition}.  
Full proofs are gathered in the Appendix.

\section{Roborta vs. Rigoborto in the land of uncertainties} \label{sec:roborta}
%
\begin{filecontents}[overwrite]{rob-vs-rig.tikz}
\begin{tikzpicture}
\matrix [nodes=draw, column sep=-0mm]
{
                
    \node (a00) [fill=red! 42.2210925762524 ,minimum size=2cm] { \begin{tabular}{r}    \  \; \ \end{tabular}}; 
     & 
                            
    \node (a01) [fill=red! 37.89772014701512 ,minimum size=2cm]  {\begin{tabular}{c}   \   \;  \  \end{tabular}}; 
     & 
                            
    \node (a02) [fill=red! 21.02857904154225 ,minimum size=2cm]  {\begin{tabular}{c}   \   \;  \  \end{tabular}}; 
     & 
                            
    \node (a03) [fill=red! 12.945837514648167 ,minimum size=2cm]  {\begin{tabular}{c}   \   \;  \  \end{tabular}}; 
     \\ 
    \node (a10) [fill=red! 25.563736068430426 ,minimum size=2cm]  {\begin{tabular}{c}   \   \;  \  \end{tabular}}; 
     & 
                            
    \node (a11) [fill=red! 20.246706872520715 ,minimum size=2cm]  {\begin{tabular}{c}   \   \;  \  \end{tabular}}; 
     & 
                            
    \node (a12) [fill=red! 39.18992945173863 ,minimum size=2cm]  {\begin{tabular}{c}   \   \;  \  \end{tabular}}; 
     & 
                            
    \node (a13) [fill=red! 15.165636303946373 ,minimum size=2cm]  {\begin{tabular}{c}   \   \;  \  \end{tabular}}; 
     \\ 
    \node (a20) [fill=red! 23.82984770761779 ,minimum size=2cm]  {\begin{tabular}{c}   \   \;  \  \end{tabular}}; 
     & 
                            
    \node (a21) [fill=red! 29.16910197275156 ,minimum size=2cm]  {\begin{tabular}{c}   \   \;  \  \end{tabular}}; 
     & 
                            
    \node (a22) [fill=red! 45.40564425976676 ,minimum size=2cm]  {\begin{tabular}{c}   \   \;  \  \end{tabular}}; 
     & 
                            
    \node (a23) [fill=red! 25.234342790869512 ,minimum size=2cm]  {\begin{tabular}{c}   \   \;  \  \end{tabular}}; 
     \\ 
    \node (a30) [fill=red! 14.09189221998519 ,minimum size=2cm]  {\begin{tabular}{c}   \   \;  \  \end{tabular}}; 
     & 
                            
    \node (a31) [fill=red! 37.79021020786119 ,minimum size=2cm]  {\begin{tabular}{c}   \   \;  \  \end{tabular}}; 
     & 
                            
    \node (a32) [fill=red! 30.91844983376658 ,minimum size=2cm]  {\begin{tabular}{c}   \   \;  \  \end{tabular}}; 
     & 
                            
    \node (a33) [fill=red! 12.525317068122026 ,minimum size=2cm]  {\begin{tabular}{c}   \   \  \end{tabular}}; 
     \\ 
    };
    \draw[line width=1ex,black!47.90153792160812,->] ([xshift=-5pt]a00.center) -- (a00.west); 
    \draw[line width=1ex,black!61.00556540260447,->] ([xshift=5pt]a01.center) -- (a01.east); 
    \draw[line width=1ex,black!9.739860767117857,->] ([xshift=5pt]a02.center) -- (a02.east); 
    \draw[line width=1ex,black!97.19165996719622,->] ([xshift=-5pt]a03.center) -- (a03.west); 
    \draw[line width=1ex,black!43.94093728079083,->] ([xshift=5pt]a10.center) -- (a10.east); 
    \draw[line width=1ex,black!20.23529155514625,->] ([xshift=-5pt]a11.center) -- (a11.west); 
    \draw[line width=1ex,black!64.9689954296466,->] ([xshift=5pt]a12.center) -- (a12.east); 
    \draw[line width=1ex,black!33.63064024637017,->] ([xshift=5pt]a13.center) -- (a13.east); 
    \draw[line width=1ex,black!99.77143613711435,->] ([xshift=-5pt]a20.center) -- (a20.west); 
    \draw[line width=1ex,black!1.2844267069350712,->] ([xshift=-5pt]a21.center) -- (a21.west); 
    \draw[line width=1ex,black!73.52055509855617,->] ([xshift=5pt]a22.center) -- (a22.east); 
    \draw[line width=1ex,black!51.2178246225736,->] ([xshift=-5pt]a23.center) -- (a23.west); 
    \draw[line width=1ex,black!34.95912745052199,->] ([xshift=-5pt]a30.center) -- (a30.west); 
    \draw[line width=1ex,black!74.09424642173093,->] ([xshift=5pt]a31.center) -- (a31.east); 
    \draw[line width=1ex,black!61.78658169952189,->] ([xshift=-5pt]a32.center) -- (a32.west); 
    \draw[line width=1ex,black!13.502148124134372,->] ([xshift=5pt]a33.center) -- (a33.east); 
    \draw[line width=1ex,black!81.94925119364802,->] ([yshift=5pt]a00.center) -- (a00.north);  
    
    \draw[line width=1ex,black!96.55709520753062,->] ([yshift=5pt]a01.center) -- (a01.north);  
    
    \draw[line width=1ex,black!62.04344719931791,->] ([yshift=5pt]a02.center) -- (a02.north);  
    
    \draw[line width=1ex,black!80.43319008791654,->] ([yshift=5pt]a03.center) -- (a03.north);  
    
    \draw[line width=1ex,black!37.970486136133474,->] ([yshift=-5pt]a10.center) -- (a10.south); 
                    
    \draw[line width=1ex,black!45.96634965202573,->] ([yshift=5pt]a11.center) -- (a11.north);  
    
    \draw[line width=1ex,black!79.7676575935987,->] ([yshift=5pt]a12.center) -- (a12.north);  
    
    \draw[line width=1ex,black!36.796786383088254,->] ([yshift=5pt]a13.center) -- (a13.north);  
    
    \draw[line width=1ex,black!5.5714569094573285,->] ([yshift=-5pt]a20.center) -- (a20.south); 
                    
    \draw[line width=1ex,black!79.85975838632685,->] ([yshift=-5pt]a21.center) -- (a21.south); 
                    
    \draw[line width=1ex,black!13.165632909243264,->] ([yshift=-5pt]a22.center) -- (a22.south); 
                    
    \draw[line width=1ex,black!22.177394688760323,->] ([yshift=5pt]a23.center) -- (a23.north);  
    
    \draw[line width=1ex,black!82.60221064757964,->] ([yshift=5pt]a30.center) -- (a30.north);  
    
    \draw[line width=1ex,black!93.32127355415176,->] ([yshift=5pt]a31.center) -- (a31.north);  
    
    \draw[line width=1ex,black!4.598044689456593,->] ([yshift=-5pt]a32.center) -- (a32.south); 
                    
    \draw[line width=1ex,black!73.06198555432802,->] ([yshift=5pt]a33.center) -- (a33.north);  
    
    \node (rigoborto) [below of =  a23, node distance=.88in, xshift=-.18in] {\begin{tabular}{c}   \   \includegraphics[scale=0.3]{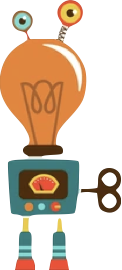}  \  \end{tabular}};
    \node (roborta) [above of =  a10, node distance=0.7in, xshift=.18in] {\begin{tabular}{c}   \   \includegraphics[scale=0.3]{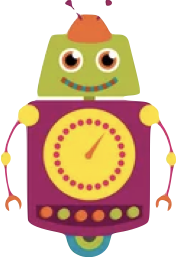}  \  \end{tabular}};

\end{tikzpicture}
\end{filecontents}
\begin{wrapfigure}[15]{r}{.40\textwidth}
\vspace{-.8cm}
\resizebox{.4\textwidth}{!}{\input{rob-vs-rig.tikz}\unskip}
\vspace*{-.5cm}
\caption{An example of a grid for the Roborta vs Rigoborto game.}\label{fig:rob-vs-rig}
\end{wrapfigure}     
We illustrate our approach by means of a simple example.  Consider a field represented as a bidimensional grid and two robots --which we call Roborta and Rigoborto-- that navigate it.   Roborta can move sideways and forward,  Rigoborto can move sideways and backward. The robots start at a certain initial  position.  Roborta intends to reach the end of the grid, i.e., she wants to reach position $(i,n+1)$ for any $i$, whereas Rigoborto wants to stop Roborta.  He can achieve this by reaching Roborta's location.
The robots play in turns.
%
The objective of Roborta is to maximize the probability of reaching the exit, while the objective of Rigoborto is to minimize this value.
We spice up this example by considering the terrain quality (which depends on factors like,  e.g., stones, mud or grass) and slope,  which may cause imprecisions and uncertainties in the robots mobility, probably making them slide towards some undesired direction.  The terrain quality and slope may vary in each grid position.  In Fig.~\ref{fig:rob-vs-rig}, we show an example of such a scenario.  Therein, the robots start at the corners,  the arrows indicate the slopes in the terrain, and the colors in the cells indicate the terrain quality.  Darker arrows correspond to sharper slopes.  Similarly, cells with lower quality are colored with stronger red colors.

More precisely, for each $(x,y)$-cell, the terrain quality $q_{xy}\in[0,0.5]$ gives the uncertainty factor, where $q_{xy}=0$ means that probabilities are completely determined, and, as $q_{xy}$ grows, the probability values become increasingly fuzzier.   In addition, we consider two factors associated with the terrain slopes: $l_{xy},f_{xy}\in[-1,1]$, representing the inclination of the lateral and frontal slopes respectively.  Thus, as $l_{xy}$ get closer to $1$ ($-1$), the likelihood of shifting to the right (left) increases, with $l_{xy}=0$ not favouring any particular side.  Similarly $f_{xy}>1$ ($f_{xy}<-1$) biases the robot towards the front (back).
Let $p_c$ be the probability that the robot command is successful (that is, that it moves in the intended direction), and let $p_l$, $p_r$, $p_f$, and $p_b$ be the probabilities that the command is unsuccessful and the robot uncontrollably slides respectively to the left, right, front, and back.  Then, the space of all probability values can be defined by the following set of inequalities:
\begin{align*}
  1   & \ = \    p_c + p_l + p_r + p_f + p_b \\
  p_c & \ \geq \ 0,\ \ \ p_l \ \geq \ 0,\ \ \ p_r \ \geq \ 0,\ \ \ p_f \ \geq \ 0,\ \ \ p_b \ \geq \ 0 \\
  p_c & \ \leq \ 1-(q_{xy}+{\textstyle\frac{1}{2}}\cdot(1-(1-|l_{xy}|) \cdot (1-|f_{xy}|))) \\
  0   & \ \leq \ (1-\max(0,-l_{xy})) \cdot p_l-(1-q_{xy}) \cdot (1-\max(0,l_{xy})) \cdot p_r \\
  0   & \ \leq \ (1-\max(0,l_{xy})) \cdot p_r-(1-q_{xy}) \cdot (1-\max(0,-l_{xy})) \cdot p_l \\
  0   & \ \leq \ (1-\max(0,f_{xy})) \cdot p_f-(1-q_{xy}) \cdot (1-\max(0,-f_{xy})) \cdot p_b \\
  0   & \ \leq \ (1-\max(0,-f_{xy})) \cdot p_b-(1-q_{xy}) \cdot (1-\max(0,f_{xy})) \cdot p_f
\end{align*}
Note that if $q_{xy}=0$, the system has a unique solution. If, in addition, $l_{xy}>0$, $1/(1-l_{xy})=p_r/p_l$ giving the likelihood ratio of sliding towards the right.

\begin{figure}[t]
\begin{subfigure}{\textwidth}
\lstset{style=polyprism}
\begin{lstlisting}{language=Java}
// action specification for Roborta moving to the left
[robl] (turn = 0) & (roby<L) & !Collision -> (rob_mov'=1) & (turn'=1)$\medskip$
[robl-cont] (turn = 1) & (rob_mov = 1) ->
  //The first four probabilistic options  correspond to environments setbacks
  pl : (robx'=max(0,robx-1)) & (rob_mov'=0) + pr : (robx'=min(W-1,robx+1)) & (rob_mov'=0) 
  + pf : (roby'=roby+1) & (rob_mov'=0) + pb : (roby'=max(0,roby+1)) & (rob_mov'=0)
  + pc : (robx'=max(0,robx-1)) & (rob_mov'=0) 
  {// inequations for uncertainty
    1-(Q[robx,roby]+(1-(1-abs(L[robx,roby]))*(1-abs(F[robx,roby])))/2) >= pc,
    (1-max(0,-L[robx,roby]))*pl - (1-Q[robx,roby])*(1-max(0,L[robx,roby]))*pr >= 0,
    (1-max(0,L[robx,roby]))*pr - (1-Q[robx,roby])*(1-max(0,-L[robx,roby]))*pl >= 0,
    (1-max(0,F[robx,roby]))*pf - (1-Q[robx,roby])*(1-max(0,-F[robx,roby]))*pb >= 0,
    (1-max(0,-F[robx,roby]))*pb - (1-Q[robx,roby])*(1-max(0,F[robx,roby]))*pf >= 0  
  };
\end{lstlisting}
\vspace*{-0.3cm}
\caption{Roborta moves left}\label{fig:roborta-code}
\end{subfigure}

\smallskip

\begin{subfigure}{\textwidth}
\lstset{style=polyprism}
\begin{lstlisting}{language=Java}
// action specification for Rigoborto moving to the left
[rigl] (turn = 1) & (rob_mov = 0) & (rigy<L) & !Collision ->
  //The first four probabilistic options  correspond to environments setbacks
  pl : (rigx'=max(0,rigx-1)) & (turn'=0) & (Collision'=(robx=rigx && roby=rigy))
  + pr : (rigx'=min(W-1,rigx+1)) & (turn'=0) & (Collision'=(robx=rigx && roby=rigy))
  + pf : (rigy'=rigy+1) & (turn'=0) & (Collision'=(robx=rigx && roby=rigy))
  + pb : (rigy'=max(0,rigy+1)) & (turn'=0) & (Collision'=(robx=rigx && roby=rigy))
  + pc : (rigx'=max(0,rigx-1)) & (turn'=0) & (Collision'=(robx=rigx && roby=rigy)) 
  {// inequations for uncertainty
    1-(Q[rigx,rigy]+(1-(1-abs(L[rigx,rigy]))*(1-abs(F[rigx,rigy])))/2) >= pc,
    (1-max(0,-L[rigx,rigy]))*pl - (1-Q[rigx,rigy])*(1-max(0,L[rigx,rigy]))*pr >= 0,
    (1-max(0,L[rigx,rigy]))*pr - (1-Q[rigx,rigy])*(1-max(0,-L[rigx,rigy]))*pl >= 0,
    (1-max(0,F[rigx,rigy]))*pf - (1-Q[rigx,rigy])*(1-max(0,-F[rigx,rigy]))*pb >= 0,
    (1-max(0,-F[rigx,rigy]))*pb - (1-Q[rigx,rigy])*(1-max(0,F[rigx,rigy]))*pf >= 0  
  };
\end{lstlisting}
\vspace*{-0.3cm}
\caption{Rigoborto moves left}\label{fig:rigoborto-code}
\end{subfigure}
\caption{Fragment of code for Roborta vs Rigoborto}\label{fig:robvsrig-code}
\end{figure}

Our aim is to find the best strategy for Roborta to win against all
odds.  This implies that the terrain uncertainty behaves adversarially
to Roborta but favourably to Rigoborto.  Thus,  in our model,  
Rigoborto controls the non-determinism introduced by the terrain
uncertainty.
Assuming an extension of the \PRISMGAMES language, the code could
look like in Fig.~\ref{fig:robvsrig-code}, where
subfigures~\ref{fig:roborta-code} and~\ref{fig:rigoborto-code} show
the decisions to move left by Roborta and Rigoborto respectively.

Variable \verb"turn" indicates who is the next player to move (with
\verb"0" for Roborta and \verb"1" for Rigoborto).  If it is Roborta's
turn (see first line in Fig.~\ref{fig:roborta-code}) and she decides
to move left, she indicates it by setting \verb"rob_mov'=1" (\verb"1"
indicates a left move while \verb"2", \verb"3", and \verb"4" are used
for the other directions, and \verb"0" to indicate that Roborta is not
moving).  At the same time, she yields her turn by setting
\verb"turn'=1".
Notice that the action is not yet complete: the reaction of the
terrain to the move is encoded in the next line (action
\verb"robl-cont" in Fig.~\ref{fig:roborta-code}).  Notice that this
action happens in a state in which \verb"turn=1", making the terrain
uncertainty --defined by the polytope-- adversarial to Roborta.
Here, variables \verb"robx" and \verb"roby" correspond to Roborta's
coordinates $x$ and $y$ and constant matrices \verb"Q", \verb"L", and
\verb"F" contain the respective values for $q_{xy}$, $l_{xy}$, and
$f_{xy}$. The rest of the variables are as expected.  Once this step is
taken, variable \verb"rob_mov" is set to \verb"0", thus enabling
Rigoborto's move.
Rigoborto's decision to move left is given in
Fig~\ref{fig:rigoborto-code}.  Notice that this is performed in a
single action since we assume that the terrain uncertainty plays in
favour of him.  Something particular to this transition is the setting
of variable \verb"Collision" to indicate whether Rigoborto has
caught Roborta.

\section{Preliminaries}\label{sec:preliminaries}
In this section we introduce notation and basic concepts of polytopes and games. Interested readers are referred to \cite{Ziegler95,Kucera11}.

In the following $\powerset(S)$ denotes the powerset of set $S$, and $\powerset_f(S)$ denotes the set of finite subsets of set $S$.
A \emph{convex polytope} in $\Reals^n$ is a bounded set
$K=\{\bvec{x}\in\Reals^n \mid A\bvec{x}\leq b\}$, with
$A\in\Reals^{m\times n}$ and $b\in\Reals^m$, for some $m\in\Nat$.  By
\emph{bounded} we mean (in our case) that there exists $M\in\Realsnn$ such that
$\sum_{i=1}^n\abs{x_i}\leq M$ for all $\bvec{x}\in K$ ($x_i$ denotes
the $i$th element of $\bvec{x}$).
%
Let $S$ be a finite set. As functions in $\Reals^S$ can be
equivalently seen as vectors in $\Reals^{|S|}$, we will in general
refer to polytopes in $\Reals^S$.  Let $\Poly(S)$ be the set of all
convex polytopes in $\Reals^S$.
Notice that the set of all probability functions on $S$ form the
convex polytope
$\Dist(S)=\{\mu\in\Reals^S\mid \sum_{s\in S}\mu(s)=1 \text{ and } \forall {s\in S} \colon {\mu(s)\geq 0} \}$.
Let $\DPoly(S)=\{K\cap\Dist(S)\mid K\in\Poly(S)\}$.  Thus,
$K\in\DPoly(S)$ is a convex polytope whose elements are also
probability functions on $S$ and therefore its defining set of
inequality $A\bvec{x}\leq b$ already encodes the inequalities
$\sum_{s\in S}x_s=1$ and $x_s\geq 0$ for $s\in S$.

Any convex polytope $K\in\Poly(S)$ can alternatively be characterized
as the convex hull of its finite set of vertices.   Let $\vertices(K)$ denote the set of all
vertices of polytope $K$.
If $\vertices(K)=\{\bvec{v}^1,\ldots,\bvec{v}^k\}$, then every
$\bvec{x}\in K$ is a convex combination of
$\{\bvec{v}^1,\ldots,\bvec{v}^k\}$, that is,
$\bvec{x} = \sum_{i=1}^k\lambda_i\bvec{v}^k$ with $\lambda_i\geq 0$,
for $i\in[1..k]$, and $\sum_{i=1}^k\lambda_i=1$.
A \emph{simplex} is any convex polytope $K\in\Poly(S)$ whose set of
vertices $\vertices(K)$ is affinely independent, that is, for any
family $\{\lambda_{\bvec{v}}\in\Reals\}_{\bvec{v}\in\vertices(K)}$
such that $\sum_{\bvec{v}\in\vertices(K)}\lambda_{\bvec{v}}=0$,
$\sum_{\bvec{v}\in\vertices(K)}\lambda_{\bvec{v}}\bvec{v}=0$ implies
that $\lambda_{\bvec{v}}=0$ for all $\bvec{v}\in\vertices(K)$.
This implies that for every $\bvec{x}\in K$, with $K$ being a simplex,
the convex combination $\bvec{x} = \sum_{i=1}^k\lambda_i\bvec{v}^k$ is
unique.
We also remark that any convex polytope $K$ can be expressed as the
union of a (finite) set of simplices $\{K_i\}_{i\in I}$ so that
$\vertices(K)=\bigcup_{i\in I}\vertices(K_i)$ (this is a consequence
of Charath\'eodory's Theorem~\cite{Ziegler95,McMullen2020}).  We will
call such decomposition a \emph{vertex-preserving triangulation}.
%
Let $\Simp(S)$ denote the set of all simplices in $\Reals^S$ and
$\DSimp(S)=\Simp(S)\cap\DPoly(S)$.

A \emph{stochastic game} \cite{Shapley53,Condon92,FilarV96}
is a tuple
$\StochG = (\sgnodes, (\sgnmax, \sgnmin), \sgactions, \sgtrans)$,
where $\sgnodes$ is a finite set of \emph{states} with
$\sgnmax,\sgnmin\subseteq\sgnodes$ being a partition of $\sgnodes$,
$\sgactions$ is a (finite) set of \emph{actions},
and
$\sgtrans : \sgnodes \times \sgactions \times \sgnodes \rightarrow [0,1]$
is a \emph{probabilistic transition function} such that for every
$s\in\sgnodes$ and $a\in\sgactions$,
$\sgtrans(s,a,\cdot)\in\Dist(\sgnodes)$ or $\sgtrans(s,a,\sgnodes)=0$.
Let
$\enabled(s)=\{{a\in\sgactions}\mid{\sgtrans(s,a,\sgnodes)=1}\}$ be
the set of actions enabled at state $s$.
If $\sgnmax=\emptyset$ or $\sgnmin=\emptyset$,
then $\StochG$ is a \emph{Markov decision process} (or MDP).
If, in addition, $|\enabled(s)|=1$ for all $s\in\sgnodes$,
$\StochG$ is a \emph{Markov chain} (or MC).
A \emph{path} in the game $\StochG$ is an infinite sequence of states
$\rho=s_0 s_1 \dots$ such that,  for every $k\in\Nat$,  there is an $a\in\sgactions$ with $\sgtrans(s_k, a, s_{k+1})>0$.  For $i\geq0$, $\rho_i$
indicates the $i$th state in the path $\rho$ (notice that $\rho_0$ is
the first state in $\rho$). $\GamePaths_{\StochG}$ denotes the set of
all paths, and $\FinGamePaths_{\StochG}$ denotes the set of finite
prefixes of paths.  Similarly, $\GamePaths_{\StochG,s}$ and
$\FinGamePaths_{\StochG,s}$ denote the set of paths and the set of
finite paths starting at state $s$.
A \emph{strategy} for the $i$-player (for $i\in\{\maxplay,\minplay\}$)
in a game $\StochG$ is a function
$\strat_{i}:{\sgnodes^*\sgnodes_i}\to{\Dist(\sgactions)}$ that assigns
a probabilistic distribution to each finite sequence of states such
that $\strat_{i}(\hat{\rho}s)(a) > 0$ only if $a\in \enabled(s)$.  The
set of all strategies for the $i$-player is named
$\Strategies{i}$. Whenever convenient, we indicate that the set of
strategies $\Strategies{i}$ belongs to the game $\StochG$ by writing
by $\Strategies{\StochG,i}$
A strategy $\strat_{i}$ is said to be \emph{pure} or
\emph{deterministic} if, for every
$\hat{\rho}s\in\sgnodes^*\sgnodes_i$, $\strat_{i}(\hat{\rho} s)$ is a
Dirac distribution (that is a distribution $\Dirac_a$ s.t.,
$\Dirac_a(a)=1$ and $\Dirac_a(b)=0$ for all $b\neq a$), and it is
called \emph{memoryless} if $\strat_{i}(\hat{\rho} s) =
\strat_{i}(s)$, for every $\hat{\rho}\in\sgnodes^*$.
Let $\MemorylessStrats{i}$ be the set of all memoryless strategies for
the $i$-player and $\DetMemorylessStrats{i}$ be the set of all its
deterministic and memoryless strategies.  Note that the definition of strategy given above works for  set of actions that are finite,  in Section \ref{sec:polytopal-games} we define strategies for uncountable sets of actions.

Given strategies $\strat_{\maxplay} \in \Strategies{\maxplay}$ and
$\strat_{\minplay} \in \Strategies{\minplay}$, and an initial state
$s$, the \emph{result} of the game is a Markov chain
\cite{ChatterjeeH12}, denoted
$\StochG^{\strat_{\maxplay},\strat_{\minplay}}_s$.
The Markov chain $\StochG^{\strat_{\maxplay},\strat_{\minplay}}_s$
defines a probability measure
$\Prob^{\strat_{\maxplay},\strat_{\minplay}}_{\StochG,s}$ on the Borel
$\sigma$-algebra generated by the cylinders of $\GamePaths_{\StochG,s}$.
If $\xi$ is a measurable set in such a Borel $\sigma$-algebra,
$\Prob^{\strat_{\maxplay},\strat_{\minplay}}_{\StochG,s}(\xi)$ is the
probability that strategies $\strat_{\maxplay}$ and
$\strat_{\minplay}$ follow a path in $\xi$ starting from state $s$.
We use {\LTL} notation to represent specific set of paths, in particular,
$D \Until^n C =
\{{\rho \in \sgnodes^\omega} \mid {{{\rho_n\in C} \wedge {\forall{j<n}\colon{\rho_j\in D}}}}\} =
D^n\times C\times \sgnodes^\omega$
is the set of paths that reach $C\subseteq\sgnodes$ in exactly $n\geq0$ steps
traversing before only states in $D\subseteq\sgnodes$;
$\Finally^n C = \sgnodes\Until^n C$ is the set of all paths reaching
states in $C$ in exactly $n$ steps; and
$\Finally C = \bigcup_{n\geq0}(\sgnodes\setminus C)\Until^nC$
is the set of all paths that reach a state in $C$.

%
%
A stochastic game is said to be \emph{almost surely
stopping}~\cite{Condon92,FilarV96} if for all pair of
strategies
$\strat_{\maxplay}$, $\strat_{\minplay}$ the probability of reaching a
terminal state is~$1$.
A state $s$ is \emph{terminal} if $\sgtrans(s,a,s)=1$, for all
$a\in\enabled(s)$.
In other words, a game is stopping if
$\inf_{\strat_\minplay\in\Strategies{\minplay}}\inf_{\strat_\maxplay\in\Strategies{\maxplay}}\Prob^{\strat_{\maxplay},\strat_{\minplay}}_{s}(\Finally T)=1$,
where $T\subseteq\sgnodes$ is the set of terminal states.
A stochastic game is \emph{irreducible}~\cite{FilarV96} if for all
pair of
strategies,
%
the probability of reaching a state from any other state is positive,
that is, if
$\inf_{\strat_\minplay\in\Strategies{\minplay}}\inf_{\strat_\maxplay\in\Strategies{\maxplay}}\Prob^{\strat_{\maxplay},\strat_{\minplay}}_{s}(\Finally
s')>0$
for all pair of states $s,s'\in\sgnodes$.

A \emph{quantitative objective} or \emph{payoff function} is a
measurable function $f: \sgnodes^{\omega} \to \Reals$.  Let
$\Expect^{\strat_{\maxplay},\strat_{\minplay}}_{\StochG,s}[f]$ be the
expectation of measurable function $f$ under probability
$\Prob^{\strat_{\maxplay},\strat_{\minplay}}_{\StochG,s}$.
The goal of the $\maxplay$-player is to maximize this value whereas the
goal of the $\minplay$-player is to minimize it.  Sometimes quantitative
objective functions can be defined via \emph{rewards}. These are
assigned by a \emph{reward function} $\reward:S \to \Reals^+$.  We
usually consider stochastic games augmented with a reward
function.  Moreover, we assume that for every terminal state $s$,
$\reward(s) = 0$.
The value of the game for the $\maxplay$-player at state $s$ under
strategy $\strat_{\maxplay}$ is defined as the infimum over all the
values resulting from the $\minplay$-player strategies in that state,
i.e.,
$\inf_{\strat_{\minplay} \in \Strategies{\minplay}} \Expect^{\strat_{\maxplay},\strat_{\minplay}}_{\StochG,s}[f]$.
The \emph{value of the game} for the $\maxplay$-player is defined as the
supremum of the values of all the $\maxplay$-player strategies, i.e.,
$\sup_{\strat_{\maxplay} \in \Strategies{\maxplay}} \inf_{\strat_{\minplay} \in \Strategies{\minplay}} \Expect^{\strat_{\maxplay},\strat_{\minplay}}_{\StochG,s}[f]$.
Similarly, the value of the game for the $\minplay$-player under
strategy $\strat_{\minplay}$ and the value of the game for the
$\minplay$-player are defined as
$\sup_{\strat_{\maxplay} \in \Strategies{\maxplay}}  \Expect^{\strat_{\maxplay},\strat_{\minplay}}_{\StochG,s}[f]$
and
$\inf_{\strat_{\minplay} \in \Strategies{\minplay}} \sup_{\strat_{\maxplay} \in \Strategies{\maxplay}}  \Expect^{\strat_{\maxplay},\strat_{\minplay}}_{\StochG,s}[f]$,
respectively.
We say that a game is \emph{determined} if both values are the same,
that is,
$\sup_{\strat_{\maxplay} \in \Strategies{\maxplay}} \inf_{\strat_{\minplay} \in \Strategies{\minplay}} \Expect^{\strat_{\maxplay},\strat_{\minplay}}_{\StochG,s}[f]
=
\inf_{\strat_{\minplay} \in \Strategies{\minplay}} \sup_{\strat_{\maxplay} \in \Strategies{\maxplay}} \Expect^{\strat_{\maxplay},\strat_{\minplay}}_{\StochG,s}[f]$.
%

%
In this paper we focus on \emph{total accumulated reward}, where the
payoff function is defined by
$\TRewards(\rho)=\lim_{n\to\infty}\sum^n_{i=0} \reward(\rho_i)$,
\emph{total discounted reward}, defined by
$\DRewards{\discfactor}(\rho)=\lim_{n\to\infty}\sum^n_{i=0}\discfactor^i\reward(\rho_i)$,
where $\discfactor\in(0,1)$ is the discount factor,
and \emph{average reward}, defined by
$\ARewards(\rho)=\lim_{n\to\infty}\frac{1}{n+1}\sum^n_{i=0}\reward(\rho_i)$.
By taking, respectively, $\fgen(i,n)=1$, $\fgen(i,n)=\discfactor^i$, or
$\fgen(i,n)=\frac{1}{n+1}$, we refer simultaneously to the above
payoff functions with the single function
$\GRewards(\rho)=\lim_{n\to\infty}\sum^n_{i=0}\fgen(i,n)\reward(\rho_i)$.

We also focus on \emph{reachability objective}.  In this case, the
goal of the $\maxplay$-player is to maximize the probability of reaching a
state on a goal set $\goal\subseteq\sgnodes$ whereas the goal of the
$\minplay$-player is to minimize it.  Therefore, similar to quantitative
objectives, the \emph{value of the reachability game for the
$\maxplay$-player} is defined by
$\sup_{\strat_{\maxplay} \in \Strategies{\maxplay}} \inf_{\strat_{\minplay} \in \Strategies{\minplay}} \Prob^{\strat_{\maxplay},\strat_{\minplay}}_{\StochG,s}(\Finally\goal)$
and the \emph{value of the reachability game for the $\minplay$-player}
is defined by
$\inf_{\strat_{\minplay} \in \Strategies{\minplay}} \sup_{\strat_{\maxplay} \in \Strategies{\maxplay}}  \Prob^{\strat_{\maxplay},\strat_{\minplay}}_{\StochG,s}(\Finally\goal)$,
and the game is \emph{determined} if both values are the same.

\section{Polytopal Stochastic Games} \label{sec:polytopal-games}

A polytopal stochastic game is characterized through a structure that
contains a finite set of states divided into two sets, each owned by
a different player.  In addition, each state has assigned a finite
set of convex polytopes of probability distributions over states.
The formal definition is as follows.
\begin{definition}\label{def:psg}
  A \emph{polytopal stochastic game}
  (PSG, for short)
  is a structure
  $\StochK=(\psgnodes,(\psgnmax,\psgnmin),\psgtrans)$ such that
  $\psgnodes$ is a finite set of states partitioned into 
  $\psgnodes=\psgnmax\uplus\psgnmin$ and
  $\psgtrans:\psgnodes\to\powersetf(\DPoly(\psgnodes))$.
  If, in particular,
  $\psgtrans:\psgnodes\to\powersetf(\DSimp(\psgnodes))$, we call
  $\StochK$ a \emph{simplicial stochastic game} (SSG for short).
\end{definition}

The idea of a PSG is as expected: in a state $s\in\psgnodes_i$
($i\in\{\maxplay,\minplay\}$), player $i$ chooses to play a polytope
$K\in\psgtrans(s)$ and a distribution $\mu\in K$.  The next state $s'$
is sampled according to distribution $\mu$ and the game continues from
$s'$ repeating the same process.


As a particular example, one can devise a stochastic game variant of
Interval Markov Decision Processes (IMDPs)~\cite{JonssonL91,KozineU02}.
This type of games can be interpreted as a PSG where every polytope
$K\in\psgtrans(s)$, for all $s\in\psgnodes$, is defined by $\mu\in K$
iff $\sum_{s'\in\psgnodes}\mu(s')=1$ and, for all $s'\in\psgnodes$ and
some fixed $0\leq l_{s'}\leq u_{s'}\leq 1$,
$l_{s'}\leq\mu(s')\leq u_{s'}$ (note that the intervals need to be closed).

The behaviour of a polytopal stochastic game is formally interpreted
in terms of a stochastic game where the number of transitions outgoing
the players' states may be uncountably large.
We choose a controllable view on the uncertainty introduced by the polytope since the adversarial alternative can be encoded as was shown in Sec.~\ref{sec:roborta}.
Formally, the
interpretation of a PSG is as follows.
\begin{definition}\label{def:interpretation}
  The \emph{interpretation of the polytopal stochastic game $\StochK$}
  is defined by the stochastic game
  $\StochGK = (\sgnodes, (\sgnmax,\sgnmin), \sgactions, \sgtrans)$,
  where
  $\sgactions=\left(\bigcup_{s\in\sgnodes}\psgtrans(s)\right)\times\Dist(\sgnodes)$
  and
  \[\sgtrans(s,(K,\mu),s')=
    \begin{cases}
      \mu(s') & \text{if } K\in\psgtrans(s) \text{ and } \mu \in K \\
      0 & \text{otherwise}
    \end{cases}
  \]
\end{definition}

Notice that the set of actions $\sgactions$ can be uncountably large,
as well as each set $\enabled(s)=\bigcup_{K\in\psgtrans(s)}\{K\}{\times}K$.
Therefore we need to extend the strategies to this uncountable domain
which should be properly endowed with a $\sigma$-algebra.
For this we make use of a standard construction to provide a
$\sigma$-algebra to $\Dist(\sgnodes)$~\cite{Giry82}:
$\Salg_{\Dist(\sgnodes)}$ is defined as the smallest $\sigma$-algebra
containing the sets $\{\mu\in\Dist(\sgnodes)\mid\mu(S)\geq p\}$ for
all $S\subseteq\sgnodes$ and $p\in[0,1]$.
Now, we endow $\sgactions$ with the product $\sigma$-algebra
$\Salg_\sgactions=\powerset\left(\bigcup_{s\in\sgnodes}\psgtrans(s)\right)\otimes\Salg_{\Dist(\sgnodes)}$
(i.e., the smallest $\sigma$-algebra containing all rectangles
$\boldsymbol{K}\times \boldsymbol{M}$ with
$\boldsymbol{K}\subseteq\bigcup_{s\in\sgnodes}\psgtrans(s)$ and
$\boldsymbol{M}\in\Salg_{\Dist(\sgnodes)}$) and let $\PMeas(\sgactions)$ be the set
of all probability measures on $\Salg_\sgactions$.  It is not
difficult to check that each set of enabled actions $\enabled(s)$ is
measurable (i.e., $\enabled(s)\in\Salg_\sgactions$) and that function
$\sgtrans(s,\cdot,s')$ is measurable (i.e.,
$\{{a\in\sgactions}\mid{\sgtrans(s,a,s')\leq p}\}\in\Salg_\sgactions$
for all $p\in[0,1]$).

We extend the definition of \emph{strategy} for the $i$-player
($i\in\{\maxplay,\minplay\}$) in $\StochGK$ to be a function
$\strat_{i}:{\sgnodes^*\sgnodes_i}\to{\PMeas(\sgactions)}$ that
assigns a probability measure to each finite sequence of states such
that $\strat_{i}(\hat{\rho}s)(\enabled(s)) = 1$.  All other concepts
on strategies defined in~Sec.~\ref{sec:preliminaries} apply to this
new definition as well.

In the following we present the formal definition of
$\Prob^{\strat_{\maxplay},\strat_{\minplay}}_{\StochGK,s}$.
First, for each $n\geq 0$ and $s\in \sgnodes$, define 
$\Prob^{\strat_{\maxplay},\strat_{\minplay},n}_{\StochGK,s}:\sgnodes^{n+1}\to[0,1]$
for all $s'\in \sgnodes$ and $\hat{\rho}\in \sgnodes^{n+1}$ inductively as follows:
\begin{align*}
  &\Prob^{\strat_{\maxplay},\strat_{\minplay},0}_{\StochGK,s}(s') =  \Dirac_{s}(s') \\
  &\Prob^{\strat_{\maxplay},\strat_{\minplay},n+1}_{\StochGK,s}(\hat{\rho} s') =
  \begin{dcases}
    \Prob^{\strat_{\maxplay},\strat_{\minplay},n}_{\StochGK,s}(\hat{\rho})\int_{\sgactions}\sgtrans(\last(\hat{\rho}),\cdot,s')\ \diff(\strat_{\maxplay}(\hat{\rho})(\cdot)) & \text{if } \last(\hat{\rho})\in \sgnmax \\
    \Prob^{\strat_{\maxplay},\strat_{\minplay},n}_{\StochGK,s}(\hat{\rho})\int_{\sgactions}\sgtrans(\last(\hat{\rho}),\cdot,s')\ \diff(\strat_{\minplay}(\hat{\rho})(\cdot)) & \text{if } \last(\hat{\rho})\in \sgnmin
  \end{dcases}
\end{align*}
%
%
and extend $\Prob^{\strat_{\maxplay},\strat_{\minplay},n}_{\StochGK,s}:\powerset(\sgnodes^{n+1})\to[0,1]$ to sets as the sum of the points.

Let $\Salg_{\sgnodes}$ denote the discrete $\sigma$-algebra on $\sgnodes$
and $\Salg_{\sgnodes^\omega}$ the usual product $\sigma$-algebra on
$\sgnodes^\omega$.
By Carath\'eodory extension theorem~\cite{AshDoleans99},
$\Prob^{\strat_{\maxplay},\strat_{\minplay}}_{\StochGK,s}:\Salg_{\sgnodes^\omega}\to[0,1]$
is defined as the unique probability measure such that for all
$n\geq 0$, and $S_i\in \Salg_{\sgnodes}$, $0\leq i\leq n$,
\[\Prob^{\strat_{\maxplay},\strat_{\minplay}}_{\StochGK,s}(S_0\times\cdots\times S_n\times\sgnodes^\omega) = \Prob^{\strat_{\maxplay},\strat_{\minplay},n}_{\StochGK,s}(S_0\times\cdots\times S_n)\]


The notions of \emph{deterministic} and \emph{memoryless} extends
directly to this type of strategies.
In addition, a strategy $\strat_{i}$, $i\in\{\maxplay,\minplay\}$, is
\emph{semi-Markov} if for every $\hat{\rho},\hat{\rho}'\in\sgnodes^*$
and $s\in \sgnodes_i$, $|\hat{\rho}|=|\hat{\rho}'|$ implies
$\strat_{i}(\hat{\rho}s)=\strat_{i}(\hat{\rho}'s)$, that is, the
decisions of $\strat_{i}$ depend only on the length of the run and its
last state. Thus, we write $\strat_{i}(n,s)$ instead of
$\strat_{i}(\hat{\rho}s)$ whenever $|\hat{\rho}|=n$.  Let
$\SemiMarkovStrats{i}$ denote the set of all semi-Markov strategies
for the $i$-player.
Also, we say that a strategy $\strat_{i}\in\Strategies{i}$ is
\emph{extreme} if for all $\hat{\rho}\in\sgnodes^*$,
$\strat_{i}(\hat{\rho}s)(\{(K,\mu)\in\sgactions(s)\mid\mu\in\vertices(K)\})=1$.
Notice that extreme strategies only selects transitions on vertices
of polytopes.
Let $\XSemiMarkovStrats{i}$ and $\XDetMemorylessStrats{i}$ be,
respectively, the set of all extreme semi-Markov strategies and the
set of all extreme deterministic and memoryless srategies for the
$i$-player.

Polytopal stochastic games can be translated into
simplicial stochastic games preserving all the stochastic behaviour.
More precisely, for every PSG $\StochK$ there is a SSG $\StochK'$ such
that for every pair of strategies for $\StochK$ in a particular class
(i.e., memoryless, semi-Markov, etc.), there is a pair of strategies
for $\StochK'$ in the same class that yields the same probability
measure and vice versa.
Let $\Triang\colon\DPoly\to\powerset(\DSimp)$ be a function that assigns a
vertex-preserving triangulation $\Triang(K)$ to each polytope $K$. 
Then:

\begin{proposition}\label{prop:PSG:SSG}
  Let $\StochK=(\psgnodes,(\psgnmax,\psgnmin),\psgtrans)$ be a PSG and
  define the SSG $\StochK'=(\psgnodes,(\psgnmax,\psgnmin),\psgtrans')$
  such that $\psgtrans'(s) = \bigcup_{K\in\psgtrans(s)}\Triang(K)$.
  Let $\StochGK$ and $\StochGKp$ be their respective interpretations.
  Then,
  \begin{enumerate}
  \item\label{item:PSG:SSG:i}%
    for all pair of strategies $\strat_{\maxplay}$ and
    $\strat_{\minplay}$ for $\StochGK$ there is a pair of strategies
    $\strat'_{\maxplay}$ and $\strat'_{\minplay}$ for $\StochGKp$ such
    that
    \begin{enumerate*}
    \item%
      $\Prob^{\strat_{\maxplay},\strat_{\minplay}}_{\StochGK,s}=\Prob^{\strat'_{\maxplay},\strat'_{\minplay}}_{\StochGKp,s}$
      for all $s\in\sgnodes$, and
    \item%
      if $\strat_{i}$, $i\in\{\maxplay,\minplay\}$, is memoryless (resp.
      deterministic, semi-Markov or extreme) then so is $\strat'_{i}$;
    \end{enumerate*}
    and
  \item\label{item:PSG:SSG:ii}%
    the same holds with the roles of $\StochGK$ and $\StochGKp$
    exchanged.
  \end{enumerate}  
\end{proposition}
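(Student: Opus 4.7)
\noindent\textbf{Proof plan for Proposition~\ref{prop:PSG:SSG}.}

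\medskip

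The plan is to build, in each direction, a measurable translation of actions and then take pushforwards of the strategy measures. For direction~\ref{item:PSG:SSG:ii}, note that an action of $\StochGKp$ at $s$ is a pair $(K',\mu)$ with $K'\in\Triang(K)$ for some $K\in\psgtrans(s)$ and $\mu\in K'\subseteq K$. So I will fix, for each $K'\in\psgtrans'(s)$, one $K\in\psgtrans(s)$ with $K'\in\Triang(K)$, and define $T'_s\colon\sgactions'(s)\to\sgactions(s)$ by $T'_s(K',\mu)=(K,\mu)$. This $T'_s$ is measurable since $\psgtrans'(s)$ is finite and the second coordinate is the identity on $\Dist(\sgnodes)$. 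Given strategies $\strat'_{\maxplay},\strat'_{\minplay}$ for $\StochGKp$, define $\strat_i(\hat\rho s):=(T'_s)_*\strat'_i(\hat\rho s)$. Since $\sgtrans(s,(K,\mu),s')=\mu(s')=\sgtrans'(s,(K',\mu),s')$ whenever $K'\in\Triang(K)$ and $\mu\in K'$, the change-of-variables formula gives for every $s'\in\sgnodes$
\[
\int_{\sgactions}\sgtrans(s,\cdot,s')\,\diff\strat_i(\hat\rho s)
\;=\;\int_{\sgactions'}\sgtrans'(s,\cdot,s')\,\diff\strat'_i(\hat\rho s).
\]
A straightforward induction on $n$ on the one-step kernels defining $\Prob^{\strat_\maxplay,\strat_\minplay,n}_{\StochGK,s}$ then yields equality of the finite-dimensional measures, whence, by Carath\'eodory's extension, $\Prob^{\strat_\maxplay,\strat_\minplay}_{\StochGK,s}=\Prob^{\strat'_\maxplay,\strat'_\minplay}_{\StochGKp,s}$.

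\medskip

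For direction~\ref{item:PSG:SSG:i} the idea is symmetric, but here a single action $(K,\mu)$ of $\StochGK$ can correspond to several simplices of $\Triang(K)$ when $\mu$ lies on the common boundary of two of them. I will resolve this by fixing an arbitrary total order on the finite set $\Triang(K)$ and defining $T_s(K,\mu):=(K',\mu)$ where $K'$ is the least simplex in $\Triang(K)$ containing $\mu$; the preimage sets $\{\mu\in K\mid T_s(K,\mu)=(K',\cdot)\}$ are Borel because each simplex is closed, so $T_s$ is measurable. Setting $\strat'_i(\hat\rho s):=(T_s)_*\strat_i(\hat\rho s)$, the same change-of-variables argument as above, now applied to $T_s$, yields the equality of the induced probability measures.

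\medskip

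It remains to verify preservation of the strategy classes, which falls out naturally from the pushforward construction. Memorylessness and the semi-Markov property are preserved because $T_s$ (resp.\ $T'_s$) only depends on $s$, so $(T_s)_*\strat_i(\hat\rho s)$ depends on $(\hat\rho,s)$ exactly as $\strat_i(\hat\rho s)$ does. Determinism is preserved because the pushforward of a Dirac measure is a Dirac measure. For the extreme property, the key fact is that the triangulation is vertex-preserving, i.e.\ $\vertices(K)=\bigcup_{K'\in\Triang(K)}\vertices(K')$: in direction~\ref{item:PSG:SSG:ii}, if $\mu\in\vertices(K')$ then automatically $\mu\in\vertices(K)$; in direction~\ref{item:PSG:SSG:i}, I will refine the tie-breaking in the definition of $T_s$ so that whenever $\mu\in\vertices(K)$ the chosen $K'$ satisfies $\mu\in\vertices(K')$, which is possible precisely because of the vertex-preservation property.

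\medskip

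I expect the main obstacle to be the bookkeeping around the measurable selection $T_s$ in direction~\ref{item:PSG:SSG:i}, in particular making the tie-breaking rule compatible with the extreme-strategy clause. Once $T_s$ and $T'_s$ are in place, the equality of the induced probability measures is a routine consequence of the change-of-variables formula together with the induction on path length and the uniqueness clause of Carath\'eodory's extension theorem.
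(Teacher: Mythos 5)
Your proposal is correct and follows essentially the same route as the paper's proof: your measurable tie-breaking maps $T_s$ and $T'_s$ are exactly the paper's partition functions $f_K$ and $f_s$ expressed as point maps, and your pushforward/change-of-variables argument matches the paper's explicit double-sum computation followed by induction on path length and Carath\'eodory uniqueness. (One cosmetic remark: no refinement of the tie-breaking is actually needed for the extreme clause, since a vertex of $K$ is an extreme point of $K$ and hence automatically a vertex of any simplex of $\Triang(K)$ that contains it.)
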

\begin{proof}[Sketch]
  Let $\StochGK=(\sgnodes,(\sgnmax,\sgnmin),\sgactions,\sgtrans)$
  and
  $\StochGKp=(\sgnodes,(\sgnmax,\sgnmin),\sgactions',\sgtrans')$.
%
  To prove item~\ref{item:PSG:SSG:i}, the new strategies are defined
  so that they preserve the same measure on the probability part of
  the labels in $\sgactions'$ as the one the old strategies measure on
  the probability part of $\sgactions$ while properly distributing the
  probabilities on the simplices of the triangulation of the original
  polytopes.
  For this, first fix a function $f_K:\Triang(K)\to\powerset(K)$ for
  each polytope $K\in\DPoly(\sgnodes)$ satisfying
  \begin{enumerate*}[(i)]
  \item%
    $\forall {K'\in\Triang(K)}\colon{f_K(K')\subseteq K'}$,
  \item%
    $\bigcup_{K'\in\Triang(K)}f_K(K') = K$, and
  \item%
    $\forall {K'_1,K'_2\in\Triang(K)}\colon {{f_K(K'_1)\cap f_K(K'_2)\neq\emptyset} \limp {K'_1=K'_2}}$.
  \end{enumerate*}
  Thus, $f_K(K')$ is almost the simplex $K'$ but ensuring that
  distributions on the faces of $K'$ are exactly in one of the
  $f_K(K'')$, $K''\in\Triang(K)$.
  
  Given strategies $\strat_{i}$, $i\in\{\maxplay,\minplay\}$,
  for $\StochGK$ define $\strat'_{i}$ for $\StochGKp$, for all
  $\hat{\rho}\in\sgnodes^*$, $s\in\sgnodes_{i}$, and
  $A'\in\Salg_{\sgactions'}$ by
  \begin{equation}\label{eq:def:stratp:ssg:main}\textstyle
    \strat'_{i}(\hat{\rho}s)(A') = \sum_{K\in\psgtrans(s)} \sum_{K'\in\Triang(K)} \strat_{i}(\hat{\rho}s)(\{K\}\times({A'\sect{K'}}\cap f_K(K')))
  \end{equation}
  where ${A'\sect{K'}}=\{\mu\mid (K',\mu)\in A'\}$ is the $K'$ section
  of the measurable set $A'$.
  Notice that $f_K$ ensures that the faces of each $K'\in\Triang(K)$
  are considered in exactly one summand of the inner summation
  of~(\ref{eq:def:stratp:ssg:main}).
  %

  For item~\ref{item:PSG:SSG:ii},
  the new strategies preserve the same measure on the probability part
  of $\sgactions$ as the old strategies while gathering the
  probability of the simplices in the original polytope.
  So, for each state $s\in\sgnodes$, fix
  $f_s:{\psgtrans(s)\to\powerset(\DSimp(\sgnodes)})$
  such that
  \begin{enumerate*}[(i)]
  \item%
    $\forall {K\in\psgtrans(s)}\colon {f_s(K)\subseteq\Triang(K)}$,
  \item%
    $\bigcup_{K\in\psgtrans(s)}f_s(K)=\bigcup_{K\in\psgtrans(s)}\Triang(K)$, and
  \item%
    $\forall {K_1,K_2\in\psgtrans(s)}\colon {{f_s(K_1)\cap f_s(K_2)\neq\emptyset} \limp {K_1=K_2}}$.
  \end{enumerate*}
  Given strategies $\strat'_{i}$, $i\in\{\maxplay,\minplay\}$, for
  $\StochGKp$ define $\strat_{i}$ for $\StochGK$, for all
  $\hat{\rho}\in\sgnodes^*$, $s\in\sgnodes_{i}$, and
  $A\in\Salg_{\sgactions}$ by
  \begin{equation}\label{eq:def:stratp:psg:main}\textstyle
    \strat_{i}(\hat{\rho}s)(A) = \sum_{K\in\psgtrans(s)} \sum_{K'\in f_{s}(K)} \strat'_{i}(\hat{\rho}s)(\{K'\}\times{A\sect{K}})
  \end{equation}
  Notice that, by definition, $K'\in\psgtrans'(s)$.
  Moreover, notice that $f_s$ ensures that a simplex in a
  triangulation of a polytope outgoing $s$ is considered in exactly one
  summand of~(\ref{eq:def:stratp:psg:main}).
  %

  In both cases, it requires some straightforward calculations to
  check that the properties of memoryless, semi-Markov, deterministic,
  and extreme are preserved by the new strategies. Also in both cases,
  to prove that
  $\Prob^{\strat_{\maxplay},\strat_{\minplay}}_{\StochGK,s}=\Prob^{\strat'_{\maxplay},\strat'_{\minplay}}_{\StochGKp,s}$
  it sufficies to state that
  $\Prob^{\strat_{\maxplay},\strat_{\minplay},n}_{\StochGK,s}=\Prob^{\strat'_{\maxplay},\strat'_{\minplay},n}_{\StochGKp,s}$
  for all $n\geq0$ which is done by induction using results from
  measure theory.
  \qed
\end{proof}

\section{Discretizing Polytopal Stochastic Games} \label{sec:discretazition}

In this section we show that a PSG can be solved by translating it
into a finite stochastic game that is just like the original PSG but
it only has the transitions corresponding to the vertices of the
polytopes.  We focus on reachability games, and the reward games
introduced above: total accumulated reward, total discounted reward,
and average reward.

The first lemma we introduce states that the calculation of the
expected values of the different reward games only depend on the
probability of reaching each state and the reward collected in each
state regardless the path that lead to such states.  In particular,
Lemma~\ref{lm:encoding:reach:and:expectation}.\ref{lm:encoding:reach:and:expectation:i}
refers to the reward collected in a finite number of steps while
Lemma~\ref{lm:encoding:reach:and:expectation}.\ref{lm:encoding:reach:and:expectation:ii}
refers to the general case stated before.

For $k\geq 0$ define
$\Finally^{k} s = \sgnodes^k\times\{s\}\times\sgnodes^\omega$
to be the set of all runs in which $s\in\sgnodes$ is reached in
exactly $k$ steps.
Let
$\FGRewards^n(\hat{\rho})=\sum^{n}_{i=0}\fgen(i,n)\reward(\hat{\rho}_i)$
for all $\hat{\rho}\in\sgnodes^{n+1}$.
Then $\GRewards(\rho)=\lim_{n\to\infty}\FGRewards^n(\rho[..n+1])$ where
$\rho[..n+1]$ is the $(n+1)$th prefix of $\rho$, i.e.,
$\rho[..n+1]=\rho_0\rho_1\rho_2...\rho_n$.

\begin{lemma}\label{lm:encoding:reach:and:expectation}
  Let $\StochGK$ be a stochastic game resulting from interpreting a
  PSG $\StochK$.  For all strategies
  $\strat_{\maxplay}\in\Strategies{\maxplay}$ and
  $\strat_{\minplay}\in\Strategies{\minplay}$,
  \begin{enumerate}
  \item\label{lm:encoding:reach:and:expectation:i}%
    $
    \sum_{\hat{\rho} \in \sgnodes^{n+1}} \Prob^{\strat_{\maxplay},\strat_{\minplay},n}_{\StochGK,s}(\hat{\rho})\mult\FGRewards^n(\hat{\rho}) =
    \sum^{n}_{i=0} \sum_{s' \in \sgnodes} \Prob^{\strat_{\maxplay},\strat_{\minplay}}_{\StochGK,s}(\Finally^i s')\mult\fgen(i,n)\mult\reward(s')$, for all $n\geq 0$,
    and\smallskip
  \item\label{lm:encoding:reach:and:expectation:ii}%
    $
    \Expect^{\strat_{\maxplay},\strat_{\minplay}}_{\StochGK,s}[\GRewards] =
    \lim_{n\to\infty} \sum^{n}_{i=0} \sum_{s' \in \sgnodes} \Prob^{\strat_{\maxplay},\strat_{\minplay}}_{\StochGK,s}(\Finally^i s')\mult\fgen(i,n)\mult\reward(s')$.    
  \end{enumerate}
\end{lemma}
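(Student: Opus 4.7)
The plan is to handle part (i) by direct algebraic manipulation of finite sums and then deduce part (ii) via a limit--integral exchange.

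For part (i), I would substitute $\FGRewards^n(\hat{\rho}) = \sum_{i=0}^n \fgen(i,n)\,\reward(\hat{\rho}_i)$ into the left-hand side and interchange the two finite sums to obtain
$$\sum_{i=0}^n \fgen(i,n) \sum_{s'\in\sgnodes} \reward(s') \sum_{\hat{\rho}\in\sgnodes^{n+1} : \hat{\rho}_i = s'} \Prob^{\strat_{\maxplay},\strat_{\minplay},n}_{\StochGK,s}(\hat{\rho}).$$
The innermost sum is the $\Prob^{\cdots,n}_{\StochGK,s}$-measure of $\sgnodes^i\times\{s'\}\times\sgnodes^{n-i}$. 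Combining the defining identity $\Prob^{\strat_{\maxplay},\strat_{\minplay}}_{\StochGK,s}(S_0\times\cdots\times S_n\times\sgnodes^\omega) = \Prob^{\cdots,n}_{\StochGK,s}(S_0\times\cdots\times S_n)$ with an auxiliary consistency identity $\Prob^{\cdots,n}_{\StochGK,s}(S_0\times\cdots\times S_i\times\sgnodes^{n-i}) = \Prob^{\cdots,i}_{\StochGK,s}(S_0\times\cdots\times S_i)$, this sum equals $\Prob^{\strat_{\maxplay},\strat_{\minplay}}_{\StochGK,s}(\Finally^i s')$, yielding the right-hand side.

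For part (ii), I would start from $\Expect^{\strat_{\maxplay},\strat_{\minplay}}_{\StochGK,s}[\GRewards] = \int \GRewards \, \diff\Prob^{\strat_{\maxplay},\strat_{\minplay}}_{\StochGK,s}$, use $\GRewards(\rho) = \lim_n \FGRewards^n(\rho[..n+1])$, and justify passing the limit outside the integral separately for each instance of $\fgen$. For total accumulated reward ($\fgen(i,n)=1$), monotone convergence applies since rewards are non-negative; for discounted reward ($\fgen(i,n)=\discfactor^i$), dominated convergence applies with dominating constant $\max_{s'}\reward(s')/(1-\discfactor)$; and for average reward ($\fgen(i,n)=1/(n+1)$), dominated convergence applies with dominating constant $\max_{s'}\reward(s')$. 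Once the swap is justified, the finite expectation $\int \FGRewards^n(\rho[..n+1]) \, \diff\Prob^{\strat_{\maxplay},\strat_{\minplay}}_{\StochGK,s}$ depends only on the first $n+1$ coordinates of $\rho$ and therefore equals $\sum_{\hat{\rho}\in\sgnodes^{n+1}} \Prob^{\strat_{\maxplay},\strat_{\minplay},n}_{\StochGK,s}(\hat{\rho}) \cdot \FGRewards^n(\hat{\rho})$; then applying part (i) and taking the limit concludes.

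The main obstacle is the consistency identity underpinning part (i): that marginalizing $\Prob^{\cdots,n}_{\StochGK,s}$ over tail coordinates produces $\Prob^{\cdots,i}_{\StochGK,s}$. I would prove this by induction on $n-i$, and the inductive step reduces to verifying $\sum_{s'\in\sgnodes}\int_{\sgactions}\sgtrans(s'',\cdot,s')\,\diff(\strat_j(\hat{\rho}s'')(\cdot)) = 1$ for every $s''\in\sgnodes_j$ and $j\in\{\maxplay,\minplay\}$. In the uncountable-action setting this requires Tonelli's theorem to swap the finite sum with the action integral, after which the integrand collapses to the indicator of $\enabled(s'')$, on which the measure $\strat_j(\hat{\rho}s'')$ is concentrated.
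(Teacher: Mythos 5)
Your proposal is correct and follows essentially the same route as the paper: part (i) reduces to the marginal-consistency of the finite-horizon measures $\Prob^{\strat_{\maxplay},\strat_{\minplay},n}_{\StochGK,s}$ (which the paper obtains by induction on $n$, separating the trailing state, rather than by your direct sum interchange plus a consistency identity proved by induction on $n-i$ --- the same computation organized differently), and part (ii) passes the limit through the expectation and invokes part (i). If anything, you are more explicit than the paper about the measure-theoretic justifications (Tonelli for the marginalization step, and monotone/dominated convergence case by case for the limit--expectation exchange, which the paper dispatches as ``convergence properties of random variables''), so there is no gap.
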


The proof of
Lemma~\ref{lm:encoding:reach:and:expectation}.\ref{lm:encoding:reach:and:expectation:i}
follows by induction on $n$ while
Lemma~\ref{lm:encoding:reach:and:expectation}.\ref{lm:encoding:reach:and:expectation:ii}
can be calculated using the first item.

The next lemma states that if the $\minplay$-player plays a
semi-Markov strategy, the $\maxplay$-player can achieve equal results
whether she plays an arbitrary strategy or limits to playing only
semi-Markov strategies.
%
\begin{lemma}\label{lm:semimarkov}
  Let $\StochGK$ be a stochastic game resulting from interpreting a
  PSG $\StochK$.
  If $\strat_{\minplay} \in \SemiMarkovStrats{\minplay}$
  is a semi-Markov strategy,
  then, for any $\strat_{\maxplay} \in \Strategies{\maxplay}$,
  there is a semi-Markov strategy
  $\starredstrat_{\maxplay} \in \SemiMarkovStrats{\maxplay}$
  such that:
  \begin{enumerate}
  \item\label{lm:semimarkov:i}%
    $\Prob^{\strat_{\maxplay},\strat_{\minplay}}_{\StochGK,s}(D \Until^n s') =
    \Prob^{\starredstrat_{\maxplay},\strat_{\minplay}}_{\StochGK,s}(D \Until^n s')$,
    for all $n\geq 0$, $D\subseteq\sgnodes$ and $s'\in \sgnodes$;
  \item\label{lm:semimarkov:ii}%
    $\Prob^{\strat_{\maxplay},\strat_{\minplay}}_{\StochGK,s}(\Finally C) =
    \Prob^{\starredstrat_{\maxplay},\strat_{\minplay}}_{\StochGK,s}(\Finally C)$,
    for all $C\subseteq\sgnodes$; and
  \item\label{lm:semimarkov:iii}%
    $\Expect^{\strat_{\maxplay},\strat_{\minplay}}_{\StochGK,s}[\GRewards] =
    \Expect^{\starredstrat_{\maxplay},\strat_{\minplay}}_{\StochGK,s}[\GRewards]$.
  \end{enumerate}
  Similarly, if $\strat_{\maxplay} \in \SemiMarkovStrats{\maxplay}$
  then, for any $\strat_{\minplay} \in \Strategies{\minplay}$,
  there exists 
  $\starredstrat_{\minplay} \in \SemiMarkovStrats{\minplay}$
  satisfying, mutatis mutandis, the same equalities.
\end{lemma}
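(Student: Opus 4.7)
The plan is to first prove item \ref{lm:semimarkov:i} by an explicit marginalization construction, and then derive items \ref{lm:semimarkov:ii} and \ref{lm:semimarkov:iii} as corollaries. The symmetric statement swapping the roles of $\maxplay$ and $\minplay$ is proved identically.

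The key construction: given $\strat_{\maxplay} \in \Strategies{\maxplay}$ and the semi-Markov $\strat_{\minplay} \in \SemiMarkovStrats{\minplay}$, define $\starredstrat_{\maxplay}$ by setting, for each $n \geq 0$, $s \in \sgnmax$, and $A \in \Salg_{\sgactions}$,
\[
  \starredstrat_{\maxplay}(n,s)(A) \;=\; \frac{\sum_{\hat{\rho}\in\sgnodes^n} \Prob^{\strat_{\maxplay},\strat_{\minplay},n}_{\StochGK,s_0}(\hat{\rho}s)\cdot \strat_{\maxplay}(\hat{\rho}s)(A)}{\sum_{\hat{\rho}\in\sgnodes^n} \Prob^{\strat_{\maxplay},\strat_{\minplay},n}_{\StochGK,s_0}(\hat{\rho}s)}
\]
when the denominator is positive, and arbitrary (say, a fixed Dirac on some enabled action) otherwise. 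Intuitively, $\starredstrat_{\maxplay}(n,s)$ is the posterior distribution over the next action conditional on being at $s$ after $n$ steps under $(\strat_{\maxplay},\strat_{\minplay})$. Since each $\strat_{\maxplay}(\hat{\rho}s)$ is a probability measure on $\Salg_{\sgactions}$, the convex combination above is again a probability measure concentrated on $\enabled(s)$, so $\starredstrat_{\maxplay} \in \SemiMarkovStrats{\maxplay}$.

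I then prove item \ref{lm:semimarkov:i} by induction on $n$. The base case $n=0$ is immediate since both sides reduce to $\Dirac_s(s')\cdot\charac_D(s)$-like expressions. For the inductive step, I unfold $\Prob^{\strat_{\maxplay},\strat_{\minplay},n+1}_{\StochGK,s}(D \Until^{n+1} s')$ using the recursive definition from Section~\ref{sec:polytopal-games}, which expresses it as a sum over paths $\hat{\rho}s''\in D^{n+1}$ of $\Prob^{\strat_{\maxplay},\strat_{\minplay},n}_{\StochGK,s}(\hat{\rho}s'')$ times an integral of $\sgtrans(s'',\cdot,s')$ against either $\strat_{\maxplay}(\hat{\rho}s'')$ or $\strat_{\minplay}(\hat{\rho}s'')$. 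At $\minplay$-states the semi-Markov assumption on $\strat_{\minplay}$ replaces $\strat_{\minplay}(\hat{\rho}s'')$ by $\strat_{\minplay}(n,s'')$, which depends only on $n$ and $s''$ and factors out of the summation over $\hat{\rho}$. At $\maxplay$-states, the definition of $\starredstrat_{\maxplay}(n,s'')$ exactly performs the weighted averaging that turns the sum of $\Prob(\hat{\rho}s'')\cdot\strat_{\maxplay}(\hat{\rho}s'')(\cdot)$ into $\big(\sum_{\hat{\rho}}\Prob(\hat{\rho}s'')\big)\cdot\starredstrat_{\maxplay}(n,s'')(\cdot)$. Applying the induction hypothesis to $\sum_{\hat{\rho}}\Prob^{\strat_{\maxplay},\strat_{\minplay},n}_{\StochGK,s}(\hat{\rho}s'') = \Prob^{\strat_{\maxplay},\strat_{\minplay}}_{\StochGK,s}(D\Until^n s'')$ yields the required equality.

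Items \ref{lm:semimarkov:ii} and \ref{lm:semimarkov:iii} are then corollaries. For item \ref{lm:semimarkov:ii}, write $\Finally C = \biguplus_{n\geq 0}(\sgnodes\setminus C)\Until^n C$ as a disjoint union of cylindrical events and apply $\sigma$-additivity together with item \ref{lm:semimarkov:i} taking $D = \sgnodes\setminus C$ and summing over $s'\in C$. For item \ref{lm:semimarkov:iii}, invoke Lemma~\ref{lm:encoding:reach:and:expectation}.\ref{lm:encoding:reach:and:expectation:ii}, which expresses $\Expect^{\strat_{\maxplay},\strat_{\minplay}}_{\StochGK,s}[\GRewards]$ as a limit of weighted sums of $\Prob^{\strat_{\maxplay},\strat_{\minplay}}_{\StochGK,s}(\Finally^i s')=\Prob^{\strat_{\maxplay},\strat_{\minplay}}_{\StochGK,s}(\sgnodes\Until^i s')$; item \ref{lm:semimarkov:i} with $D = \sgnodes$ gives equality of every term, and hence of the limit.

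The main obstacle is the measure-theoretic bookkeeping in the inductive step: one has to justify exchanging the finite sum over $\hat{\rho}$ with the integrals $\int_{\sgactions}\sgtrans(s'',\cdot,s')\,\diff(\strat_{\maxplay}(\hat{\rho}s''))$ and verify that the averaging definition of $\starredstrat_{\maxplay}$ produces the correct integral. Both are enabled by nonnegativity (Tonelli) and by the fact that $\sgtrans(s'',\cdot,s')$ is $\Salg_{\sgactions}$-measurable. The zero-denominator case in the definition of $\starredstrat_{\maxplay}(n,s)$ is harmless since those states $s$ are reached with probability zero after $n$ steps under $(\strat_{\maxplay},\strat_{\minplay})$, hence contribute zero to both sides of every identity.
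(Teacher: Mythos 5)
There is a genuine gap in your construction of $\starredstrat_{\maxplay}$: you marginalize $\strat_{\maxplay}$ over \emph{all} histories $\hat{\rho}\in\sgnodes^n$ ending in $s$, but the identity you need in the inductive step for $\Prob(D\Until^{n+1}s')$ is
\[
\textstyle\sum_{\hat{\rho}\in D^n}\Prob^{\strat_{\maxplay},\strat_{\minplay},n}_{\StochGK,s_0}(\hat{\rho}s'')\,\strat_{\maxplay}(\hat{\rho}s'')(A)
=\Bigl(\textstyle\sum_{\hat{\rho}\in D^n}\Prob^{\strat_{\maxplay},\strat_{\minplay},n}_{\StochGK,s_0}(\hat{\rho}s'')\Bigr)\cdot\starredstrat_{\maxplay}(n,s'')(A),
\]
i.e.\ the averaging must be the posterior over actions conditional on reaching $s''$ at time $n$ \emph{through $D$}, not through arbitrary states. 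These two conditional distributions differ as soon as $\strat_{\maxplay}$ behaves differently after $D$-histories and non-$D$-histories. Concretely: let $s_0$ go with probability $1/2$ each to $x$ and $y$, both of which lead deterministically to a $\maxplay$-state $u$ with actions $c$ (to $g$) and $d$ (to a sink $z$); let $\strat_{\maxplay}$ play $d$ after $s_0xu$ and $c$ after $s_0yu$, and take $C=\{x,g\}$. Then $\Prob^{\strat_{\maxplay},\strat_{\minplay}}_{\StochGK,s_0}(\Finally C)=1$, but your $\starredstrat_{\maxplay}(2,u)=\frac12\Dirac_c+\frac12\Dirac_d$ gives $\Prob^{\starredstrat_{\maxplay},\strat_{\minplay}}_{\StochGK,s_0}(\Finally C)=3/4$. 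So item~\ref{lm:semimarkov:i} fails for general $D$ and item~\ref{lm:semimarkov:ii} fails outright (your argument for item~\ref{lm:semimarkov:iii} only uses $D=\sgnodes$, for which your averaging does coincide with the correct one, so that part survives).

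The repair is exactly the paper's definition: normalize over $\hat{\rho}'\in D^n$ with denominator $\Prob^{\strat_{\maxplay},\strat_{\minplay}}_{\StochGK,s}(D\Until^n s')$, so that the constructed strategy depends on $D$ (which is all the lemma's uses require: a fixed $D=\sgnodes\setminus C$ for reachability, $D=\sgnodes$ for rewards). Separately, your dismissal of the zero-denominator case is too quick: you assert those states ``contribute zero to both sides,'' but what is immediate is only that they have probability zero under $(\strat_{\maxplay},\strat_{\minplay})$; one must also show they remain null under $(\starredstrat_{\maxplay},\strat_{\minplay})$, since $\starredstrat_{\maxplay}$ is defined arbitrarily there and could in principle re-route mass into such histories. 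The paper proves this as a separate claim ($\Prob^{\strat_{\maxplay},\strat_{\minplay},n}_{\StochGK,s}(\hat{\rho}s'')=0$ implies $\Prob^{\starredstrat_{\maxplay},\strat_{\minplay},n}_{\StochGK,s}(\hat{\rho}s'')=0$) by a nested induction; your write-up needs an analogous argument.
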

\begin{proof}[Sketch]
  To prove item~\ref{lm:semimarkov:i}, we define the new strategy
  $\starredstrat_{\maxplay}$ so that the probability of choosing from
  $A\in\Salg_\sgactions$ after a path of length $n$ ending on a state
  $s$ with the original strategy is uniformly distributed among the
  paths of this type in the new strategy.
  %
  Thus, $\starredstrat_{\maxplay}$ is formally defined as follows.
  For $\hat{\rho}\in\sgnodes^*$, $s'\in\sgnodes$, and $A\in\Salg_\sgactions$,
  such that
  $\Prob^{\strat_{\maxplay},\strat_{\minplay}}_{\StochGK,s}(D \Until^n s') > 0$
  and $|\hat{\rho}| = n\geq 0$, let
  \begin{align*}
  \starredstrat_{\maxplay}(\hat{\rho}s')(A)
  & = 
  \frac{\sum_{\hat{\rho}'\in D^n}\Prob^{\strat_{\maxplay},\strat_{\minplay},n}_{\StochGK,s}(\hat{\rho}'s')\mult\strat_{\maxplay}(\hat{\rho}' s')(A)}{\Prob^{\strat_{\maxplay},\strat_{\minplay}}_{\StochGK,s}(D \Until^n s')}
  \end{align*}
  For $s'\in\sgnodes$ with
  $\Prob^{\strat_{\maxplay},\strat_{\minplay}}_{\StochGK,s}(D \Until^n s') = 0$
  and $|\hat{\rho}s'| = n$, define
  $\starredstrat_{\maxplay}(\hat{\rho}s')$ to be
  $\Dirac_{\textsf{f}(s')}$ for a globally fixed function $\textsf{f}$
  such that $\textsf{f}(s')\in\enabled(s')$.
  Notice that
  $\starredstrat_{\maxplay}\in\SemiMarkovStrats{\maxplay}$.

  Then, the proof of item~\ref{lm:semimarkov:i} follows by induction
  with particular care in the case of
  $\Prob^{\strat_{\maxplay},\strat_{\minplay}}_{\StochGK,s}(D \Until^n s') = 0$.
  Item~\ref{lm:semimarkov:ii} follows straightforwardly from
  item~\ref{lm:semimarkov:i} and item~\ref{lm:semimarkov:iii} follows
  directly from item~\ref{lm:semimarkov:ii} using
  Lemma~\ref{lm:encoding:reach:and:expectation}.\ref{lm:encoding:reach:and:expectation:ii}.
  The proof can be replicated mutatis mutandi with
  $\maxplay$ and $\minplay$ exchanged yielding the last part of the
  lemma.
  \qed
\end{proof}

Since $\psgtrans(s)$ is finite,  there can be finitely many
polytopes $K$ such that $(K,\mu)\in\enabled(s)$.  Besides, the set of
vertices $\vertices(K)$ of $K$ is finite.  Therefore the set
$\{(K,\mu)\in\sgactions(s)\mid\mu\in\vertices(K)\}$ is also finite
and, as a consequence, extreme strategies only resolve with discrete
(finite) probability distributions.  That is, if $\strat_{i}$ is
extreme, $\strat_{i}(\hat{\rho}s)$ has finite support for all
$\hat{\rho}\in\sgnodes^*$ and $s\in\sgnodes$.

It turns out that Lemma~\ref{lm:semimarkov} can be strengthened to
obtain \emph{extreme} semi-Markov strategies.
%
We first prove this new lemma for simplicial stochastic games since
simplices have the particular property that any vector in a simplex can
be uniquely defined as a convex combination of the simplex vertices
which is crucial for the proof of the lemma.

\begin{lemma}\label{lm:xsemimarkov}
  Let $\StochGK$ be a stochastic game resulting from interpreting a
  SSG $\StochK$.
  If $\strat_{\minplay} \in \SemiMarkovStrats{\minplay}$
  is a semi-Markov strategy,
  then, for any $\strat_{\maxplay} \in \SemiMarkovStrats{\maxplay}$,
  there is an extreme semi-Markov strategy
  $\starredstrat_{\maxplay} \in \XSemiMarkovStrats{\maxplay}$
  such that:
  \begin{enumerate}
  \item\label{lm:xsemimarkov:i}%
    $\Prob^{\strat_{\maxplay},\strat_{\minplay}}_{\StochGK,s}(D \Until^n s') =
    \Prob^{\starredstrat_{\maxplay},\strat_{\minplay}}_{\StochGK,s}(D \Until^n s')$,
    for all $n\geq 0$, $D\subseteq\sgnodes$ and $s'\in \sgnodes$;
  \item\label{lm:xsemimarkov:ii}%
    $\Prob^{\strat_{\maxplay},\strat_{\minplay}}_{\StochGK,s}(\Finally C) =
    \Prob^{\starredstrat_{\maxplay},\strat_{\minplay}}_{\StochGK,s}(\Finally C)$,
    for all $C\subseteq\sgnodes$; and
  \item\label{lm:xsemimarkov:iii}%
    $\Expect^{\strat_{\maxplay},\strat_{\minplay}}_{\StochGK,s}[\GRewards] =
    \Expect^{\starredstrat_{\maxplay},\strat_{\minplay}}_{\StochGK,s}[\GRewards]$.
  \end{enumerate}
  Similarly, if $\strat_{\maxplay} \in \SemiMarkovStrats{\maxplay}$
  then, for any $\strat_{\minplay} \in \SemiMarkovStrats{\minplay}$,
  there exists
  $\starredstrat_{\minplay} \in \XSemiMarkovStrats{\minplay}$
  satisfying, mutatis mutandis, the same equalities.
\end{lemma}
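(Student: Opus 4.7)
The key observation is that, in a simplicial stochastic game, each $K \in \psgtrans(s)$ is a simplex, so every $\mu \in K$ admits a \emph{unique} representation as a convex combination $\mu = \sum_{v \in \vertices(K)} \lambda_v^K(\mu)\mult v$ of its vertices. This allows us to faithfully simulate ``playing $\mu$'' by probabilistically selecting a vertex $v$ with probability $\lambda_v^K(\mu)$, since the resulting one-step probability of reaching any state $s'$ is $\sum_v \lambda_v^K(\mu)\mult v(s') = \mu(s')$.

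Since $\strat_{\maxplay}\in\SemiMarkovStrats{\maxplay}$, we may write $\strat_{\maxplay}(\hat{\rho}s)=\strat_{\maxplay}(n,s)$ whenever $|\hat{\rho}|=n$. We define the extreme semi-Markov strategy $\starredstrat_{\maxplay}$ on each pair $(n,s)$ with $s\in\sgnmax$ by
\[
  \starredstrat_{\maxplay}(n,s)(\{(K,v)\}) = \int_{\{K\}\times K} \lambda_v^K(\mu)\ \diff(\strat_{\maxplay}(n,s))(K,\mu)
\]
for each $K\in\psgtrans(s)$ and $v\in\vertices(K)$, and extend to zero on all other points of $\sgactions$. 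The barycentric coordinate $\lambda_v^K$ is an affine (hence measurable) function of $\mu$, so the integral is well-defined; since $\sum_v \lambda_v^K(\mu)=1$ and $\strat_{\maxplay}(n,s)$ is concentrated on $\enabled(s)$, the total mass is $1$. By construction, $\starredstrat_{\maxplay}\in\XSemiMarkovStrats{\maxplay}$.

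The crux is then the one-step identity: for every $s,s'\in\sgnodes$ with $s\in\sgnmax$ and every $n\geq 0$,
\[
  \int_{\sgactions}\sgtrans(s,\cdot,s')\ \diff(\starredstrat_{\maxplay}(n,s)) = \int_{\sgactions}\sgtrans(s,\cdot,s')\ \diff(\strat_{\maxplay}(n,s)),
\]
which follows by unfolding the definition of $\starredstrat_{\maxplay}$, swapping the finite sum over vertices with the integral, and applying the barycentric identity $\sum_v \lambda_v^K(\mu)\mult v(s')=\mu(s')$. Because both strategies are semi-Markov and the one-step transition kernels coincide, item~\ref{lm:xsemimarkov:i} follows by a direct induction on $n$, mirroring the inductive definition of $\Prob^{\cdot,\cdot,n}_{\StochGK,s}$. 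Item~\ref{lm:xsemimarkov:ii} follows by $\sigma$-additivity over the disjoint union $\Finally C = \bigcup_{n\geq 0}(\sgnodes\setminus C)\Until^n C$, and item~\ref{lm:xsemimarkov:iii} follows from item~\ref{lm:xsemimarkov:i} via Lemma~\ref{lm:encoding:reach:and:expectation}.\ref{lm:encoding:reach:and:expectation:ii}.

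The main obstacle is the measurability bookkeeping: verifying that $\lambda_v^K$ is $\Salg_{\Dist(\sgnodes)}$-measurable (it is, being affine in $\mu$) and that the swap of sum and integral in the one-step identity is justified (it is a finite sum, so Fubini applies trivially). Finally, the symmetric case with $\strat_{\maxplay}$ semi-Markov and $\strat_{\minplay}$ arbitrary semi-Markov is obtained by applying the same construction with the roles of $\maxplay$ and $\minplay$ interchanged. \qed
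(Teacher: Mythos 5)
Your proposal is correct and follows essentially the same route as the paper: you define $\starredstrat_{\maxplay}$ by pushing the mass of $\strat_{\maxplay}(n,s)$ onto the vertices via the (unique, because $K$ is a simplex) barycentric coordinates, verify measurability of these coordinates, establish the one-step kernel identity, and conclude by induction on $n$, $\sigma$-additivity, and Lemma~\ref{lm:encoding:reach:and:expectation}. The paper's $\convp^K(\mu,\hat{\mu})$ is exactly your $\lambda_{\hat\mu}^K(\mu)$, so no further comparison is needed.
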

\begin{proof}[Sketch]
  \newcommand{\convp}{\mathsf{p}}%
  For any $K\in\DSimp(\sgnodes)$, $\mu\in K$ and
  $\hat{\mu}\in\vertices(K)$ define $\convp^K(\mu,\hat{\mu})\in[0,1]$
  such that
  $\sum_{\hat{\mu}\in\vertices(K)}\convp^K(\mu,\hat{\mu})\mult\hat{\mu}=\mu$.
  That is, all $\convp^K(\mu,\hat{\mu})$, $\hat{\mu}\in\vertices(K)$,
  are the unique factors that define the convex combination for $\mu$
  in the simplex $K$.
  In any other case, let $\convp^K(\mu,\hat{\mu})=0$.

  Let $\convp((K,\mu),(K,\hat{\mu}))=\convp^K(\mu,\hat{\mu})$ for all
  $K\in\DSimp(\sgnodes)$, $\mu\in K$ and $\hat{\mu}\in\vertices(K)$,
  and let $\convp(a,b)=0$ for any other $a,b\in\sgactions$.
  For every $(K,\mu)\in\sgactions$ such that $\mu\in K$, let
  $\vertices(K,\mu)=\{(K,\hat{\mu})\mid\hat{\mu}\in\vertices(K)\}$ and
  let $\vertices(K,\mu)=\emptyset$ otherwise.
  Thus, for every $s\in\sgnodes$ and $a\in\sgactions$,
  $\sgtrans(s,a,\cdot) =
  \sum_{b\in\vertices(a)}\convp(a,b)\mult\sgtrans(s,b,\cdot)$.

  We also extend $\convp$ to measurable sets $B\in\Salg_\sgactions$
  and $a\in\sgactions$ by
  $\convp(a,B)=\sum_{b\in B\cap\vertices(a)}\convp(a,b)$.

  For every $\hat{\rho}\in\sgnodes^*$, $s'\in\sgnodes$ and
  $B\in\Salg_\sgactions$, define $\starredstrat_\maxplay$ by
  \[\starredstrat_\maxplay(\hat{\rho}s')(B) =
  \int_{\sgactions} \convp(\cdot,B)\ \diff(\strat_\maxplay(\hat{\rho}s')(\cdot)).\]
  $\starredstrat_\maxplay(\hat{\rho}s')$ is defined so that it assigns
  to each vertex of a simplex the weighted contribution (according to
  $\strat_\maxplay(\hat{\rho}s')$) of each distribution (in the said
  simplex) to such vertex.

  Because $\strat_\maxplay$ is semi-Markov, so is
  $\starredstrat_\maxplay$.  Moreover, notice that if $b$ is not a
  vertex label, then $\convp(a,b)=0$ (and hence $\convp(a,B)>0$ only
  if $B$ contains vertices).  This should hint that
  $\starredstrat_\maxplay$ is also extreme.

  Item~\ref{lm:xsemimarkov:i} proceeds by induction on $n$.
  Item~\ref{lm:xsemimarkov:ii} follows straightforwardly
  using~\ref{lm:xsemimarkov:i}, and item~\ref{lm:xsemimarkov:iii}
  follows from~item~\ref{lm:xsemimarkov:ii} using
  Lemma~\ref{lm:encoding:reach:and:expectation}.\ref{lm:encoding:reach:and:expectation:ii}.
  The proof can be replicated mutatis mutandi with
  $\maxplay$ and $\minplay$ exchanged which yields the last part of the
  lemma.
  \qed
\end{proof}

Because of Proposition~\ref{prop:PSG:SSG}, Lemma~\ref{lm:xsemimarkov}
extends immediately to PSG. Moreover, by applying
Lemma~\ref{lm:xsemimarkov} twice and Proposition~\ref{prop:PSG:SSG},
we have the next corollary.

\begin{corollary}\label{cor:xsemimarkov}
  Let $\StochGK$ be a stochastic game resulting from interpreting a
  PSG $\StochK$.
  For all semi-Markov strategies
  $\strat_{\minplay} \in \SemiMarkovStrats{\minplay}$ and
  $\strat_{\maxplay} \in \SemiMarkovStrats{\maxplay}$,
  there are extreme semi-Markov strategies
  $\starredstrat_{\minplay} \in \XSemiMarkovStrats{\minplay}$ and
  $\starredstrat_{\maxplay} \in \XSemiMarkovStrats{\maxplay}$
  such that
  \begin{enumerate}
  \item\label{cor:xsemimarkov:i}%
    $\Prob^{\strat_{\maxplay},\strat_{\minplay}}_{\StochGK,s}(\Finally C) =
    \Prob^{\starredstrat_{\maxplay},\starredstrat_{\minplay}}_{\StochGK,s}(\Finally C)$,
    for all $C\subseteq\sgnodes$; and
  \item\label{cor:xsemimarkov:ii}%
    $\Expect^{\strat_{\maxplay},\strat_{\minplay}}_{\StochGK,s}[\GRewards] =
    \Expect^{\starredstrat_{\maxplay},\starredstrat_{\minplay}}_{\StochGK,s}[\GRewards]$.
  \end{enumerate}
\end{corollary}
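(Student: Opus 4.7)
The plan is to chain Lemma~\ref{lm:xsemimarkov} with itself, using Proposition~\ref{prop:PSG:SSG} to bridge between the PSG and an associated SSG, since Lemma~\ref{lm:xsemimarkov} was proved only for simplicial games.

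First, I would use Proposition~\ref{prop:PSG:SSG} to pick a SSG $\StochK'$ obtained by triangulating each polytope of $\StochK$, together with semi-Markov strategies $\strat'_{\maxplay}\in\SemiMarkovStrats{\maxplay}$ and $\strat'_{\minplay}\in\SemiMarkovStrats{\minplay}$ for $\StochGKp$ such that $\Prob^{\strat_{\maxplay},\strat_{\minplay}}_{\StochGK,s}=\Prob^{\strat'_{\maxplay},\strat'_{\minplay}}_{\StochGKp,s}$. Item~\ref{item:PSG:SSG:i} of Proposition~\ref{prop:PSG:SSG} guarantees the semi-Markov property is preserved in this transfer, which is the class needed to enter Lemma~\ref{lm:xsemimarkov}.

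Next, I would apply Lemma~\ref{lm:xsemimarkov} in $\StochGKp$ holding $\strat'_{\minplay}$ fixed: since $\strat'_{\minplay}\in\SemiMarkovStrats{\minplay}$ and $\strat'_{\maxplay}\in\SemiMarkovStrats{\maxplay}$, there exists an extreme semi-Markov strategy $\starredstrat'_{\maxplay}\in\XSemiMarkovStrats{\maxplay}$ matching $\strat'_{\maxplay}$ on the reachability and expected-reward quantities in the lemma. Then I would apply the mutatis mutandis version of Lemma~\ref{lm:xsemimarkov} a second time, now holding $\starredstrat'_{\maxplay}$ fixed; this is legitimate because $\XSemiMarkovStrats{\maxplay}\subseteq\SemiMarkovStrats{\maxplay}$, so the fixed maximizer still satisfies the hypothesis. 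This yields an extreme semi-Markov $\starredstrat'_{\minplay}\in\XSemiMarkovStrats{\minplay}$ that preserves the relevant values against $\starredstrat'_{\maxplay}$. Chaining the two equalities gives $\Prob^{\strat'_{\maxplay},\strat'_{\minplay}}_{\StochGKp,s}(\Finally C)=\Prob^{\starredstrat'_{\maxplay},\starredstrat'_{\minplay}}_{\StochGKp,s}(\Finally C)$ and similarly for $\Expect[\GRewards]$.

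Finally, I would transfer $\starredstrat'_{\maxplay}$ and $\starredstrat'_{\minplay}$ back to $\StochGK$ using item~\ref{item:PSG:SSG:ii} of Proposition~\ref{prop:PSG:SSG}, obtaining $\starredstrat_{\maxplay}$ and $\starredstrat_{\minplay}$ that induce the same probability measure and inherit both the extreme and semi-Markov properties (these are listed explicitly as preserved by the translation). Combining the three equalities of probability measures closes the argument for item~\ref{cor:xsemimarkov:i}, and item~\ref{cor:xsemimarkov:ii} then follows by Lemma~\ref{lm:encoding:reach:and:expectation}.\ref{lm:encoding:reach:and:expectation:ii}, which expresses $\Expect[\GRewards]$ solely in terms of the one-step reachability probabilities already shown to coincide. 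The only delicate point to verify carefully is that the second invocation of Lemma~\ref{lm:xsemimarkov} is against an extreme semi-Markov strategy (not merely semi-Markov), and that the back-translation via Proposition~\ref{prop:PSG:SSG} simultaneously preserves both descriptors; once those bookkeeping checks are made, the conclusion is immediate. \qed
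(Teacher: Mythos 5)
Your proposal is correct and follows essentially the same route as the paper, which likewise obtains the corollary by transferring to the triangulated SSG via Proposition~\ref{prop:PSG:SSG}, applying Lemma~\ref{lm:xsemimarkov} twice (once per player, the second time against the already-extremized opponent), and transferring back. The only cosmetic difference is that for item~\ref{cor:xsemimarkov:ii} you could simply chain item~\ref{lm:xsemimarkov:iii} of the lemma directly rather than re-invoking Lemma~\ref{lm:encoding:reach:and:expectation}.
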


Given $\StochGK$, define the \emph{extreme interpretation of
$\StochK$} as the stochastic game
$\StochHK = (\sgnodes, (\sgnmax,\sgnmin), \vertices(\sgactions), \sgtrans_\StochHK)$
where $\sgtrans_\StochHK$ is the restriction of
$\sgtrans$ to actions in
$\vertices(\sgactions) = \{{(K,\mu)\in\sgactions}\mid{\mu\in\vertices(K)}\}$,
that is,
$\sgtrans_\StochHK(s,a,s)=\sgtrans(s,a,s')$ for all $s,s'\in\sgnodes$
and $a\in\vertices(\sgactions)$.
Since $\vertices(\sgactions)$ is finite, $\StochHK$ is a finite
stochastic game.

Given an extreme semi-Markov strategy
$\strat_i\in\XSemiMarkovStrats{\StochGK,i}$ for the $i$-player in the 
stochastic game $\StochGK$, $i\in\{\maxplay,\minplay\}$, define
$\stratv_i(\hat{\rho}s)(A)=\strat_i(\hat{\rho}s)(A)$ for all
$\hat{\rho}\in\sgnodes^*$, $s\in\sgnodes$, and
$A\subseteq\vertices(\sgactions)$ ($A\in\Salg_\sgactions$ since it is
finite).
Notice that
$\stratv_i(\hat{\rho}s)(\enabled_\StochHK(s))=\strat_i(\hat{\rho}s)(\vertices(\enabled(s)))=1$.
Therefore $\stratv_i\in\SemiMarkovStrats{\StochHK,i}$ is a semi-Markov
strategy in $\StochHK$.
Conversely, for a semi-Markov strategy
$\strat_i\in\SemiMarkovStrats{\StochHK,i}$ for the $i$-player in the
stochastic game $\StochHK$, define
$\stratx_i(\hat{\rho}s)(A)=\strat_i(\hat{\rho}s)(A\cap\vertices(\sgactions))$
for all $\hat{\rho}\in\sgnodes^*$, $s\in\sgnodes$, and
$A\in\Salg_\sgactions$.
$\stratx_i\in\XSemiMarkovStrats{\StochGK,i}$ is a well defined extreme
semi-Markov strategy in $\StochGK$ since
$\stratx_i(\hat{\rho}s)(\vertices(\enabled(s)))=\strat_i(\hat{\rho}s)(\enabled_\StochHK(s))=1$
and
$\stratx_i(\hat{\rho}s)(\sgactions\setminus\vertices(\sgactions))=\strat_i(\hat{\rho}s)(\emptyset)=0$.
Then, it can be calculated by induction on $n$ that
$\Prob^{\strat_{\maxplay},\strat_{\minplay},n}_{\StochGK,s} = \Prob^{\stratv_{\maxplay},\stratv_{\minplay},n}_{\StochHK,s}$
and
$\Prob^{\stratx_{\maxplay},\stratx_{\minplay},n}_{\StochGK,s} = \Prob^{\strat_{\maxplay},\strat_{\minplay},n}_{\StochHK,s}$
which yield
\begin{equation}\label{eq:StochGK:StochHK}
  \Prob^{\strat_{\maxplay},\strat_{\minplay}}_{\StochGK,s} = \Prob^{\stratv_{\maxplay},\stratv_{\minplay}}_{\StochHK,s}
  \text{ and }
  \Prob^{\stratx_{\maxplay},\stratx_{\minplay}}_{\StochGK,s} = \Prob^{\strat_{\maxplay},\strat_{\minplay}}_{\StochHK,s}.
\end{equation}
This suggests that the solution of a PSG under extreme semi-Markov
strategies is equivalent to the solution the game on its extreme
interpretation limited to semi-Markov strategies, which is stated in
the following:

\begin{proposition}\label{prop:infsup:supinf:StochGK:StochHK}
  Let $\StochGK$ and $\StochHK$ be respectively the interpretation and
  the extreme interpretation of $\StochK$. Then, the following
  equalities hold
  \begin{enumerate}
  \item\label{prop:infsup:supinf:StochGK:StochHK:i}%
    $
    \inf_{\strat_\minplay\in\XSemiMarkovStrats{\StochGK,\minplay}}\!\sup_{\strat_\maxplay\in\XSemiMarkovStrats{\StochGK,\maxplay}}\!\Prob^{\strat_{\maxplay},\strat_{\minplay}}_{\StochGK,s}(\Finally C) =
    \inf_{\strat_\minplay\in\SemiMarkovStrats{\StochHK,\minplay}}\!\sup_{\strat_\maxplay\in\SemiMarkovStrats{\StochHK,\maxplay}}\!\Prob^{\strat_{\maxplay},\strat_{\minplay}}_{\StochHK,s}(\Finally C)$
  \item\label{prop:infsup:supinf:StochGK:StochHK:ii}%
    $
    \sup_{\strat_\maxplay\in\XSemiMarkovStrats{\StochGK,\maxplay}}\!\inf_{\strat_\minplay\in\XSemiMarkovStrats{\StochGK,\minplay}}\!\Prob^{\strat_{\maxplay},\strat_{\minplay}}_{\StochGK,s}(\Finally C) =
    \sup_{\strat_\maxplay\in\SemiMarkovStrats{\StochHK,\maxplay}}\!\inf_{\strat_\minplay\in\SemiMarkovStrats{\StochHK,\minplay}}\!\Prob^{\strat_{\maxplay},\strat_{\minplay}}_{\StochHK,s}(\Finally C)$
  \item\label{prop:infsup:supinf:StochGK:StochHK:iii}%
    $
    \inf_{\strat_\minplay\in\XSemiMarkovStrats{\StochGK,\minplay}}\!\sup_{\strat_\maxplay\in\XSemiMarkovStrats{\StochGK,\maxplay}}\!\Expect^{\strat_{\maxplay},\strat_{\minplay}}_{\StochGK,s}[\GRewards] =
    \inf_{\strat_\minplay\in\SemiMarkovStrats{\StochHK,\minplay}}\!\sup_{\strat_\maxplay\in\SemiMarkovStrats{\StochHK,\maxplay}}\!\Expect^{\stratv_{\maxplay},\stratv_{\minplay}}_{\StochHK,s}[\GRewards]$
  \item\label{prop:infsup:supinf:StochGK:StochHK:iv}%
    $
    \sup_{\strat_\maxplay\in\XSemiMarkovStrats{\StochGK,\maxplay}}\!\inf_{\strat_\minplay\in\XSemiMarkovStrats{\StochGK,\minplay}}\!\Expect^{\strat_{\maxplay},\strat_{\minplay}}_{\StochGK,s}[\GRewards] =
    \sup_{\strat_\maxplay\in\SemiMarkovStrats{\StochHK,\maxplay}}\!\inf_{\strat_\minplay\in\SemiMarkovStrats{\StochHK,\minplay}}\!\Expect^{\stratv_{\maxplay},\stratv_{\minplay}}_{\StochHK,s}[\GRewards]$
  \end{enumerate}
\end{proposition}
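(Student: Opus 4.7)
The plan is to use the bijective correspondence between $\XSemiMarkovStrats{\StochGK,i}$ and $\SemiMarkovStrats{\StochHK,i}$ established just before the proposition, together with the probability measure equality~(\ref{eq:StochGK:StochHK}), to transfer suprema and infima between the two games.

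First I would verify that the maps $\strat_{i}\mapsto\stratv_{i}$ and $\strat_{i}\mapsto\stratx_{i}$ are mutually inverse bijections between $\XSemiMarkovStrats{\StochGK,i}$ and $\SemiMarkovStrats{\StochHK,i}$, for $i\in\{\maxplay,\minplay\}$. This is immediate from their definitions: if $\strat_{i}\in\XSemiMarkovStrats{\StochGK,i}$, then $(\stratv_{i})^{\mathsf{x}}(\hat{\rho}s)(A)=\stratv_{i}(\hat{\rho}s)(A\cap\vertices(\sgactions))=\strat_{i}(\hat{\rho}s)(A\cap\vertices(\sgactions))=\strat_{i}(\hat{\rho}s)(A)$, where the last equality uses extremality of $\strat_{i}$ to conclude that all of its mass lies on $\vertices(\sgactions)$; symmetrically $(\stratx_{i})^{\mathsf{v}}=\strat_{i}$ for every $\strat_{i}\in\SemiMarkovStrats{\StochHK,i}$.

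For items~\ref{prop:infsup:supinf:StochGK:StochHK:i} and~\ref{prop:infsup:supinf:StochGK:StochHK:ii}, equation~(\ref{eq:StochGK:StochHK}) applied to the measurable event $\Finally C$ gives $\Prob^{\strat_{\maxplay},\strat_{\minplay}}_{\StochGK,s}(\Finally C)=\Prob^{\stratv_{\maxplay},\stratv_{\minplay}}_{\StochHK,s}(\Finally C)$ for every pair $\strat_{\maxplay}\in\XSemiMarkovStrats{\StochGK,\maxplay}$, $\strat_{\minplay}\in\XSemiMarkovStrats{\StochGK,\minplay}$. Composing with the bijection of the first step, each pair of extreme semi-Markov strategies on $\StochGK$ corresponds to exactly one pair of semi-Markov strategies on $\StochHK$ yielding the same reachability value, so the nested $\sup$/$\inf$ (in either order) coincide. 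For items~\ref{prop:infsup:supinf:StochGK:StochHK:iii} and~\ref{prop:infsup:supinf:StochGK:StochHK:iv}, I would invoke Lemma~\ref{lm:encoding:reach:and:expectation}.\ref{lm:encoding:reach:and:expectation:ii} to express $\Expect^{\strat_{\maxplay},\strat_{\minplay}}_{\cdot,s}[\GRewards]$ solely in terms of the pointwise reachability probabilities $\Prob(\Finally^{i}s')$; these again agree across $\StochGK$ and $\StochHK$ under corresponding strategies by~(\ref{eq:StochGK:StochHK}), so the expectations agree and the same transfer argument yields the equalities.

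There is no substantial analytic obstacle: the bulk of the work was already done in proving~(\ref{eq:StochGK:StochHK}) and Lemmas~\ref{lm:encoding:reach:and:expectation} and~\ref{lm:xsemimarkov}. The only point requiring care is to check that the $\mathsf{v}/\mathsf{x}$-bijection carries the probability structure faithfully (so that $\Prob^{\stratv_{\maxplay},\stratv_{\minplay}}_{\StochHK,s}=\Prob^{\strat_{\maxplay},\strat_{\minplay}}_{\StochGK,s}$ and vice versa on arbitrary cylinders), which is precisely the content of~(\ref{eq:StochGK:StochHK}) and follows by induction on the step index $n$ using the definitions of $\stratv$ and $\stratx$.
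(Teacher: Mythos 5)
Your proposal is correct and follows essentially the same route as the paper: the paper likewise establishes that $\strat_i\mapsto\stratv_i$ and $\strat_i\mapsto\stratx_i$ are mutually inverse bijections between $\XSemiMarkovStrats{\StochGK,i}$ and $\SemiMarkovStrats{\StochHK,i}$ (using extremality for one direction) and then transfers the $\sup$/$\inf$ via the measure equality~(\ref{eq:StochGK:StochHK}), with Lemma~\ref{lm:encoding:reach:and:expectation}.\ref{lm:encoding:reach:and:expectation:ii} handling the expectation cases exactly as you describe.
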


The next proposition, whose proof also uses (\ref{eq:StochGK:StochHK}),
provides necessary conditions
for the polytopal stochastic game to be almost surely stopping or
irreducible in terms of the extreme interpretation.

\begin{proposition}\label{prop:stopping:irreducible:StochGK:StochHK}
  Let $\StochGK$ and $\StochHK$ be respectively the interpretation and
  the extreme interpretation of $\StochK$.
  Then,
  \begin{enumerate*}[(1)]
  \item\label{prop:stopping:irreducible:StochGK:StochHK:i}%
    if $\StochGK$ is almost surely stopping, so is $\StochHK$, and
  \item\label{prop:stopping:irreducible:StochGK:StochHK:ii}%
    if $\StochGK$ is irreducible, so is $\StochHK$.
  \end{enumerate*}
\end{proposition}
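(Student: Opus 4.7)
My plan is to reduce both statements to the identity~(\ref{eq:StochGK:StochHK}) by extending the $\stratx$-construction from extreme semi-Markov strategies to \emph{arbitrary} strategies in $\StochHK$. Given any strategy $\strat_i$ for the $i$-player in $\StochHK$, define $\stratx_i$ in $\StochGK$ by
\[
  \stratx_i(\hat{\rho}s)(A) = \strat_i(\hat{\rho}s)(A \cap \vertices(\sgactions))
\]
for every $\hat{\rho}\in\sgnodes^*$, $s\in\sgnodes_i$, and $A\in\Salg_\sgactions$. Since $\vertices(\sgactions)$ is finite, hence measurable, this is well defined, and $\stratx_i(\hat{\rho}s)(\enabled(s)) = \strat_i(\hat{\rho}s)(\vertices(\enabled(s))) = 1$, so $\stratx_i$ is a bona fide strategy in $\StochGK$ (in fact an extreme one).

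The first key step is to verify that the identity
\[
  \Prob^{\stratx_\maxplay,\stratx_\minplay}_{\StochGK,s} = \Prob^{\strat_\maxplay,\strat_\minplay}_{\StochHK,s}
\]
still carries over in this more general setting. The induction on $n$ used to establish~(\ref{eq:StochGK:StochHK}) in the semi-Markov case goes through unchanged: the integrals over $\sgactions$ on the $\StochGK$-side collapse to sums over $\vertices(\sgactions)$ because $\stratx_i(\hat{\rho}s)$ puts zero mass outside the finite set of vertex actions, and $\sgtrans$ agrees with $\sgtrans_\StochHK$ on those actions. Extension from finite-length prefixes to the Borel $\sigma$-algebra on $\sgnodes^\omega$ is, as before, Carath\'eodory's theorem.

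Next I would record the preliminary observation that the set $T$ of terminal states coincides in $\StochGK$ and $\StochHK$. One direction is immediate from $\vertices(\sgactions)\subseteq\sgactions$. For the converse, if every vertex $\hat{\mu}\in\vertices(K)$ of every $K\in\psgtrans(s)$ satisfies $\hat{\mu}(s) = 1$, then for any $\mu\in K$ written as $\mu = \sum_i \lambda_i \hat{\mu}^i$ we have $\mu(s) = \sum_i \lambda_i = 1$, so $s$ is terminal in $\StochGK$ as well.

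With these two tools in hand, both parts are then immediate. For~(\ref{prop:stopping:irreducible:StochGK:StochHK:i}), given arbitrary strategies $\strat_\maxplay,\strat_\minplay$ in $\StochHK$, the lifted pair $\stratx_\maxplay,\stratx_\minplay$ yields $\Prob^{\stratx_\maxplay,\stratx_\minplay}_{\StochGK,s}(\Finally T) = 1$ by hypothesis, which, by the identity above, equals $\Prob^{\strat_\maxplay,\strat_\minplay}_{\StochHK,s}(\Finally T)$; thus $\StochHK$ is almost surely stopping. Part~(\ref{prop:stopping:irreducible:StochGK:StochHK:ii}) is analogous, replacing $\Finally T$ by $\Finally s'$ and ``$=1$'' by ``$>0$'', and is applied to every pair of states $s,s'\in\sgnodes$. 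The main (and only) obstacle is the purely technical one of checking that the measurability and integration steps behind~(\ref{eq:StochGK:StochHK}) remain valid without the semi-Markov assumption; no new conceptual ingredient is needed.
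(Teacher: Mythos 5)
Your proof is correct, but it takes a more direct route than the paper. The paper bounds $\inf_{\Strategies{\StochGK,\minplay}}\inf_{\Strategies{\StochGK,\maxplay}}$ from above by the corresponding infima over \emph{extreme semi-Markov} strategies of $\StochGK$, identifies these with the infima over semi-Markov strategies of $\StochHK$ via the bijection behind~(\ref{eq:StochGK:StochHK}) (observation~(\ref{eq:observation:semimarkov:xsemimarkov}) and Proposition~\ref{prop:StochGK:StochHK}), and then closes the gap to \emph{all} strategies of $\StochHK$ by appealing to standard results on finite MDPs~\cite{Puterman94}. You instead lift an \emph{arbitrary} strategy of $\StochHK$ directly to an extreme strategy of $\StochGK$ via the $\stratx$-construction and invoke the hypothesis on $\StochGK$ for that lifted pair; this avoids both the detour through semi-Markov strategies and the external citation, at the price of checking that the induction behind~(\ref{eq:StochGK:StochHK}) never uses the semi-Markov property --- which is indeed the case, since each step of that induction is pathwise and only uses finiteness of the support of $\stratx_i(\hat{\rho}s)$. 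Two further remarks. First, your observation that the terminal states of $\StochGK$ and $\StochHK$ coincide is a point the paper silently glosses over; strictly only the inclusion $T_{\StochGK}\subseteq T_{\StochHK}$ is needed for item~(\ref{prop:stopping:irreducible:StochGK:StochHK:i}), and that is the trivial direction, but making it explicit is a genuine improvement. Second, for item~(\ref{prop:stopping:irreducible:StochGK:StochHK:ii}) be slightly careful with the phrasing ``$>0$'': irreducibility is defined as positivity of an \emph{infimum}, so you should note that $\Prob^{\strat_\maxplay,\strat_\minplay}_{\StochHK,s}(\Finally s') = \Prob^{\stratx_\maxplay,\stratx_\minplay}_{\StochGK,s}(\Finally s') \geq \inf_{\strat'_\minplay}\inf_{\strat'_\maxplay}\Prob^{\strat'_{\maxplay},\strat'_{\minplay}}_{\StochGK,s}(\Finally s') > 0$ gives a lower bound \emph{uniform} in $(\strat_\maxplay,\strat_\minplay)$, so the infimum over $\StochHK$-strategies is itself positive; pointwise positivity alone would not suffice.
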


Notice that by fixing one strategy in $\StochHK$ to be the memoryless,
the remaining structure is a Markov decision process.  Then the
statements in the following proposition are consequences of standard
results in MDP~\cite{Puterman94}.

\begin{proposition}\label{prop:mdp:results}
  For all
  $\starredstrat_\maxplay\in\DetMemorylessStrats{\StochHK,\maxplay}$ and
  $\starredstrat_\minplay\in\DetMemorylessStrats{\StochHK,\minplay}$,
  \begin{enumerate}
  \item\label{prop:mdp:results:i}%
    $
    \sup_{\strat_\maxplay\in\SemiMarkovStrats{\StochHK,\maxplay}}\Prob^{\strat_{\maxplay},\starredstrat_{\minplay}}_{\StochHK,s}(\Finally C)
    =
    \sup_{\strat_\maxplay\in\DetMemorylessStrats{\StochHK,\maxplay}}\Prob^{\strat_{\maxplay},\starredstrat_{\minplay}}_{\StochHK,s}(\Finally C)$;
  \item\label{prop:mdp:results:ii}%
    $
    \inf_{\strat_\minplay\in\SemiMarkovStrats{\StochHK,\minplay}}\Prob^{\starredstrat_{\maxplay},\strat_{\minplay}}_{\StochHK,s}(\Finally C)
    =
    \inf_{\strat_\minplay\in\DetMemorylessStrats{\StochHK,\minplay}}\Prob^{\starredstrat_{\maxplay},\strat_{\minplay}}_{\StochHK,s}(\Finally C)$;
  \item\label{prop:mdp:results:iii}%
    $
    \sup_{\strat_\maxplay\in\SemiMarkovStrats{\StochHK,\maxplay}}\Expect^{\strat_{\maxplay},\starredstrat_{\minplay}}_{\StochHK,s}(\GRewards)
    =
    \sup_{\strat_\maxplay\in\DetMemorylessStrats{\StochHK,\maxplay}}\Expect^{\strat_{\maxplay},\starredstrat_{\minplay}}_{\StochHK,s}(\GRewards)$,
    provided
    $\Expect^{\strat_{\maxplay},\starredstrat_{\minplay}}_{\StochHK,s}(\GRewards)$
    is defined for all
    $\strat_\maxplay\in\SemiMarkovStrats{\StochHK,\maxplay}$; and
  \item\label{prop:mdp:results:iv}%
    $
    \inf_{\strat_\minplay\in\SemiMarkovStrats{\StochHK,\minplay}}\Expect^{\starredstrat_{\maxplay},\strat_{\minplay}}_{\StochHK,s}(\GRewards)
    =
    \inf_{\strat_\minplay\in\DetMemorylessStrats{\StochHK,\minplay}}\Expect^{\starredstrat_{\maxplay},\strat_{\minplay}}_{\StochHK,s}(\GRewards)$,
    provided
    $\Expect^{\starredstrat_{\maxplay},\strat_{\minplay}}_{\StochHK,s}(\GRewards)$
    is defined for all
    $\strat_\minplay\in\SemiMarkovStrats{\StochHK,\minplay}$.
  \end{enumerate}
\end{proposition}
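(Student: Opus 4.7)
The plan is to reduce each item to a classical MDP result. Fix any $\starredstrat_\minplay \in \DetMemorylessStrats{\StochHK,\minplay}$. Because this strategy is memoryless and deterministic, composing $\StochHK$ with $\starredstrat_\minplay$ yields a finite-state Markov decision process $\StochHK^{\starredstrat_\minplay}$ in which the only remaining non-determinism sits at $\maxplay$-states: every $\minplay$-state $s$ has its enabled set collapsed to the single action prescribed by $\starredstrat_\minplay(s)$, while the reachability target $C$ and the reward function $\reward$ are inherited unchanged. Symmetrically, fixing $\starredstrat_\maxplay \in \DetMemorylessStrats{\StochHK,\maxplay}$ produces an MDP for the $\minplay$-player.

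For items~\ref{prop:mdp:results:i} and~\ref{prop:mdp:results:iii}, I would invoke Puterman's classical optimality theorems on the MDP $\StochHK^{\starredstrat_\minplay}$. They guarantee the existence of a deterministic memoryless strategy $\hat{\strat}_\maxplay \in \DetMemorylessStrats{\StochHK,\maxplay}$ attaining the supremum of $\Prob^{\cdot,\starredstrat_\minplay}_{\StochHK,s}(\Finally C)$ (respectively $\Expect^{\cdot,\starredstrat_\minplay}_{\StochHK,s}[\GRewards]$) over \emph{all} history-dependent randomized strategies, with the ``provided $\ldots$ is defined'' clause covering total, discounted, and average reward uniformly via the corresponding integrability/stopping/unichain hypotheses of Puterman. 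Since
\[
\DetMemorylessStrats{\StochHK,\maxplay}\subseteq\SemiMarkovStrats{\StochHK,\maxplay}\subseteq\Strategies{\StochHK,\maxplay},
\]
the supremum over semi-Markov strategies is sandwiched between the supremum over deterministic memoryless strategies and the supremum over all strategies, and the two outer quantities coincide by Puterman; hence all three are equal, yielding the claimed identity.

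Items~\ref{prop:mdp:results:ii} and~\ref{prop:mdp:results:iv} follow by the mirror argument: in the MDP obtained by fixing $\starredstrat_\maxplay$, minimizing the reachability probability (respectively the expected $\GRewards$) again admits a deterministic memoryless optimal strategy---either by the dual form of Puterman's theorems or by negating the objective and re-applying the maximization result---and the analogous sandwich collapses the infimum over semi-Markov strategies to the infimum over deterministic memoryless ones.

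The main obstacle is bookkeeping rather than substance: one must verify that a semi-Markov strategy for the free player in $\StochHK$ induces a semi-Markov strategy on the reduced MDP $\StochHK^{\starredstrat_\minplay}$---immediate since only the opponent's choices are collapsed and the step index is unaffected---and that the ``provided $\ldots$ is defined'' clauses in items~\ref{prop:mdp:results:iii} and~\ref{prop:mdp:results:iv} align with the exact integrability hypotheses under which Puterman's memoryless-optimality theorems for total, discounted, and average reward hold (noting that Proposition~\ref{prop:stopping:irreducible:StochGK:StochHK} supplies stopping/irreducibility on $\StochHK$ when needed). Once this alignment is made explicit, the classical MDP results apply verbatim and no further argument is required.
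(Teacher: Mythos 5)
Your proposal matches the paper's own treatment: the paper likewise observes that fixing one player's deterministic memoryless strategy in the finite game $\StochHK$ leaves a finite MDP, and then cites standard MDP results (Puterman) for the existence of deterministic memoryless optimal strategies, which collapses the supremum/infimum over the intermediate semi-Markov class exactly via the sandwich $\DetMemorylessStrats{\StochHK,i}\subseteq\SemiMarkovStrats{\StochHK,i}\subseteq\Strategies{\StochHK,i}$ you describe. The argument is correct and essentially identical to the paper's.
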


We are now in conditions to present our main result.  The following
theorem is two folded.  On the one hand, it states that the polytopal
stochastic games of all quantitative objectives of interest in this
paper --namely, quantitative reachability, expected total accumulated
reward, expected discounted accumulated rewards, and expected average
rewards-- are determined.
%
%
On the other hand, it states that these objectives for PSG can be
equivalently solved in its extreme interpretation.

\begin{theorem}\label{th:determinacy:and:discretazation}%
  Let $\StochGK$ and $\StochHK$ be respectively the interpretation and
  the extreme interpretation of $\StochK$.  Then,
  \begin{enumerate}
  \item\label{th:determinacy:and:discretazation:i}%
    $\displaystyle
    \inf_{\strat_\minplay\in\Strategies{\StochGK,\minplay}}\sup_{\strat_\maxplay\in\Strategies{\StochGK,\maxplay}}\Prob^{\strat_{\maxplay},\strat_{\minplay}}_{\StochGK,s}(\Finally C)
    =
    \inf_{\strat_\minplay\in\DetMemorylessStrats{\StochHK,\minplay}}\sup_{\strat_\maxplay\in\DetMemorylessStrats{\StochHK,\maxplay}}\Prob^{\strat_{\maxplay},\strat_{\minplay}}_{\StochHK,s}(\Finally C) = {}$\newline
    \mbox{}\hfill
    $\displaystyle
    {} =
    \sup_{\strat_\maxplay\in\DetMemorylessStrats{\StochHK,\maxplay}}\inf_{\strat_\minplay\in\DetMemorylessStrats{\StochHK,\minplay}}\Prob^{\strat_{\maxplay},\strat_{\minplay}}_{\StochHK,s}(\Finally C)
    =
    \sup_{\strat_\maxplay\in\Strategies{\StochGK,\maxplay}}\inf_{\strat_\minplay\in\Strategies{\StochGK,\minplay}}\Prob^{\strat_{\maxplay},\strat_{\minplay}}_{\StochGK,s}(\Finally C)$\newline
    for all $C\subseteq\sgnodes$; and
  \item\label{th:determinacy:and:discretazation:ii}%
    $\displaystyle
    \inf_{\strat_\minplay\in\Strategies{\StochGK,\minplay}}\sup_{\strat_\maxplay\in\Strategies{\StochGK,\maxplay}}\Expect^{\strat_{\maxplay},\strat_{\minplay}}_{\StochGK,s}(\GRewards)
    =
    \inf_{\strat_\minplay\in\DetMemorylessStrats{\StochHK,\minplay}}\sup_{\strat_\maxplay\in\DetMemorylessStrats{\StochHK,\maxplay}}\Expect^{\strat_{\maxplay},\strat_{\minplay}}_{\StochHK,s}(\GRewards)
    = {}$\newline
    \mbox{}\hfill
    $\displaystyle
    {} =
    \sup_{\strat_\maxplay\in\DetMemorylessStrats{\StochHK,\maxplay}}\inf_{\strat_\minplay\in\DetMemorylessStrats{\StochHK,\minplay}}\Expect^{\strat_{\maxplay},\strat_{\minplay}}_{\StochHK,s}(\GRewards)
    =
    \sup_{\strat_\maxplay\in\Strategies{\StochGK,\maxplay}}\inf_{\strat_\minplay\in\Strategies{\StochGK,\minplay}}\Expect^{\strat_{\maxplay},\strat_{\minplay}}_{\StochGK,s}(\GRewards)$,\newline
    provided:
    \begin{enumerate*}
    \item\label{th:determinacy:and:discretazation:ii:cond:total}%
      $\StochGK$ is almost surely stopping whenever
      $\GRewards=\TRewards$, and
    \item\label{th:determinacy:and:discretazation:ii:cond:average}%
      $\StochGK$ is irreducble whenever $\GRewards=\ARewards$.
    \end{enumerate*}
  \end{enumerate}
\end{theorem}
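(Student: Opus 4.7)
The plan is to sandwich $V_1 := \inf_{\strat_\minplay}\sup_{\strat_\maxplay}(\cdot)_{\StochGK}$ and $V_4 := \sup_{\strat_\maxplay}\inf_{\strat_\minplay}(\cdot)_{\StochGK}$ between the common value $V^\circ$ achieved in the finite game $\StochHK$ by optimal deterministic memoryless strategies, and then close the chain via the weak minimax inequality $V_4 \leq V_1$. Since $\StochHK$ has finite state and action spaces, standard results on finite stochastic games (Condon, Shapley, Filar--Vrieze, Puterman) yield determinacy and a pair of optimal deterministic memoryless strategies $\strat^\circ_\maxplay,\strat^\circ_\minplay$ in $\StochHK$, so that $V^\circ := \inf_{\DetMemorylessStrats{\StochHK,\minplay}}\sup_{\DetMemorylessStrats{\StochHK,\maxplay}} = \sup_{\DetMemorylessStrats{\StochHK,\maxplay}}\inf_{\DetMemorylessStrats{\StochHK,\minplay}}$ for each objective under consideration. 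For total (resp.\ average) rewards this invocation requires $\StochHK$ to be stopping (resp.\ irreducible), which Proposition~\ref{prop:stopping:irreducible:StochGK:StochHK} derives from the theorem's hypotheses on $\StochGK$.

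To establish $V_1 \leq V^\circ$ for reachability, I would lift $\strat^\circ_\minplay$ to the extreme deterministic memoryless (hence semi-Markov) strategy $\tilde\strat_\minplay$ in $\StochGK$ through the $\mathsf{x}$-construction defined just above equation~(\ref{eq:StochGK:StochHK}), and compute
\begin{align*}
  \sup_{\strat_\maxplay \in \Strategies{\StochGK,\maxplay}}\Prob^{\strat_\maxplay, \tilde\strat_\minplay}_{\StochGK,s}(\Finally C)
  &\stackrel{(a)}{=} \sup_{\strat_\maxplay \in \SemiMarkovStrats{\StochGK,\maxplay}}\Prob^{\strat_\maxplay, \tilde\strat_\minplay}_{\StochGK,s}(\Finally C) \\
  &\stackrel{(b)}{=} \sup_{\strat_\maxplay \in \XSemiMarkovStrats{\StochGK,\maxplay}}\Prob^{\strat_\maxplay, \tilde\strat_\minplay}_{\StochGK,s}(\Finally C) \\
  &\stackrel{(c)}{=} \sup_{\strat_\maxplay \in \SemiMarkovStrats{\StochHK,\maxplay}}\Prob^{\strat_\maxplay, \strat^\circ_\minplay}_{\StochHK,s}(\Finally C) \\
  &\stackrel{(d)}{=} \sup_{\strat_\maxplay \in \DetMemorylessStrats{\StochHK,\maxplay}}\Prob^{\strat_\maxplay, \strat^\circ_\minplay}_{\StochHK,s}(\Finally C) \ = \ V^\circ,
\end{align*}
where (a) is Lemma~\ref{lm:semimarkov}.\ref{lm:semimarkov:ii} (applied with the semi-Markov $\tilde\strat_\minplay$), (b) is Lemma~\ref{lm:xsemimarkov}.\ref{lm:xsemimarkov:ii}, (c) is the correspondence~(\ref{eq:StochGK:StochHK}), (d) is Proposition~\ref{prop:mdp:results}.\ref{prop:mdp:results:i}, and the final equality is the optimality of $\strat^\circ_\minplay$. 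A symmetric argument starting from $\strat^\circ_\maxplay$ (using parts \ref{lm:semimarkov:ii}, \ref{lm:xsemimarkov:ii}, and \ref{prop:mdp:results:ii}) yields $V^\circ \leq V_4$. Combined with $V_4 \leq V_1$ and the finite-game equalities on $\StochHK$, all four values coincide, which establishes item~\ref{th:determinacy:and:discretazation:i}.

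For item~\ref{th:determinacy:and:discretazation:ii} I would replicate the chain with $\Prob(\Finally C)$ replaced by $\Expect[\GRewards]$, invoking in steps (a), (b), (d) the parts \ref{lm:semimarkov:iii}, \ref{lm:xsemimarkov:iii}, and \ref{prop:mdp:results:iii}--\ref{prop:mdp:results:iv} of the respective results; step (c) then follows from~(\ref{eq:StochGK:StochHK}) together with Lemma~\ref{lm:encoding:reach:and:expectation}.\ref{lm:encoding:reach:and:expectation:ii}, which expresses $\Expect[\GRewards]$ as a limit depending only on the step-wise reach probabilities and the reward function, so that equal reach probabilities force equal expectations. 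The main obstacle is ensuring that steps (c)--(d) remain valid for the total and average criteria: the standard finite-game theorems on $\StochHK$ require the stopping or irreducibility condition, and the expectations in $\StochGK$ must themselves be finite. Both concerns reduce to the inheritance of the stopping/irreducibility provisos from $\StochGK$ to $\StochHK$ (Proposition~\ref{prop:stopping:irreducible:StochGK:StochHK}) and to the convergence of the limit in Lemma~\ref{lm:encoding:reach:and:expectation}.\ref{lm:encoding:reach:and:expectation:ii} under those provisos. Once these conditions are tracked, the reward argument closes exactly as for reachability.
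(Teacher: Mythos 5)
Your proposal is correct in substance and uses exactly the paper's ingredients in exactly the paper's roles: Lemma~\ref{lm:semimarkov} to drop to semi-Markov strategies, Lemma~\ref{lm:xsemimarkov}/Corollary~\ref{cor:xsemimarkov} to drop to extreme semi-Markov ones, the correspondence~(\ref{eq:StochGK:StochHK}) (i.e.\ Proposition~\ref{prop:infsup:supinf:StochGK:StochHK}) to move to $\StochHK$, Proposition~\ref{prop:mdp:results} to reduce to deterministic memoryless strategies, the classical finite-game determinacy results for step $(*)$, and Proposition~\ref{prop:stopping:irreducible:StochGK:StochHK} to transfer the stopping/irreducibility provisos. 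The only real difference is organizational: the paper keeps both quantifiers nested and runs one long chain of fourteen (in)equalities ending with the weak minimax inequality, whereas you instantiate the inner quantifier at an explicit optimal deterministic memoryless witness $\strat^\circ_\minplay$ (resp.\ $\strat^\circ_\maxplay$) of $\StochHK$ and lift it to $\StochGK$. That reorganization is slightly cleaner to read but demands marginally more than the stated lemmas deliver at your step~(b): Lemma~\ref{lm:xsemimarkov} is proved only for \emph{simplicial} games, and its PSG version, Corollary~\ref{cor:xsemimarkov}, re-chooses \emph{both} strategies, so it does not directly give ``for every semi-Markov $\strat_\maxplay$ there is an extreme semi-Markov $\starredstrat_\maxplay$ with the same value against the \emph{fixed} $\tilde\strat_\minplay$'' in a general PSG. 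This is easily repaired --- either run your whole argument on the triangulated SSG $\StochK'$ of Proposition~\ref{prop:PSG:SSG} (which has the same extreme interpretation $\StochHK$ since $\Triang$ is vertex-preserving), or revert to the nested inf/sup formulation at that single step --- but it should be said explicitly; the paper's chain-of-quantifiers presentation is precisely what absorbs this point.
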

\begin{proof}
  For item~\ref{th:determinacy:and:discretazation:ii} we calculate as follows:
  \begin{align*}
    \textstyle
    \inf_{\strat_\minplay\in\Strategies{\StochGK,\minplay}}\sup_{\strat_\maxplay\in\Strategies{\StochGK,\maxplay}}\Expect^{\strat_{\maxplay},\strat_{\minplay}}_{\StochGK,s}(\GRewards)
    \hspace{-14em} &
    \\
    & \textstyle \leq
    \inf_{\strat_\minplay\in\SemiMarkovStrats{\StochGK,\minplay}}\sup_{\strat_\maxplay\in\Strategies{\StochGK,\maxplay}}\Expect^{\strat_{\maxplay},\strat_{\minplay}}_{\StochGK,s}(\GRewards)
    \tag{$\SemiMarkovStrats{\StochGK,\minplay}\subseteq\Strategies{\StochGK,\minplay}$}\\
    & \textstyle =
    \inf_{\strat_\minplay\in\SemiMarkovStrats{\StochGK,\minplay}}\sup_{\strat_\maxplay\in\SemiMarkovStrats{\StochGK,\maxplay}}\Expect^{\strat_{\maxplay},\strat_{\minplay}}_{\StochGK,s}(\GRewards)
    \tag{by Lemma~\ref{lm:semimarkov}.\ref{lm:semimarkov:iii}}\\
    & \textstyle =
    \inf_{\strat_\minplay\in\XSemiMarkovStrats{\StochGK,\minplay}}\sup_{\strat_\maxplay\in\XSemiMarkovStrats{\StochGK,\maxplay}}\Expect^{\strat_{\maxplay},\strat_{\minplay}}_{\StochGK,s}(\GRewards)
    \tag{by Corollary~\ref{cor:xsemimarkov}.\ref{cor:xsemimarkov:ii}}\\
    & \textstyle =
    \inf_{\strat_\minplay\in\SemiMarkovStrats{\StochHK,\minplay}}\sup_{\strat_\maxplay\in\SemiMarkovStrats{\StochHK,\maxplay}}\Expect^{\strat_{\maxplay},\strat_{\minplay}}_{\StochHK,s}(\GRewards)
    \tag{by Prop.~\ref{prop:infsup:supinf:StochGK:StochHK}.\ref{prop:infsup:supinf:StochGK:StochHK:iii}}\\
    & \textstyle \leq
    \inf_{\strat_\minplay\in\DetMemorylessStrats{\StochHK,\minplay}}\sup_{\strat_\maxplay\in\SemiMarkovStrats{\StochHK,\maxplay}}\Expect^{\strat_{\maxplay},\strat_{\minplay}}_{\StochHK,s}(\GRewards)
    \tag{$\DetMemorylessStrats{\StochHK,\minplay}\subseteq\SemiMarkovStrats{\StochHK,\minplay}$}\\
    & \textstyle =
    \inf_{\strat_\minplay\in\DetMemorylessStrats{\StochHK,\minplay}}\sup_{\strat_\maxplay\in\DetMemorylessStrats{\StochHK,\maxplay}}\Expect^{\strat_{\maxplay},\strat_{\minplay}}_{\StochHK,s}(\GRewards)
    \tag{by Prop.~\ref{prop:mdp:results}.\ref{prop:mdp:results:iii}}\\
    & \textstyle =
    \sup_{\strat_\maxplay\in\DetMemorylessStrats{\StochHK,\maxplay}}\inf_{\strat_\minplay\in\DetMemorylessStrats{\StochHK,\minplay}}\Expect^{\strat_{\maxplay},\strat_{\minplay}}_{\StochHK,s}(\GRewards)
    \tag{*}\\
    & \textstyle =
    \sup_{\strat_\maxplay\in\DetMemorylessStrats{\StochHK,\maxplay}}\inf_{\strat_\minplay\in\SemiMarkovStrats{\StochHK,\minplay}}\Expect^{\strat_{\maxplay},\strat_{\minplay}}_{\StochHK,s}(\GRewards)
    \tag{by Prop.~\ref{prop:mdp:results}.\ref{prop:mdp:results:iv}}\\
    & \textstyle \leq
    \sup_{\strat_\maxplay\in\SemiMarkovStrats{\StochHK,\maxplay}}\inf_{\strat_\minplay\in\SemiMarkovStrats{\StochHK,\minplay}}\Expect^{\strat_{\maxplay},\strat_{\minplay}}_{\StochHK,s}(\GRewards)
    \tag{$\DetMemorylessStrats{\StochHK,\maxplay}\subseteq\SemiMarkovStrats{\StochHK,\maxplay}$}\\
    & \textstyle =
    \sup_{\strat_\maxplay\in\XSemiMarkovStrats{\StochGK,\maxplay}}\inf_{\strat_\minplay\in\XSemiMarkovStrats{\StochGK,\minplay}}\Expect^{\strat_{\maxplay},\strat_{\minplay}}_{\StochGK,s}(\GRewards)
    \tag{by Prop.~\ref{prop:infsup:supinf:StochGK:StochHK}.\ref{prop:infsup:supinf:StochGK:StochHK:iv}}\\
    & \textstyle =
    \sup_{\strat_\maxplay\in\SemiMarkovStrats{\StochGK,\maxplay}}\inf_{\strat_\minplay\in\SemiMarkovStrats{\StochGK,\minplay}}\Expect^{\strat_{\maxplay},\strat_{\minplay}}_{\StochGK,s}(\GRewards)
    \tag{by Corollary~\ref{cor:xsemimarkov}.\ref{cor:xsemimarkov:ii}}\\
    & \textstyle =
    \sup_{\strat_\maxplay\in\SemiMarkovStrats{\StochGK,\maxplay}}\inf_{\strat_\minplay\in\Strategies{\StochGK,\minplay}}\Expect^{\strat_{\maxplay},\strat_{\minplay}}_{\StochGK,s}(\GRewards)
    \tag{by Lemma~\ref{lm:semimarkov}.\ref{lm:semimarkov:iii}}\\
    & \textstyle 
    \leq
    \sup_{\strat_\maxplay\in\Strategies{\StochGK,\maxplay}}\inf_{\strat_\minplay\in\Strategies{\StochGK,\minplay}}\Expect^{\strat_{\maxplay},\strat_{\minplay}}_{\StochGK,s}(\GRewards)
    \tag{$\SemiMarkovStrats{\StochGK,\maxplay}\subseteq\Strategies{\StochGK,\maxplay}$}\\
    & \textstyle \leq
    \inf_{\strat_\minplay\in\Strategies{\StochGK,\minplay}}\sup_{\strat_\maxplay\in\Strategies{\StochGK,\maxplay}}\Expect^{\strat_{\maxplay},\strat_{\minplay}}_{\StochGK,s}(\GRewards)
    \tag{by prop. of $\sup$ and $\inf$}
  \end{align*}
  Since the last term is equal to the first term in the calculation,
  item~\ref{th:determinacy:and:discretazation:ii} is concluded.
  In particular, step (*) is justified as follows, depending on
  $\GRewards$:
  %
    For $\GRewards=\TRewards$, (*) follows by
    \cite[Theorem~4.2.6]{FilarV96} since, by
    Proposition~\ref{prop:stopping:irreducible:StochGK:StochHK}.\ref{prop:stopping:irreducible:StochGK:StochHK:i},
    the game $\StochHK$ is also almost surely stopping.
    For $\GRewards=\DRewards{\gamma}$ (*) follows by
    \cite[Theorem~4.3.2]{FilarV96}.
    For $\GRewards=\ARewards$ (*) follows by
    \cite[Theorem~5.1.5]{FilarV96} since, by
    Proposition~\ref{prop:stopping:irreducible:StochGK:StochHK}.\ref{prop:stopping:irreducible:StochGK:StochHK:ii},
    the game $\StochHK$ is also irreducible.

  Item~\ref{th:determinacy:and:discretazation:i} of the theorem
  follows similarly. In each step, propositions, lemmas and
  corollaries are the same only differing on the item, while
  step~(*) follows from~\cite[Lemma 6]{Condon92}.
  \qed
\end{proof}

Since extreme interpretations are finite, the values of the different
games can be calculated following known
algorithms~\cite{Condon92,FilarV96}.  Thus,
Theorem~\ref{th:determinacy:and:discretazation} immediately provides
an algorithmic solution for PSGs.

The number of vertices of a polytope grows exponentially in the
dimension of the polytope~\cite{KaibelP03}. More precisely if $d$ is
the dimension of a polytope $K$ and $m$ is the number of inequalities
that defines it, $\vertices(K)\sim\Omega(m^{\lfloor{d/2}\rfloor})$.
This implies that the extreme interpretation $\StochHK$ grows
exponentially on the largest size of the support sets of the
distributions involved in the original PSG $\StochK$ which we expect
not to be too large.  (In our example of Sec.~\ref{sec:roborta},
$\lfloor{d/2}\rfloor=2$)

Condon~\cite{Condon92} showed that deciding reachability in stochastic
games is in $\NP \cap \coNP$.  Despite the exponential grow, this is
still our case as we show in the following.
Let $\val{s}{\StochK}$ denote the value of the game at state $s$, that
is, it is equal to
$\sup_{\strat_{\maxplay} \in \Strategies{\maxplay}} \inf_{\strat_{\minplay} \in \Strategies{\minplay}} \Prob^{\strat_{\maxplay},\strat_{\minplay}}_{\StochGK,s}(\Finally\goal)$,
or
$\sup_{\strat_{\maxplay} \in \Strategies{\maxplay}} \inf_{\strat_{\minplay} \in \Strategies{\minplay}} \Expect^{\strat_{\maxplay},\strat_{\minplay}}_{\StochGK,s}[\GRewards]$.
The problem is then to decide whether $\val{s}{\StochK} \geq q$, for a
given $q \in \mathbb{Q}$ and $s \in \psgnodes$
Since for all the cases (total reward, discounted reward, average
reward and reachability objectives under the respective conditions)
the value $\val{s}{\StochK}$ of the game can be achieved with an extreme
memoryless and deterministic strategies, we can reason as follows:
\begin{enumerate*}[(i)]
\item%
  guess a memoryless and deterministic strategy for each player,
\item%
  on the resulting Markov chain compute the corresponding measure
  (i.e. total reward, discounted reward, average reward or
  reachability) on the respective set of linear equations, which can
  be done in polynomial time (for $\ARewards$ an extra linear
  summation is needed)~\cite{Kulkarni17}, 
\item%
  verify if it is a fixpoint of Bellman equations (for reachability, discounted, or total reward), or a fixpoint of the Alg.~5.1.1 of \cite{FilarV96}, in the case of average reward, and
\item  check whether $\val{s}{\StochK} \geq q$.
\end{enumerate*}
This puts our problem in $\NP$.
With the same process we can check whether $\val{s}{\StochK} < q$
which puts the problem also in $\coNP$.  Hence we have the next
theorem.
\begin{theorem}
  For any PSG $\StochK$, $q \in \mathbb{Q}$, and $s \in \psgnodes$,
  the problem of deciding whether $\val{s}{\StochK} \geq q$ is in
  ${\NP} \cap {\coNP}$.
  %
  For $\GRewards\in\{\TRewards, \ARewards\}$
  the decision problem is restricted to
  $\StochGK$ being almost surely stopping and irreducible, respectively.
\end{theorem}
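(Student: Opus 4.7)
The plan is to design a single nondeterministic polynomial-time procedure for each of the four objectives; the same procedure applied with the test reversed yields the $\coNP$ algorithm. The key enabling fact is Theorem~\ref{th:determinacy:and:discretazation}: although $\StochGK$ has uncountably many actions and the extreme interpretation $\StochHK$ may have exponentially many, the value $\val{s}{\StochK}$ is already attained by a pair of extreme memoryless deterministic strategies in $\StochHK$.

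The heart of the algorithm is the observation that such a strategy admits a short certificate. For each $s\in\psgnodes_i$, player $i$'s strategy picks a polytope $K\in\psgtrans(s)$ (from a polynomially-sized list) together with a vertex $v\in\vertices(K)$. By standard polyhedral theory each vertex of a polytope in $\DPoly(\psgnodes)$ is uniquely determined by a maximal subsystem of linearly independent active inequalities of $K$, hence by Cramer's rule it has bit-length polynomial in the encoding of $K$. Therefore a complete extreme memoryless deterministic strategy for each player has polynomial size.

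First I would nondeterministically guess such strategies $\strat_\maxplay$ and $\strat_\minplay$; the induced Markov chain has state space $\psgnodes$ and a single outgoing distribution at each state, supported on at most $|\psgnodes|$ states, so it has polynomial size. Next I would compute the value vector $\bvec{u}\in\Reals^{\psgnodes}$ on this Markov chain by solving its characteristic linear system, using the augmentation of \cite[Alg.~5.1.1]{FilarV96} in the average-reward case; all solutions are polynomial~\cite{Kulkarni17}. Finally I would verify that $\bvec{u}$ is a fixpoint of the corresponding Bellman operator. For the reachability objective and $s\in\psgnmax$ this amounts to checking
\[
u(s) \;=\; \max_{K\in\psgtrans(s)}\ \max_{\mu\in K}\ \sum_{s'\in\psgnodes}\mu(s')\,u(s'),
\]
with the dual equation for $s\in\psgnmin$ and analogous equations (adding the appropriate reward terms) for the reward objectives. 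Even though $K$ can carry exponentially many vertices, the inner maximisation is a linear program over $K$ and is therefore solvable in polynomial time; the outer maximisation is over the polynomially-sized set $\psgtrans(s)$. The procedure accepts if the fixpoint holds and $u(s)\geq q$.

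The main obstacle I anticipate is precisely the exponential number of vertices of $K$: a naive verification that enumerates them cannot run in polynomial time. This is resolved by recasting the vertex-maximisation as a linear program over $K$, exploiting the fact that the extrema of a linear function over a polytope are automatically attained at vertices, so the LP optimum coincides with the maximum over $\vertices(K)$. Correctness then follows from Theorem~\ref{th:determinacy:and:discretazation} together with the standard Bellman characterisations of optimality cited in its proof (\cite{Condon92} for reachability, \cite{FilarV96} for the reward objectives): if a guessed pair of memoryless deterministic strategies induces a value vector satisfying the Bellman fixpoint, then it realises the game value. Replacing the final test with $u(s)<q$ and playing the role of the opposite player yields the $\coNP$ procedure, and the conditions in the theorem statement (almost-sure stopping for $\TRewards$, irreducibility for $\ARewards$) are exactly those needed to invoke the cited classical fixpoint results.
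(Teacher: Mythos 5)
Your proposal is correct and follows essentially the same route as the paper: guess a pair of extreme memoryless deterministic strategies (justified by Theorem~\ref{th:determinacy:and:discretazation}), solve the induced Markov chain by a polynomial-size linear system, verify the Bellman fixpoint (resp.\ the fixpoint of Alg.~5.1.1 of \cite{FilarV96} for average reward), and test the threshold, with the $\coNP$ direction obtained by reversing the final test. You additionally make explicit two points the paper leaves implicit --- that each polytope vertex has polynomial bit-length, so the guessed strategies form a short certificate, and that the Bellman fixpoint can be checked by linear programming over each polytope rather than by enumerating its exponentially many vertices --- both of which are genuinely needed for the stated complexity bound.
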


\section{Concluding remarks}

We believe that polytopal games may have several applications in practice,  particularly, in scenarios where the probabilities are not exact but can be characterized with linear equations.  We observe that one may expect that  the number of vertices of the polytopes keep small in practical examples, hence the game discretization may have no impact on the runtime of a tool implementing the approach described in the paper.  However,  we leave as further work the implementation of such a tool and an in-depth evaluation of it.

In addition, it would be also be of interest to explore other types of objectives, including $\omega$-regular objectives as already study for standard stochastic games in~\cite{ChatterjeeAH05} or even solving stochastic games for conditional probabilities of temporal properties or conditional expectations of rewards models as widely studied by Christel Baier and her team in the context of Markov decision processes~\cite{BaierKKM14,Baier0KW17,MarckerB0K17,PiribauerB19}.


\bibliographystyle{splncs04}
\bibliography{references}

\begin{thebibliography}{10}
\providecommand{\url}[1]{\texttt{#1}}
\providecommand{\urlprefix}{URL }
\providecommand{\doi}[1]{https://doi.org/#1}

\bibitem{AkshayBKMT16}
Akshay, S., Bouyer, P., Krishna, S.N., Manasa, L., Trivedi, A.: Stochastic
  timed games revisited. In: Faliszewski, P., Muscholl, A., Niedermeier, R.
  (eds.) 41st International Symposium on Mathematical Foundations of Computer
  Science, {MFCS} 2016. LIPIcs, vol.~58, pp. 8:1--8:14. Schloss Dagstuhl -
  Leibniz-Zentrum f{\"{u}}r Informatik (2016). \doi{10.4230/LIPICS.MFCS.2016.8}

\bibitem{AshDoleans99}
Ash, R.B., {Dol\'eans-Dade}, C.A.: Probability and Measure Theory.
  Harcourt/Academic Press, 2nd edn. (1999)

\bibitem{DBLP:conf/csfw/AslanyanNP16}
Aslanyan, Z., Nielson, F., Parker, D.: Quantitative verification and synthesis
  of attack-defence scenarios. In: {IEEE} 29th Computer Security Foundations
  Symposium, {CSF} 2016. pp. 105--119. {IEEE} Computer Society (2016).
  \doi{10.1109/CSF.2016.15}

\bibitem{BaierK08}
Baier, C., Katoen, J.: Principles of model checking. {MIT} Press (2008)

\bibitem{BaierKKM14}
Baier, C., Klein, J., Kl{\"{u}}ppelholz, S., M{\"{a}}rcker, S.: Computing
  conditional probabilities in {M}arkovian models efficiently. In:
  {\'{A}}brah{\'{a}}m, E., Havelund, K. (eds.) Tools and Algorithms for the
  Construction and Analysis of Systems - 20th International Conference, {TACAS}
  2014, Held as Part of the European Joint Conferences on Theory and Practice
  of Software, {ETAPS} 2014, Grenoble, France, April 5-13, 2014. Proceedings.
  Lecture Notes in Computer Science, vol.~8413, pp. 515--530. Springer (2014).
  \doi{10.1007/978-3-642-54862-8\_43}

\bibitem{Baier0KW17}
Baier, C., Klein, J., Kl{\"{u}}ppelholz, S., Wunderlich, S.: Maximizing the
  conditional expected reward for reaching the goal. In: Legay, A., Margaria,
  T. (eds.) Tools and Algorithms for the Construction and Analysis of Systems -
  23rd International Conference, {TACAS} 2017, Proceedings, Part {II}. Lecture
  Notes in Computer Science, vol. 10206, pp. 269--285 (2017).
  \doi{10.1007/978-3-662-54580-5\_16}

\bibitem{BouyerF09}
Bouyer, P., Forejt, V.: Reachability in stochastic timed games. In: Albers, S.,
  Marchetti{-}Spaccamela, A., Matias, Y., Nikoletseas, S.E., Thomas, W. (eds.)
  Automata, Languages and Programming, 36th Internatilonal Colloquium, {ICALP}
  2009, Proceedings, Part {II}. Lecture Notes in Computer Science, vol.~5556,
  pp. 103--114. Springer (2009). \doi{10.1007/978-3-642-02930-1\_9}

\bibitem{CastroDPD23}
Castro, P.F., D'Argenio, P.R., Demasi, R., Putruele, L.: Quantifying masking
  fault-tolerance via fair stochastic games. In: Mezzina, C.A., Caltais, G.
  (eds.) Proceedings Combined 30th International Workshop on Expressiveness in
  Concurrency and 20th Workshop on Structural Operational Semantics,
  {EXPRESS/SOS} 2023, and 20th Workshop on Structural Operational Semantics.
  {EPTCS}, vol.~387, pp. 132--148 (2023). \doi{10.4204/EPTCS.387.10}

\bibitem{ChatterjeeAH05}
Chatterjee, K., de~Alfaro, L., Henzinger, T.A.: The complexity of stochastic
  {R}abin and {S}treett games'. In: Caires, L., Italiano, G.F., Monteiro, L.,
  Palamidessi, C., Yung, M. (eds.) Automata, Languages and Programming, 32nd
  International Colloquium, {ICALP} 2005, Proceedings. Lecture Notes in
  Computer Science, vol.~3580, pp. 878--890. Springer (2005).
  \doi{10.1007/11523468\_71}

\bibitem{ChatterjeeH12}
Chatterjee, K., Henzinger, T.A.: A survey of stochastic {\(\omega\)}-regular
  games. J. Comput. Syst. Sci.  \textbf{78}(2),  394--413 (2012).
  \doi{10.1016/j.jcss.2011.05.002}

\bibitem{DBLP:conf/tacas/ChenFKPS13}
Chen, T., Forejt, V., Kwiatkowska, M.Z., Parker, D., Simaitis, A.: Prism-games:
  {A} model checker for stochastic multi-player games. In: Piterman, N.,
  Smolka, S.A. (eds.) Tools and Algorithms for the Construction and Analysis of
  Systems - 19th International Conference, {TACAS} 2013. Proceedings. Lecture
  Notes in Computer Science, vol.~7795, pp. 185--191. Springer (2013).
  \doi{10.1007/978-3-642-36742-7\_13}

\bibitem{DBLP:conf/qest/ChenKSW13}
Chen, T., Kwiatkowska, M.Z., Simaitis, A., Wiltsche, C.: Synthesis for
  multi-objective stochastic games: An application to autonomous urban driving.
  In: Joshi, K.R., Siegle, M., Stoelinga, M., D'Argenio, P.R. (eds.)
  Quantitative Evaluation of Systems - 10th International Conference, {QEST}
  2013. Proceedings. Lecture Notes in Computer Science, vol.~8054, pp.
  322--337. Springer (2013). \doi{10.1007/978-3-642-40196-1\_28}

\bibitem{Condon92}
Condon, A.: The complexity of stochastic games. Inf. Comput.  \textbf{96}(2),
  203--224 (1992). \doi{10.1016/0890-5401(92)90048-K}

\bibitem{DBLP:journals/tase/FengWHT16}
Feng, L., Wiltsche, C., Humphrey, L.R., Topcu, U.: Synthesis of
  human-in-the-loop control protocols for autonomous systems. {IEEE} Trans
  Autom. Sci. Eng.  \textbf{13}(2),  450--462 (2016).
  \doi{10.1109/TASE.2016.2530623}

\bibitem{FilarV96}
Filar, J., Vrieze, K.: Competitive Markov Decision Processes. Springer-Verlag,
  Berlin, Heidelberg (1996)

\bibitem{Giry82}
Giry, M.: A categorical approach to probability theory. In: Banaschewski, B.
  (ed.) Categorical Aspects of Topology and Analysis. Lecture Notes in
  Mathematics, vol.~915, pp. 68--85. Springer (1982). \doi{10.1007/BFb0092872}

\bibitem{JonssonL91}
Jonsson, B., Larsen, K.G.: Specification and refinement of probabilistic
  processes. In: Proceedings of the 6th Annual Symposium on Logic in Computer
  Science ({LICS}'91). pp. 266--277. {IEEE} Computer Society (1991).
  \doi{10.1109/LICS.1991.151651}

\bibitem{DBLP:conf/qest/Junges0KTZH18}
Junges, S., Jansen, N., Katoen, J., Topcu, U., Zhang, R., Hayhoe, M.M.: Model
  checking for safe navigation among humans. In: McIver, A., Horv{\'{a}}th, A.
  (eds.) Quantitative Evaluation of Systems - 15th International Conference,
  {QEST} 2018, Proceedings. Lecture Notes in Computer Science, vol. 11024, pp.
  207--222. Springer (2018). \doi{10.1007/978-3-319-99154-2\_13}

\bibitem{KaibelP03}
Kaibel, V., Pfetsch, M.E.: Some algorithmic problems in polytope theory. In:
  Joswig, M., Takayama, N. (eds.) Algebra, Geometry, and Software Systems
  [outcome of a Dagstuhl seminar]. pp. 23--47. Springer (2003).
  \doi{10.1007/978-3-662-05148-1\_2}

\bibitem{KozineU02}
Kozine, I., Utkin, L.V.: Interval-valued finite markov chains. Reliab. Comput.
  \textbf{8}(2),  97--113 (2002). \doi{10.1023/A:1014745904458}

\bibitem{Kucera11}
Kucera, A.: Turn-based stochastic games. In: Apt, K.R., Gr{\"{a}}del, E. (eds.)
  Lectures in Game Theory for Computer Scientists, pp. 146--184. Cambridge
  University Press (2011)

\bibitem{Kulkarni17}
Kulkarni, V.G.: Modeling and Analysis of Stochastic Systems. Texts in
  Statistical Science, CRC Press, 3rd edn. (2017)

\bibitem{MarckerB0K17}
M{\"{a}}rcker, S., Baier, C., Klein, J., Kl{\"{u}}ppelholz, S.: Computing
  conditional probabilities: Implementation and evaluation. In: Cimatti, A.,
  Sirjani, M. (eds.) Software Engineering and Formal Methods - 15th
  International Conference, {SEFM} 2017, Proceedings. Lecture Notes in Computer
  Science, vol. 10469, pp. 349--366. Springer (2017).
  \doi{10.1007/978-3-319-66197-1\_22}

\bibitem{McMullen2020}
McMullen, P.: Geometric Regular Polytopes. Encyclopedia of Mathematics and its
  Applications, Cambridge University Press, Cambridge (2020)

\bibitem{PiribauerB19}
Piribauer, J., Baier, C.: Partial and conditional expectations in {M}arkov
  decision processes with integer weights. In: Bojanczyk, M., Simpson, A.
  (eds.) Foundations of Software Science and Computation Structures - 22nd
  International Conference, {FOSSACS} 2019, Proceedings. Lecture Notes in
  Computer Science, vol. 11425, pp. 436--452. Springer (2019).
  \doi{10.1007/978-3-030-17127-8\_25}

\bibitem{DBLP:conf/cav/PuggelliLSS13}
Puggelli, A., Li, W., Sangiovanni{-}Vincentelli, A.L., Seshia, S.A.:
  Polynomial-time verification of {PCTL} properties of mdps with convex
  uncertainties. In: Sharygina, N., Veith, H. (eds.) Computer Aided
  Verification - 25th International Conference. Proceedings. Lecture Notes in
  Computer Science, vol.~8044, pp. 527--542. Springer (2013).
  \doi{10.1007/978-3-642-39799-8\_35}

\bibitem{Puterman94}
Puterman, M.L.: Markov Decision Processes: Discrete Stochastic Dynamic
  Programming. Wiley Series in Probability and Statistics, Wiley (1994).
  \doi{10.1002/9780470316887}

\bibitem{ReinfLearning}
R.~S.~Sutton, A.G.B..: Reinforcement Learning: An Introduction. Bradford Books
  (2018)

\bibitem{SenVA06}
Sen, K., Viswanathan, M., Agha, G.: Model-checking markov chains in the
  presence of uncertainties. In: Hermanns, H., Palsberg, J. (eds.) Tools and
  Algorithms for the Construction and Analysis of Systems, 12th International
  Conference, {TACAS} 2006, Proceedings. Lecture Notes in Computer Science,
  vol.~3920, pp. 394--410. Springer (2006). \doi{10.1007/11691372\_26}

\bibitem{Shapley53}
Shapley, L.S.: Stochastic games. Proc.\ Natl.\ Acad.\ Sci.\ USA
  \textbf{39}(10),  1095--100 (1953). \doi{10.1073/pnas.39.10.1095}

\bibitem{WangWTGS21}
Wang, M., Wang, Z., Talbot, J., Gerdes, J.C., Schwager, M.: Game-theoretic
  planning for self-driving cars in multivehicle competitive scenarios. {IEEE}
  Trans. Robotics  \textbf{37}(4),  1313--1325 (2021).
  \doi{10.1109/TRO.2020.3047521}

\bibitem{Ziegler95}
Ziegler, G.M.: Lectures on polytopes, Graduate texts in mathematics, vol.~152.
  Springer-Verlag, New York (1995)

\end{thebibliography}

\appendix

\section{Full proofs}

\begin{proof}[of Proposition~\ref{prop:PSG:SSG}]
  Let $\StochGK=(\sgnodes,(\sgnmax,\sgnmin),\sgactions,\sgtrans)$ with
  $\sgactions=\left(\bigcup_{s\in\sgnodes}\psgtrans(s)\right)\times\Dist(\sgnodes)$
  and
  $\StochGKp=(\sgnodes,(\sgnmax,\sgnmin),\sgactions',\sgtrans')$ with
  $\sgactions'=\left(\bigcup_{s\in\sgnodes}\psgtrans'(s)\right)\times\Dist(\sgnodes)=\left(\bigcup_{\substack{s\in\sgnodes,~\\K\in\psgtrans(s)}}\Triang(K)\right)\times\Dist(\sgnodes)$,
  be the respective interpretations of $\StochK$ and $\StochK'$.

  To prove item~\ref{item:PSG:SSG:i}, first fix a function
  $f_K:\Triang(K)\to\powerset(K)$ for each polytope
  $K\in\DPoly(\sgnodes)$ satisfying
  \begin{enumerate}[(i)]
  \item%
    $\forall {K'\in\Triang(K)}\colon{f_K(K')\subseteq K'}$,
  \item%
    $\bigcup_{K'\in\Triang(K)}f_K(K') = K$, and
  \item%
    $\forall {K'_1,K'_2\in\Triang(K)}\colon {{f_K(K'_1)\cap f_K(K'_2)\neq\emptyset} \limp {K'_1=K'_2}}$.
  \end{enumerate}
  Thus, $f_K(K')$ is almost the simplex $K'$ but ensuring that
  distributions on the faces of $K'$ are exactly in one of the
  $f_K(K'')$, $K''\in\Triang(K)$.

  Now, let $\strat_{\maxplay}$ and $\strat_{\minplay}$ be a pair of
  strategies for $\StochGK$.  Define $\strat'_{i}$,
  $i\in\{\maxplay,\minplay\}$, for all $\hat{\rho}\in\sgnodes^*$,
  $s\in\sgnodes_{i}$, and $A'\in\Salg_{\sgactions'}$ by
  \begin{equation}\label{eq:def:stratp:ssg}
    \strat'_{i}(\hat{\rho}s)(A') = \sum_{K\in\psgtrans(s)} \sum_{K'\in\Triang(K)} \strat_{i}(\hat{\rho}s)(\{K\}\times({A'\sect{K'}}\cap f_K(K')))
  \end{equation}
  where ${A'\sect{K'}}=\{\mu\mid (K',\mu)\in A'\}$ is the $K'$ section
  of the measurable set $A'$.
  Notice that $f_K$ ensures that the faces of each $K'\in\Triang(K)$
  are considered in exactly one summand of the inner summation
  of~(\ref{eq:def:stratp:ssg}).  Thus, $\strat'_{i}(\hat{\rho})$ is a
  well defined probability measure and hence $\strat'_{\maxplay}$ and
  $\strat'_{\minplay}$ is a pair of strategies for $\StochGKp$.

  It is straightforward to check that if $\strat_{i}$ is memoryless or
  semi-Markov, so is $\strat'_{i}$.  Suppose $\strat_{i}$ is extreme,
  then we have that
  \begin{align}
    \strat'_{i}(\hat{\rho}s)(\{(K',\mu)\in\sgactions'(s)\mid\mu\in\vertices(K')\}) \hspace{-12em}
    \notag\\
    & =
    \sum_{K\in\psgtrans(s)} \sum_{K'\in\Triang(K)} \strat_{i}(\hat{\rho}s)(\{K\}\times(\vertices(K')\cap f_K(K')))
    \label{eq:strat:ssg:extreme:i}\\
    & =
    \sum_{K\in\psgtrans(s)} \strat_{i}(\hat{\rho}s)\bigg(\{K\}\times\bigg(\bigcup_{K'\in\Triang(K)}\vertices(K')\cap f_K(K')\bigg)\bigg)
    \label{eq:strat:ssg:extreme:ii}\\
    & =
    \sum_{K\in\psgtrans(s)} \strat_{i}(\hat{\rho}s)(\{K\}\times\vertices(K))
    \label{eq:strat:ssg:extreme:iii}\\
    & =
    \strat_{i}(\hat{\rho}s)\bigg(\bigcup_{K\in\psgtrans(s)}\{K\}\times\vertices(K)\bigg)
    \label{eq:strat:ssg:extreme:iv}\\
    & =
    \strat_{i}(\hat{\rho}s)(\{(K,\mu)\in\sgactions(s)\mid\mu\in\vertices(K)\})
    \label{eq:strat:ssg:extreme:v}\\
    & = 1
    \label{eq:strat:ssg:extreme:vi}
  \end{align}
  Equality~(\ref{eq:strat:ssg:extreme:i}) follows
  by~(\ref{eq:def:stratp:ssg}) and (\ref{eq:strat:ssg:extreme:ii}) is
  a consequence of $\strat_{i}$ being a measure and the fact that
  $f_K$ guarantees the disjointness of sets in the union.
  $f_K$ also guarantees that no vertix of $K$ is lost, hence
  (\ref{eq:strat:ssg:extreme:iii}).
  (\ref{eq:strat:ssg:extreme:iv}) is a consequence of $\strat_{i}$
  being a measure and (\ref{eq:strat:ssg:extreme:v}) by definition of
  $\sgactions(s)$ and $\sgtrans(s)$.
  Finally, (\ref{eq:strat:ssg:extreme:vi}) follows from $\strat_{i}$
  being extreme.

  Let $\strat_{i}$ be deterministic and suppose
  $\strat_{i}(\hat{\rho}s)(\{(K_\star,\mu)\})=1$ for $K_\star\in\psgtrans(s)$ and
  $\mu\in K_\star$.  Besides, suppose that $K'_\star\in\Triang(K_\star)$ such that
  $\mu\in f_{K_\star}(K'_\star)$.  Then
  \begin{align}
    \strat'_{i}(\hat{\rho}s)(\{(K'_\star,\mu)\}) \hspace{-5em}\notag\\
    & =
    \sum_{K\in\psgtrans(s)} \sum_{K'\in\Triang(K)} \strat_{i}(\hat{\rho}s)(\{K\}\times({\{(K'_\star,\mu)\}\sect{K'}}\cap f_K(K')))
    \label{eq:strat:ssg:det:i}\\
    & =
    \strat_{i}(\hat{\rho}s)(\{(K_\star,\mu)\})
    \label{eq:strat:ssg:det:ii}\\
    & =
    1
    \label{eq:strat:ssg:det:iii}
  \end{align}
  Equality~(\ref{eq:strat:ssg:det:i}) follows
  by~(\ref{eq:def:stratp:ssg}). (\ref{eq:strat:ssg:det:ii}) follows
  from the fact that all summands are $0$ except for the one in which
  $K=K_\star$ and $K'=K'_\star$.  Finally (\ref{eq:strat:ssg:det:iii})
  follows because $\strat_{i}$ is deterministic with
  $\strat_{i}(\hat{\rho}s)(\{(K_\star,\mu)\})=1$ by assumption.

  To prove that
  $\Prob^{\strat_{\maxplay},\strat_{\minplay}}_{\StochGK,s}=\Prob^{\strat'_{\maxplay},\strat'_{\minplay}}_{\StochGKp,s}$
  it sufficies to state that
  $\Prob^{\strat_{\maxplay},\strat_{\minplay},n}_{\StochGK,s}=\Prob^{\strat'_{\maxplay},\strat'_{\minplay},n}_{\StochGKp,s}$
  for all $n\geq0$ which we show by induction in the following.

  For $n=0$,
  $\Prob^{\strat_{\maxplay},\strat_{\minplay},0}_{\StochGK,s}(s')=\Dirac_{s}(s')=\Prob^{\strat'_{\maxplay},\strat'_{\minplay},0}_{\StochGKp,s}(s')$.
  For $n+1>0$ we calculate as follows.  Suppose
  $\hat{\rho}\in\sgnodes^n$, $s''\in\sgnodes$ and $s'\in\sgnodes_{i}$.
  Then,
  \begin{align}
    &\!\!
    \Prob^{\strat'_{\maxplay},\strat'_{\minplay},n+1}_{\StochGKp,s}(\hat{\rho} s'' s')
    \notag\\
    & =
    \Prob^{\strat'_{\maxplay},\strat'_{\minplay},n}_{\StochGKp,s}(\hat{\rho}s'')\int_{\sgactions'}\sgtrans'(s'',\cdot,s')\ \diff(\strat'_{i}(\hat{\rho}s'')(\cdot))
    \label{eq:strat:ssg:prob:i}\\
    & =
    \Prob^{\strat'_{\maxplay},\strat'_{\minplay},n}_{\StochGKp,s}(\hat{\rho}s'')\sum_{\substack{K\in\psgtrans(s'')\\K'\in\Triang(K)}}\int_{\{K'\}\times f_K(K')}\sgtrans'(s'',\cdot,s')\ \diff(\strat'_{i}(\hat{\rho}s'')(\cdot))
    \label{eq:strat:ssg:prob:ii}\\
    & =
    \Prob^{\strat'_{\maxplay},\strat'_{\minplay},n}_{\StochGKp,s}(\hat{\rho}s'')\sum_{\substack{K\in\psgtrans(s'')\\K'\in\Triang(K)}}\int_{f_K(K')}\sgtrans'(s'',(K',\cdot),s')\ \diff(\strat'_{i}(\hat{\rho}s'')(K',\cdot))
    \label{eq:strat:ssg:prob:iii}\\
    & =
    \Prob^{\strat_{\maxplay},\strat_{\minplay},n}_{\StochGK,s}(\hat{\rho}s'')\sum_{\substack{K\in\psgtrans(s'')\\K'\in\Triang(K)}}\int_{f_K(K')}\sgtrans(s'',(K,\cdot),s')\ \diff(\strat_{i}(\hat{\rho}s'')(K,\cdot))
    \label{eq:strat:ssg:prob:iv}\\
    & =
    \Prob^{\strat_{\maxplay},\strat_{\minplay},n}_{\StochGK,s}(\hat{\rho}s'')\sum_{K\in\psgtrans(s'')}\int_{_{\left(\bigcup_{K'\in\Triang(K)}f_K(K')\right)}}\hspace{-1.7em}\sgtrans(s'',(K,\cdot),s')\ \diff(\strat_{i}(\hat{\rho}s'')(K,\cdot))
    \label{eq:strat:ssg:prob:v}\\
    & =
    \Prob^{\strat_{\maxplay},\strat_{\minplay},n}_{\StochGK,s}(\hat{\rho}s'')\int_{\left(\bigcup_{K\in\psgtrans(s'')} \{K\}{\times}K\right)}\sgtrans(s'',\cdot,s')\ \diff(\strat_{i}(\hat{\rho}s'')(\cdot))
    \label{eq:strat:ssg:prob:vi}\\
    & =
    \Prob^{\strat_{\maxplay},\strat_{\minplay},n+1}_{\StochGK,s}(\hat{\rho}s''s')
    \label{eq:strat:ssg:prob:vii}
  \end{align}
  Equality~(\ref{eq:strat:ssg:prob:i}) is the definition of
  $\Prob^{\strat'_{\maxplay},\strat'_{\minplay},n+1}_{\StochGKp,s}$.
  (\ref{eq:strat:ssg:prob:ii}) follows by calculations and noting that
  $\strat'_{i}(\hat{\rho}s'')\bigg(\sgactions'\setminus\bigg(\bigcup_{\substack{K\in\psgtrans(s'')\\K'\in\Triang(K)}}\{K'\}{\times}f_{K}(K')\bigg)\bigg)=0$.
  (\ref{eq:strat:ssg:prob:iii}) is a consequence of Fubini's theorem.
  (\ref{eq:strat:ssg:prob:iv}) follows by induction hypothesis and the
  easy-to-check equalities
  $\sgtrans'(s'',(K',\mu),s')=\sgtrans(s'',(K,\mu),s')$, for all
  $\mu\in f_K(K')$, and
  $\strat'_{i}(\hat{\rho}s'')(\{K'\}\times B\cap f_K(K'))) = \strat_{i}(\hat{\rho}s'')(\{K\}\times B\cap f_K(K')))$,
  for all $B\in\Salg_{\Dist(\sgnodes)}$, $K\in\psgtrans(s'')$ and
  $K'\in\Triang(K)$.
  (\ref{eq:strat:ssg:prob:v}) follows by calculations and
  (\ref{eq:strat:ssg:prob:vi}) follows by noting that
  $K=\bigcup_{K'\in\Triang(K)}f_K(K')$ and using Fubini's Theorem.
  Finally, (\ref{eq:strat:ssg:prob:vii}) follows by observing that
  $\strat_{i}(\hat{\rho}s'')\left(\sgactions\setminus\left(\bigcup_{K\in\psgtrans(s'')}\{K\}{\times}K\right)\right)=0$
  and by the definition of
  $\Prob^{\strat_{\maxplay},\strat_{\minplay},n+1}_{\StochGK,s}$.

  \medskip
  
  To prove item~\ref{item:PSG:SSG:ii}, first fix a function $f_s$ for
  each state $s\in\sgnodes$ such that
  \begin{enumerate*}[(i)]
  \item%
    $\forall {K\in\psgtrans(s)}\colon {f_s(K)\subseteq\Triang(K)}$,
  \item%
    $\bigcup_{K\in\psgtrans(s)}f_s(K)=\bigcup_{K\in\psgtrans(s)}\Triang(K)$, and
  \item%
    $\forall {K_1,K_2\in\psgtrans(s)}\colon {{f_s(K_1)\cap f_s(K_2)\neq\emptyset} \limp {K_1=K_2}}$.
  \end{enumerate*}

  Let $\strat'_{\maxplay}$ and $\strat'_{\minplay}$ be a pair of
  strategies for $\StochGKp$.  Define $\strat_{i}$,
  $i\in\{\maxplay,\minplay\}$, for all $\hat{\rho}\in\sgnodes^*$,
  $s\in\sgnodes_{i}$, and $A\in\Salg_{\sgactions}$ by
  \begin{equation}\label{eq:def:stratp:psg}
    \strat_{i}(\hat{\rho}s)(A) = \sum_{K\in\psgtrans(s)} \sum_{K'\in f_{s}(K)} \strat'_{i}(\hat{\rho}s)(\{K'\}\times{A\sect{K}})
  \end{equation}
  Notice that, by definition, $K'\in\psgtrans'(s)$.
  Moreover, notice that $f_s$ ensures that a simplex in a
  triangulation of a polytope outgoing $s$ is consdered in exactly one
  summand of~(\ref{eq:def:stratp:psg}).  Thus $\strat_{i}(\hat{\rho})$
  is a well defined probability measure on $\sgnodes$ and hence
  $\strat_{\maxplay}$ and $\strat_{\minplay}$ is a pair of strategies
  for $\StochGK$.

  It is straightforward to check that if $\strat'_{i}$ is memoryless or
  semi-Markov, so is $\strat_{i}$.  Suppose $\strat'_{i}$ is extreme,
  then we have that
  \begin{align}
    \strat_{i}(\hat{\rho}s)(\{(K'',\mu)\in\sgactions(s)\mid\mu\in\vertices(K'')\})\hspace{-16em}
    \notag\\
    & =
    \sum_{K\in\psgtrans(s)} \sum_{K'\in f_{s}(K)} \strat'_{i}(\hat{\rho}s)(\{K'\}\times{\{(K,\mu)\in\sgactions(s)\mid\mu\in\vertices(K)\}\sect{K}})
    \label{eq:strat:psg:extreme:i}\\
    & =
    \sum_{K\in\psgtrans(s)} \sum_{K'\in f_{s}(K)} \strat'_{i}(\hat{\rho}s)(\{K'\}\times\vertices(K'))
    \label{eq:strat:psg:extreme:ii}\\
    & =
    \strat'_{i}(\hat{\rho}s)\bigg(\bigcup_{K\in\psgtrans(s)}\bigcup_{K'\in f_{s}(K)} \{K'\}\times\vertices(K')\bigg)
    \label{eq:strat:psg:extreme:iii}\\
    & =
    \strat'_{i}(\hat{\rho}s)\bigg(\bigcup_{K'\in\bigcup_{K\in\psgtrans(s)}f_{s}(K)} \{K'\}\times\vertices(K')\bigg)
    \label{eq:strat:psg:extreme:iv}\\
    & =
    \strat'_{i}(\hat{\rho}s)\bigg(\bigcup_{K'\in\psgtrans'(s)} \{K'\}\times\vertices(K')\bigg)
    \label{eq:strat:psg:extreme:v}\\
    & =
    \strat'_{i}(\hat{\rho}s)(\{(K',\mu)\in\sgactions'(s)\mid\mu\in\vertices(K')\})
    \label{eq:strat:psg:extreme:vi}\\
    & =
    1
    \label{eq:strat:psg:extreme:vii}
  \end{align}
  Equality~(\ref{eq:strat:psg:extreme:i}) corresponds to the
  definition of $\strat_{i}$ in (\ref{eq:def:stratp:psg}) and
  (\ref{eq:strat:psg:extreme:ii}) follows from the following
  easy-to-check equalities:
  ${\{(K,\mu)\in\sgactions(s)\mid\mu\in\vertices(K)\}\sect{K}} =
  \vertices(K)$ and
  $\strat'_{i}(\hat{\rho}s)(\{K'\}\times(\vertices(K)\setminus{K'}))=0$.
  (\ref{eq:strat:psg:extreme:iii}) follows because
  $\strat'_{i}(\hat{\rho}s)$ is a probability measure and $f_s$
  guarantees the disjointness of sets in the union while
  (\ref{eq:strat:psg:extreme:iv}) follows by calculations.
  (\ref{eq:strat:psg:extreme:v}) is a consequence of
  $\bigcup_{K\in\psgtrans(s)}f_{s}(K)=\bigcup_{K\in\psgtrans(s)}\Triang(K)=\psgtrans'(s)$
  where the first equality is guaranteed by $f_{s}$ and the second one
  is the definition of $\psgtrans'$.
  Finally, (\ref{eq:strat:psg:extreme:vi})~follows by the definition
  of $\sgactions'(s)$ and (\ref{eq:strat:psg:extreme:vii}) because
  $\strat'_{i}$ is extreme.

  Suppose now that $\strat'_{i}$ is deterministic and assume
  $\strat'_{i}(\hat{\rho}s)(\{(K'_\star,\mu)\})=1$ for
  $K'_\star\in\psgtrans'(s)$ and $\mu\in K'_\star$.  Besides, suppose
  that $K'_\star\in f_{s}(K_\star)$.  Then
  \begin{align}
    \strat_{i}(\hat{\rho}s)(\{(K_\star,\mu)\})
    & =
    \sum_{K\in\psgtrans(s)} \sum_{K'\in f_{s}(K)} \strat'_{i}(\hat{\rho}s)(\{K'\}\times({\{(K_\star,\mu)\}\sect{K}}\cap K'))
    \label{eq:strat:psg:det:i}\\
    & =
    \sum_{K'\in f_{s}(K_\star)} \strat'_{i}(\hat{\rho}s)(\{K'\}\times({\{(K_\star,\mu)\}\sect{K_\star}}\cap K'))
    \label{eq:strat:psg:det:ii}\\
    & =
    \sum_{K'\in f_{s}(K_\star)} \strat'_{i}(\hat{\rho}s)(\{K'\}\times(\{\mu\}\cap K'))
    \label{eq:strat:psg:det:iii}\\
    & =
    \strat'_{i}(\hat{\rho}s)(\{(K'_\star,\mu)\}) = 1
    \label{eq:strat:psg:det:iv}
  \end{align}
  (\ref{eq:strat:psg:det:i}) follows by (\ref{eq:def:stratp:psg})
  while (\ref{eq:strat:psg:det:ii}) is a consequence that
  $\{(K_\star,\mu)\}\sect{K}=\emptyset$ whenever $K\neq K_\star$.
  (\ref{eq:strat:psg:det:iii}) follows by definition of
  $\sect{K_\star}$ and (\ref{eq:strat:psg:det:iv}) follows from the
  fact that, for $K'\neq K'_\star$, either $\mu\notin K'$ or
  $\strat'_{i}(\hat{\rho}s)(\{(K',\mu)\})=0$.  The last equality on
  (\ref{eq:strat:psg:det:iv}) follows by the assumptions.

  Like before, to prove that
  $\Prob^{\strat_{\maxplay},\strat_{\minplay}}_{\StochGK,s}=\Prob^{\strat'_{\maxplay},\strat'_{\minplay}}_{\StochGKp,s}$
  it sufficies to state that
  $\Prob^{\strat_{\maxplay},\strat_{\minplay},n}_{\StochGK,s}=\Prob^{\strat'_{\maxplay},\strat'_{\minplay},n}_{\StochGKp,s}$
  for all $n\geq0$ which we show by induction.
  For $n=0$,
  $\Prob^{\strat_{\maxplay},\strat_{\minplay},0}_{\StochGK,s}(s')=\Dirac_{s}(s')=\Prob^{\strat'_{\maxplay},\strat'_{\minplay},0}_{\StochGKp,s}(s')$.
  For $n+1>0$ we calculate as follows.  Suppose
  $\hat{\rho}\in\sgnodes^n$, $s''\in\sgnodes$ and $s'\in\sgnodes_{i}$.
  Then,
  \begin{align}
    &\!\!
    \Prob^{\strat_{\maxplay},\strat_{\minplay},n+1}_{\StochGK,s}(\hat{\rho} s'' s')
    \notag\\
    & =
    \Prob^{\strat_{\maxplay},\strat_{\minplay},n}_{\StochGK,s}(\hat{\rho}s'')\int_{\sgactions}\sgtrans(s'',\cdot,s')\ \diff(\strat_{i}(\hat{\rho}s'')(\cdot))
    \label{eq:strat:psg:prob:i}\\
    & =
    \Prob^{\strat_{\maxplay},\strat_{\minplay},n}_{\StochGK,s}(\hat{\rho}s'')\sum_{K\in\psgtrans(s'')}\int_{K}\sgtrans(s'',(K,\cdot),s')\ \diff(\strat_{i}(\hat{\rho}s'')(K,\cdot))
    \label{eq:strat:psg:prob:ii}\\
    & =
    \Prob^{\strat_{\maxplay},\strat_{\minplay},n}_{\StochGK,s}(\hat{\rho}s'')\sum_{K\in\psgtrans(s'')}\int_{K}\sgtrans(s'',(K,\cdot),s')\ \diff\bigg(\sum_{K'\in f_{s}(K)} \strat'_{i}(\hat{\rho}s'')(K',\cdot)\bigg)
    \label{eq:strat:psg:prob:iii}\\
    & =
    \Prob^{\strat_{\maxplay},\strat_{\minplay},n}_{\StochGK,s}(\hat{\rho}s'')\sum_{K\in\psgtrans(s'')}\sum_{K'\in f_{s}(K)} \int_{K'}\sgtrans(s'',(K,\cdot),s')\ \diff\bigg(\strat'_{i}(\hat{\rho}s'')(K',\cdot)\bigg)
    \label{eq:strat:psg:prob:iv}\\
    & =
    \Prob^{\strat'_{\maxplay},\strat'_{\minplay},n}_{\StochGKp,s}(\hat{\rho}s'')\sum_{K\in\psgtrans(s'')}\sum_{K'\in f_{s}(K)} \int_{K'}\sgtrans'(s'',(K',\cdot),s')\ \diff\bigg(\strat'_{i}(\hat{\rho}s'')(K',\cdot)\bigg)
    \label{eq:strat:psg:prob:v}\\
    & =
    \Prob^{\strat'_{\maxplay},\strat'_{\minplay},n}_{\StochGKp,s}(\hat{\rho}s'')\sum_{K'\in\psgtrans'(s'')} \int_{K'}\sgtrans'(s'',(K',\cdot),s')\ \diff\bigg(\strat'_{i}(\hat{\rho}s'')(K',\cdot)\bigg)
    \label{eq:strat:psg:prob:vi}\\
    & =
    \Prob^{\strat'_{\maxplay},\strat'_{\minplay},n}_{\StochGKp,s}(\hat{\rho}s'') \int_{\left(\bigcup_{K'\in\psgtrans'(s'')}\{K'\}\times K'\right)}\sgtrans'(s'',\cdot,s')\ \diff\bigg(\strat'_{i}(\hat{\rho}s'')(\cdot)\bigg)
    \label{eq:strat:psg:prob:vii}\\
    & =
    \Prob^{\strat'_{\maxplay},\strat'_{\minplay},n}_{\StochGKp,s}(\hat{\rho}s'') \int_{\sgactions'}\sgtrans'(s'',\cdot,s')\ \diff\bigg(\strat'_{i}(\hat{\rho}s'')(\cdot)\bigg)
    \label{eq:strat:psg:prob:viii}\\
    & =
    \Prob^{\strat'_{\maxplay},\strat'_{\minplay},n+1}_{\StochGKp,s}(\hat{\rho}s''s')
    \label{eq:strat:psg:prob:ix}
  \end{align}
  Equality (\ref{eq:strat:psg:prob:i}) applies the definition of
  $\Prob^{\strat_{\maxplay},\strat_{\minplay},n+1}_{\StochGK,s}$.
  (\ref{eq:strat:psg:prob:ii}) follows from the fact that
  $\strat_{i}(\hat{\rho}s'')\left(\sgactions\setminus\left(\bigcup_{K\in\psgtrans(s'')}\{K\}{\times}K\right)\right)=0$
  and from Fubini's theorem.
  (\ref{eq:strat:psg:prob:iii}) follows because,
  by~(\ref{eq:def:stratp:psg}),
  $\strat_{i}(\hat{\rho}s'')(\{K\}{\times}{B})=\sum_{K'\in f_{s}(K)} \strat'_{i}(\hat{\rho}s'')(\{K'\}{\times}{B})$
  since $(\{K''\}{\times}{B})\sect{K}=\emptyset$ for any $K''\neq K$.
  (\ref{eq:strat:psg:prob:iv}) follows from calculations taking into
  account that
  $\strat'_{i}(\hat{\rho}s'')(\{K'\}{\times}(K\setminus K'))=0$.
  (\ref{eq:strat:psg:prob:v}) follows by induction hypothesis and from
  the fact that $\sgtrans'(s'',(K',\mu),s')=\sgtrans(s'',(K,\mu),s')$
  for all $\mu\in K'$ whenever, $K'\in\Triang(K)$.
  (\ref{eq:strat:psg:prob:vi}) follows from the fact that
  $\bigcup_{K\in\psgtrans(s)}f_{s}(K)=\psgtrans'(s)$ and every
  $f_{s}(K)$ is disjoint from any other $f_{s}(K'')$.
  Fubini's theorem yields (\ref{eq:strat:psg:prob:vii}) and the fact
  that
  $\strat'_{i}(\hat{\rho}s'')\left(\sgactions'\setminus\left(\bigcup_{K'\in\psgtrans'(s'')}\{K'\}{\times}K'\right)\right)=0$
  yields (\ref{eq:strat:psg:prob:viii}).
  Finally, by definition of
  $\Prob^{\strat'_{\maxplay},\strat'_{\minplay},n+1}_{\StochGKp,s}$ we
  conclude in~(\ref{eq:strat:psg:prob:ix}).
  \qed
\end{proof}

\begin{proof}[of Lemma~\ref{lm:encoding:reach:and:expectation}]
  First of all, notice that
  $\Prob^{\strat_{\maxplay},\strat_{\minplay}}_{\StochGK,s}(\Finally^k s') =
  \Prob^{\strat_{\maxplay},\strat_{\minplay}}_{\StochGK,s}(\sgnodes^k\times\{s'\}\times\sgnodes^\omega)
  =
  \Prob^{\strat_{\maxplay},\strat_{\minplay},k}_{\StochGK,s}(\sgnodes^k\times\{s'\})$.
  This fact will be used in the following without making explicit the
  justification.

  For item \ref{lm:encoding:reach:and:expectation:i}, we start by
  proving that for all $n\geq 0$ and $\alpha\in\Reals$,
  \begin{equation}\label{eq:encoding:reach:and:expectation:i:prel}
    \sum_{\hat{\rho} \in \sgnodes^{n+1}} \Prob^{\strat_{\maxplay},\strat_{\minplay},n}_{\StochGK,s}(\hat{\rho})\mult\sum^{n}_{i=0}\alpha^i\mult\reward(\hat{\rho}_i)
    =
    \sum^{n}_{i=0} \sum_{s' \in \sgnodes} \Prob^{\strat_{\maxplay},\strat_{\minplay}}_{\StochGK,s}(\Finally^i s')\mult\alpha^i\mult\reward(s').
  \end{equation}
  We proceed by induction on $n$. For $n=0$ we calculate:
  \begin{align*}
    \sum_{\hat{\rho} \in \sgnodes^{0+1}} \Prob^{\strat_{\maxplay},\strat_{\minplay},0}_{\StochGK,s}(\hat{\rho})\mult\sum^{0}_{i=0}\alpha^i\mult\reward(\hat{\rho}_i)
    & =
    \sum_{s' \in \sgnodes} \Prob^{\strat_{\maxplay},\strat_{\minplay},0}_{\StochGK,s}(s')\mult\alpha^0\mult\reward(s') \\
    & =
    \sum_{s' \in \sgnodes} \Prob^{\strat_{\maxplay},\strat_{\minplay}}_{\StochGK,s}(\Finally^0s')\mult\alpha^0\mult\reward(s') \\
    & =
    \sum^{0}_{i=0} \sum_{s' \in \sgnodes} \Prob^{\strat_{\maxplay},\strat_{\minplay}}_{\StochGK,s}(\Finally^i s')\mult\alpha^i\mult\reward(s')
  \end{align*}
  All steps follow by there respective definitions.

  For $n+1$ ($n\geq0$) we proceed as follows:
  \begin{align}
    \sum_{\hat{\rho} \in \sgnodes^{n+2}} \Prob^{\strat_{\maxplay},\strat_{\minplay},n+1}_{\StochGK,s}(\hat{\rho})\mult\sum^{n+1}_{i=0}\alpha^i\mult\reward(\hat{\rho}_i)
    \hspace{-14em} & \notag\\
    & =
    \sum_{\hat{\rho} \in \sgnodes^{n+1}}\sum_{s' \in \sgnodes} \Prob^{\strat_{\maxplay},\strat_{\minplay},n+1}_{\StochGK,s}(\hat{\rho}s')\mult\left(\left(\sum^{n}_{i=0}\alpha^i\mult\reward(\hat{\rho}_i)\right)+\alpha^{n+1}\mult\reward(s')\right)
    \label{eq:encoding:reach:and:expectation:i:prel:proof:i}\\
    & =
    \sum_{\hat{\rho} \in \sgnodes^{n+1}}\left(\sum_{s' \in \sgnodes} \Prob^{\strat_{\maxplay},\strat_{\minplay},n+1}_{\StochGK,s}(\hat{\rho}s')\right)\mult\sum^{n}_{i=0}\alpha^i\mult\reward(\hat{\rho}_i) \notag\\
    & \qquad\qquad
    {}+\sum_{s' \in \sgnodes} \left(\sum_{\hat{\rho} \in \sgnodes^{n+1}}\Prob^{\strat_{\maxplay},\strat_{\minplay},n+1}_{\StochGK,s}(\hat{\rho}s')\right)\mult\alpha^{n+1}\mult\reward(s')
    \notag\\
    & =
    \sum_{\hat{\rho} \in \sgnodes^{n+1}}\Prob^{\strat_{\maxplay},\strat_{\minplay},n}_{\StochGK,s}(\hat{\rho})\mult\sum^{n}_{i=0}\alpha^i\mult\reward(\hat{\rho}_i) \notag\\
    & \qquad\qquad
    {}+\sum_{s' \in \sgnodes} \Prob^{\strat_{\maxplay},\strat_{\minplay},n+1}_{\StochGK,s}(\sgnodes^{n+1}\times\{s'\})\mult\alpha^{n+1}\mult\reward(s')
    \label{eq:encoding:reach:and:expectation:i:prel:proof:iii}\\
    & =
    \sum^{n}_{i=0} \sum_{s' \in \sgnodes} \Prob^{\strat_{\maxplay},\strat_{\minplay}}_{\StochGK,s}(\Finally^i s')\mult\alpha^i\mult\reward(s') \notag\\
    & \qquad\qquad
    {}+\sum_{s' \in \sgnodes} \Prob^{\strat_{\maxplay},\strat_{\minplay}}_{\StochGK,s}(\Finally^{n+1}s')\mult\alpha^{n+1}\mult\reward(s')
    \label{eq:encoding:reach:and:expectation:i:prel:proof:iv}\\
    & =
    \sum^{n+1}_{i=0} \sum_{s' \in \sgnodes} \Prob^{\strat_{\maxplay},\strat_{\minplay}}_{\StochGK,s}(\Finally^i s')\mult\alpha^i\mult\reward(s') \notag
  \end{align}
  Most of the steps follow by simple calculations.  In particular, in
  (\ref{eq:encoding:reach:and:expectation:i:prel:proof:i}) we separate
  the trailing state.  In
  (\ref{eq:encoding:reach:and:expectation:i:prel:proof:iii}), we use
  the fact that
  $\sum_{s'\in\sgnodes}\Prob^{\strat_{\maxplay},\strat_{\minplay},n+1}_{\StochGK,s}(\hat{\rho}s')
  =
  \Prob^{\strat_{\maxplay},\strat_{\minplay},n+1}_{\StochGK,s}(\hat{\rho}\times\sgnodes)
  =
  \Prob^{\strat_{\maxplay},\strat_{\minplay},n}_{\StochGK,s}(\hat{\rho})$
  in the first summand.  Finally, induction hypothesis on the first
  summand of (\ref{eq:encoding:reach:and:expectation:i:prel:proof:iv})
  is applied.

  Item~\ref{lm:encoding:reach:and:expectation:i} of the lemma follows
  for $\FTRewards^n$ and $\FDRewards{\discfactor}^n$ by taking $\alpha=1$
  and $\alpha=\discfactor$
  in~(\ref{eq:encoding:reach:and:expectation:i:prel}),  respectively.
  The case of $\FARewards^n$ follows by observing that
  $\FARewards^n(\hat{\rho})=\frac{\FTRewards^n(\hat{\rho})}{n+1}$
  and calculating as follows
  $\sum_{\hat{\rho} \in \sgnodes^{n+1}} \Prob^{\strat_{\maxplay},\strat_{\minplay},n}_{\StochGK,s}(\hat{\rho})\mult\FARewards^n(\hat{\rho})
  =
  \sum_{\hat{\rho} \in \sgnodes^{n+1}} \Prob^{\strat_{\maxplay},\strat_{\minplay},n}_{\StochGK,s}(\hat{\rho})\mult\frac{\FTRewards^n(\hat{\rho})}{n+1}
  =
  \sum^{n}_{i=0} \sum_{s' \in \sgnodes} \Prob^{\strat_{\maxplay},\strat_{\minplay}}_{\StochGK,s}(\Finally^i s')\mult\frac{1}{n+1}\mult\reward(s')$.

  Item~\ref{lm:encoding:reach:and:expectation:ii} follows from
  item~\ref{lm:encoding:reach:and:expectation:i} as follows:
  \begin{align}
    \Expect^{\strat_{\maxplay},\strat_{\minplay}}_{\StochGK,s}[\GRewards]
    & =
    \lim_{n\to\infty} \Expect^{\strat_{\maxplay},\strat_{\minplay}}_{\StochGK,s}[\FGRewards^n]
    \label{lm:encoding:reach:and:expectation:ii:eq:i}\\
    & =
    \lim_{n\to\infty} \sum_{\hat{\rho} \in \sgnodes^{n+1}} \Prob^{\strat_{\maxplay},\strat_{\minplay},n}_{\StochGK,s}(\hat{\rho})\mult\FGRewards(\hat{\rho})
    \label{lm:encoding:reach:and:expectation:ii:eq:ii}\\
    & =
    \lim_{n\to\infty} \sum^{n}_{i=0} \sum_{s' \in \sgnodes} \Prob^{\strat_{\maxplay},\strat_{\minplay}}_{\StochGK,s}(\Finally^i s')\mult\fgen(i,n)\mult\reward(s')
    \label{lm:encoding:reach:and:expectation:ii:eq:iii}
  \end{align}
  Equality (\ref{lm:encoding:reach:and:expectation:ii:eq:i}) folows by
  convergence properties of random variables,
  (\ref{lm:encoding:reach:and:expectation:ii:eq:ii}) is the definition of
  $\Expect^{\strat_{\maxplay},\strat_{\minplay}}_{\StochGK,s}[\FGRewards^n]$,
  and (\ref{lm:encoding:reach:and:expectation:ii:eq:ii}) is
  item~\ref{lm:encoding:reach:and:expectation:i}.
\qed
\end{proof}

\begin{proof}[of Lemma~\ref{lm:semimarkov}]
  Define $\starredstrat_{\maxplay}$ as follows.
  For $\hat{\rho}\in\sgnodes^*$, $s'\in\sgnodes$, and $A\in\Salg_\sgactions$,
  such that
  $\Prob^{\strat_{\maxplay},\strat_{\minplay}}_{\StochGK,s}(D \Until^n s') > 0$
  and $|\hat{\rho}| = n\geq 0$, let
  \begin{align*}
  \starredstrat_{\maxplay}(\hat{\rho}s')(A)
  & =
  \frac{\displaystyle\sum_{\hat{\rho}'\in D^n}\Prob^{\strat_{\maxplay},\strat_{\minplay},n}_{\StochGK,s}(\hat{\rho}'s')\mult\strat_{\maxplay}(\hat{\rho}' s')(A)}{\displaystyle\Prob^{\strat_{\maxplay},\strat_{\minplay}}_{\StochGK,s}(D \Until^n s')}
  \end{align*}
  For $s'\in\sgnodes$ with
  $\Prob^{\strat_{\maxplay},\strat_{\minplay}}_{\StochGK,s}(D \Until^n s') = 0$
  and $|\hat{\rho}s'| = n$, define
  $\starredstrat_{\maxplay}(\hat{\rho}s')$ to be
  $\Dirac_{\textsf{f}(s')}$ for a globally fixed function $\textsf{f}$
  such that $\textsf{f}(s')\in\enabled(s')$.
  Notice that
  $\starredstrat_{\maxplay}\in\SemiMarkovStrats{\maxplay}$.
  Therefore, we write $\starredstrat_{\maxplay}(n,s')$ for
  $\starredstrat_{\maxplay}(\hat{\rho}s')$ whenever
  $|\hat{\rho}|=n$.

  We focus first on item~\ref{lm:semimarkov:i} and proceed by induction.
  For $n=0$,
  \[\Prob^{\strat_{\maxplay},\strat_{\minplay}}_{\StochGK,s}(D \Until^0 s') =
  \Prob^{\strat_{\maxplay},\strat_{\minplay},0}_{\StochGK,s}(s') = \Dirac_s(s') =
  \Prob^{\starredstrat_{\maxplay},\strat_{\minplay},0}_{\StochGK,s}(s')=
  \Prob^{\starredstrat_{\maxplay},\strat_{\minplay}}_{\StochGK,s}(D \Until^0 s').\]

  For $n + 1 \geq 0$, note that
  $
  \Prob^{\strat_{\maxplay},\strat_{\minplay}}_{\StochGK,s}(D \Until^{n+1} s')
  = \sum_{\hat{\rho} \in D^{n+1}} \Prob^{\strat_{\maxplay},\strat_{\minplay},n+1}_{\StochGK,s}(\hat{\rho}s')
  = \sum_{s'' \in D}\sum_{\hat{\rho} \in D^n} \Prob^{\strat_{\maxplay},\strat_{\minplay},n+1}_{\StochGK,s}(\hat{\rho}s''s')
  $.
  Hence, it suffices to show that, for all $s''\in\sgnodes$,
  \[\sum_{\hat{\rho} \in D^n} \Prob^{\strat_{\maxplay},\strat_{\minplay},n+1}_{\StochGK,s}(\hat{\rho}s''s')
  =
  \sum_{\hat{\rho} \in D^n} \Prob^{\starredstrat_{\maxplay},\strat_{\minplay},n+1}_{\StochGK,s}(\hat{\rho}s''s'),
  \]
  for which we differentiate two cases, depending on the player. So,
  if $s''\in\sgnmin$ we proceed as follows
  \begin{align}
    \sum_{\hat{\rho} \in D^n} \Prob^{\strat_{\maxplay},\strat_{\minplay},n+1}_{\StochGK,s}(\hat{\rho}s''s') \hspace{-4em} &\notag\\
    & =
    \sum_{\hat{\rho} \in D^n} \Prob^{\strat_{\maxplay},\strat_{\minplay},n}_{\StochGK,s}(\hat{\rho}s'')\int_{\sgactions}\sgtrans(s'',\cdot,s')\ \diff(\strat_{\minplay}(\hat{\rho}s'')(\cdot))
    \label{eq:semimarkov:i:proof:min:i}\\
    & =
    \left(\sum_{\hat{\rho} \in D^n} \Prob^{\strat_{\maxplay},\strat_{\minplay},n}_{\StochGK,s}(\hat{\rho}s'')\right)\int_{\sgactions}\sgtrans(s'',\cdot,s')\ \diff(\strat_{\minplay}(n,s'')(\cdot))
    \label{eq:semimarkov:i:proof:min:ii}\\
    & =
    \left(\sum_{\hat{\rho} \in D^n} \Prob^{\starredstrat_{\maxplay},\strat_{\minplay},n}_{\StochGK,s}(\hat{\rho}s'')\right)\int_{\sgactions}\sgtrans(s'',\cdot,s')\ \diff(\strat_{\minplay}(n,s'')(\cdot))
    \label{eq:semimarkov:i:proof:min:iii}\\
    & =
    \sum_{\hat{\rho} \in D^n} \Prob^{\starredstrat_{\maxplay},\strat_{\minplay},n+1}_{\StochGK,s}(\hat{\rho}s''s')
    \label{eq:semimarkov:i:proof:min:iv}
  \end{align}
  Step~(\ref{eq:semimarkov:i:proof:min:i}) follows by definition of
  $\Prob^{\strat_{\maxplay},\strat_{\minplay},n+1}_{\StochGK,s}$.
  Because $\strat_\minplay$ is semi-Markov the integral can be
  factored out of the summation in~(\ref{eq:semimarkov:i:proof:min:ii}).
  Induction hypothesis is applied
  in~(\ref{eq:semimarkov:i:proof:min:iii}), since
  $\Prob^{\strat_{\maxplay},\strat_{\minplay}}_{\StochGK,s}(D \Until^n s'')
  =
  \left(\sum_{\hat{\rho} \in D^n} \Prob^{\strat_{\maxplay},\strat_{\minplay},n}_{\StochGK,s}(\hat{\rho}s'')\right)$
  and similarly for
  $\Prob^{\starredstrat_{\maxplay},\strat_{\minplay}}_{\StochGK,s}(D \Until^n s'')$.
  Finally, step (\ref{eq:semimarkov:i:proof:min:iv}) resolves using
  the same steps as before in reverse order.

  If $s''\in\sgnmax$ we have two subcases. First suppose
  $\Prob^{\strat_{\maxplay},\strat_{\minplay}}_{\StochGK,s}(D \Until^n s'') > 0$.
  Then
  \begin{align}
    \sum_{\hat{\rho} \in D^n} \Prob^{\strat_{\maxplay},\strat_{\minplay},n+1}_{\StochGK,s}(\hat{\rho}s''s') \hspace{-9em} &\notag\\
    & =
    \sum_{\hat{\rho} \in D^n} \Prob^{\strat_{\maxplay},\strat_{\minplay},n}_{\StochGK,s}(\hat{\rho}s'')\int_{\sgactions}\sgtrans(s'',\cdot,s')\ \diff(\strat_{\maxplay}(\hat{\rho}s'')(\cdot))
    \label{eq:semimarkov:i:proof:max:a:i}\\
    & =
    \Prob^{\strat_{\maxplay},\strat_{\minplay}}_{\StochGK,s}(D \Until^n s'')\mult
    \int_{\sgactions}\sgtrans(s'',\cdot,s')\mult
    \frac{\sum_{\hat{\rho} \in D^n} \Prob^{\strat_{\maxplay},\strat_{\minplay},n}_{\StochGK,s}(\hat{\rho}s'')\ \diff(\strat_{\maxplay}(\hat{\rho}s'')(\cdot))}{\Prob^{\strat_{\maxplay},\strat_{\minplay}}_{\StochGK,s}(D \Until^n s'')}
    \label{eq:semimarkov:i:proof:max:a:ii}\\
    & =
    \Prob^{\starredstrat_{\maxplay},\strat_{\minplay}}_{\StochGK,s}(D \Until^n s'')\mult
    \int_{\sgactions}\sgtrans(s'',\cdot,s')\ \diff(\starredstrat_{\maxplay}(n,s'')(\cdot))
    \label{eq:semimarkov:i:proof:max:a:iii}\\
    & =
    \left(\sum_{\hat{\rho}\in D^n}\Prob^{\starredstrat_{\maxplay},\strat_{\minplay},n}_{\StochGK,s}(\hat{\rho}s'')\right)
    \int_{\sgactions}\sgtrans(s'',\cdot,s')\ \diff(\starredstrat_{\maxplay}(n,s'')(\cdot))
    \label{eq:semimarkov:i:proof:max:a:iv}\\
    & =
    \sum_{\hat{\rho}\in D^n}\Prob^{\starredstrat_{\maxplay},\strat_{\minplay},n}_{\StochGK,s}(\hat{\rho}s'')\mult
    \int_{\sgactions}\sgtrans(s'',\cdot,s')\ \diff(\starredstrat_{\maxplay}(\hat{\rho}s'')(\cdot))
    \label{eq:semimarkov:i:proof:max:a:v}\\
    & =
    \sum_{\hat{\rho} \in D^n} \Prob^{\starredstrat_{\maxplay},\strat_{\minplay},n+1}_{\StochGK,s}(\hat{\rho}s''s')
    \label{eq:semimarkov:i:proof:max:a:vi}
  \end{align}
  Step~(\ref{eq:semimarkov:i:proof:max:a:i}) follows by definition
  of $\Prob^{\strat_{\maxplay},\strat_{\minplay},n+1}_{\StochGK,s}$ and
  (\ref{eq:semimarkov:i:proof:max:a:ii}) is obtained by multiplying
  and dividing by
  $\Prob^{\strat_{\maxplay},\strat_{\minplay}}_{\StochGK,s}(D \Until^n s'')$.
  Step (\ref{eq:semimarkov:i:proof:max:a:iii}) follows by induction and
  using the definition of $\starredstrat_{\maxplay}$.
  By noting that
  $\Prob^{\starredstrat_{\maxplay},\strat_{\minplay}}_{\StochGK,s}(D \Until^n s'')=\sum_{\hat{\rho}\in D^n}\Prob^{\starredstrat_{\maxplay},\strat_{\minplay},n}_{\StochGK,s}(\hat{\rho}s'')$
  we obtain (\ref{eq:semimarkov:i:proof:max:a:iv}).
  Calculations and recalling that
  $\starredstrat_{\maxplay}(n,s'')=\starredstrat_{\maxplay}(\hat{\rho}s'')$,
  whenever $|\hat{\rho}|=n$, yields
  (\ref{eq:semimarkov:i:proof:max:a:v}) which, by definition of
  $\Prob^{\starredstrat_{\maxplay},\strat_{\minplay},n+1}_{\StochGK,s}$,
  concludes in (\ref{eq:semimarkov:i:proof:max:a:vi}).

  For the case
  $\Prob^{\strat_{\maxplay},\strat_{\minplay}}_{\StochGK,s}(D \Until^n s'') = 0$,
  we first prove the following claim.
  \begin{claim}
    $\Prob^{\strat_{\maxplay},\strat_{\minplay},n}_{\StochGK,s}(\hat{\rho}s'') = 0$
    implies
    $\Prob^{\starredstrat_{\maxplay},\strat_{\minplay},n}_{\StochGK,s}(\hat{\rho}s'') = 0$,
    for all $n\geq0$, $\hat{\rho}\in D^n$, and $s''\in\sgnodes$.
  \end{claim}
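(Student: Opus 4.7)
The proof plan is to establish the claim by induction on $n$, decomposing the path probability at its final transition and propagating the vanishing property through the recursion.

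For the base case $n=0$, $\hat\rho$ is empty and both $\Prob^{\strat_\maxplay,\strat_\minplay,0}_{\StochGK,s}(s'')$ and $\Prob^{\starredstrat_\maxplay,\strat_\minplay,0}_{\StochGK,s}(s'')$ reduce to $\Dirac_s(s'')$, so vanishing of one entails vanishing of the other.

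For the inductive step I would write $\hat\rho=\hat\rho't$ with $\hat\rho'\in D^n$ and $t\in D$, and apply the recursive product formula, so that $\Prob^{\strat_\maxplay,\strat_\minplay,n+1}_{\StochGK,s}(\hat\rho'ts'')$ factorises as $\Prob^{\strat_\maxplay,\strat_\minplay,n}_{\StochGK,s}(\hat\rho't)$ times $\int_\sgactions\sgtrans(t,\cdot,s'')\,\diff(\sigma_{I(t)}(\hat\rho't)(\cdot))$, where $I(t)$ identifies the owner of $t$ and $\sigma_{I(t)}$ is her strategy. Vanishing of the product forces at least one factor to be $0$. If the prefix probability vanishes, the inductive hypothesis of the claim yields $\Prob^{\starredstrat_\maxplay,\strat_\minplay,n}_{\StochGK,s}(\hat\rho't)=0$ and the analogous product formula for $\starredstrat_\maxplay$ closes the step. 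If instead the integral vanishes and $t\in\sgnmin$, the construction gives $\starredstrat_\minplay=\strat_\minplay$ (since $\strat_\minplay$ is already semi-Markov), so the corresponding starred integral is identical and also $0$.

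The delicate case is $t\in\sgnmax$ with the integral equal to $0$, because $\starredstrat_\maxplay(\hat\rho't)$ is a weighted average of the measures $\strat_\maxplay(\hat\rho''t)$ over all $\hat\rho''\in D^n$, and vanishing of the original integral for the particular $\hat\rho'$ does not localise to the average. I expect to handle this by exploiting the outer context in which the claim is invoked, namely the induction step for item~\ref{lm:semimarkov:i} of Lemma~\ref{lm:semimarkov} in the sub-case $\Prob^{\strat_\maxplay,\strat_\minplay}_{\StochGK,s}(D\Until^n s'')=0$, where every $\Prob^{\strat_\maxplay,\strat_\minplay,n}_{\StochGK,s}(\hat\rho's'')$ vanishes identically. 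The outer induction hypothesis for item~\ref{lm:semimarkov:i} then supplies $\Prob^{\starredstrat_\maxplay,\strat_\minplay}_{\StochGK,s}(D\Until^n s'')=0$, hence $\Prob^{\starredstrat_\maxplay,\strat_\minplay,n}_{\StochGK,s}(\hat\rho's'')=0$ for every $\hat\rho'\in D^n$, which propagates through the product formula and closes the sub-case. The main obstacle is precisely this entanglement of the inner induction on $n$ in the claim with the outer induction on $n$ in item~\ref{lm:semimarkov:i}: the averaging built into $\starredstrat_\maxplay$ prevents a purely local per-path argument, so care is needed to arrange the inductive schedule so that only hypotheses already established by the outer induction are invoked.
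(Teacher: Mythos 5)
Your proposal takes a genuinely different route from the paper at precisely the point where the paper's own argument is weakest. The paper proves the claim by a self-contained induction on $n$: in the sub-case you call delicate ($s''\in\sgnmax$, positive prefix probability, vanishing integral) it expands the definition of $\starredstrat_{\maxplay}$ and then, in step~(\ref{eq:semimarkov:i:proof:max:b:claim:iii}), pulls the weighted sum $\sum_{\hat{\rho}'\in D^n}\Prob^{\strat_{\maxplay},\strat_{\minplay},n}_{\StochGK,s}(\hat{\rho}'s'')$ out of the integral while leaving a single $\diff(\strat_{\maxplay}(\hat{\rho}s'')(\cdot))$ inside --- a manipulation that is valid only if $\strat_{\maxplay}(\hat{\rho}'s'')$ were the same for all histories $\hat{\rho}'$, i.e.\ if $\strat_{\maxplay}$ were already semi-Markov. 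Your diagnosis is exactly right: the vanishing of the integral for one particular history does not localise to the average defining $\starredstrat_{\maxplay}(n,s'')$, and with two equiprobable histories reaching the same $\maxplay$-state on which $\strat_{\maxplay}$ acts differently, the universally quantified claim is in fact false as stated. Your repair --- invoking the outer induction hypothesis of item~\ref{lm:semimarkov:i}, which in the sub-case $\Prob^{\strat_{\maxplay},\strat_{\minplay}}_{\StochGK,s}(D\Until^n s'')=0$ yields $\Prob^{\starredstrat_{\maxplay},\strat_{\minplay}}_{\StochGK,s}(D\Until^n s'')=0$ and hence the vanishing of every non-negative summand $\Prob^{\starredstrat_{\maxplay},\strat_{\minplay},n}_{\StochGK,s}(\hat{\rho}s'')$ --- is sound and is exactly what the application of the claim requires. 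Two caveats on your framing, though. Once the outer hypothesis is invoked it delivers the conclusion for every $\hat{\rho}\in D^n$ in one stroke, so your inner induction and its case analysis become redundant; and that inner induction could not be completed anyway, because the vanishing-integral case at a $\maxplay$-state also arises at intermediate lengths $k<n$, where the relevant until-probability may be positive and the outer context offers no help. The clean version of your argument therefore drops the claim entirely and replaces its invocation by the direct two-line appeal to the outer induction hypothesis.
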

  \begin{proofofclaim}
    We proceed by induction on $n$.  For $n=0$ the claim follows after
    noting that
    $\Prob^{\strat_{\maxplay},\strat_{\minplay},0}_{\StochGK,s}(s') =
    \Dirac_s(s') =
    \Prob^{\starredstrat_{\maxplay},\strat_{\minplay},0}_{\StochGK,s}(s')$
    by the definition of $\Prob$.

    So, let $n+1>0$ and suppose
    $\Prob^{\strat_{\maxplay},\strat_{\minplay},n+1}_{\StochGK,s}(\hat{\rho}s''s') = 0$.
    Suppose $s''\in\sgnmax$. Since, by definition,
    $\Prob^{\strat_{\maxplay},\strat_{\minplay},n+1}_{\StochGK,s}(\hat{\rho}s''s') =
    \Prob^{\strat_{\maxplay},\strat_{\minplay},n}_{\StochGK,s}(\hat{\rho}s'')\int_{\sgactions}\sgtrans(s'',\cdot,s')\ \diff(\strat_{\maxplay}(\hat{\rho}s'')(\cdot))$,
    then either
    $\Prob^{\strat_{\maxplay},\strat_{\minplay},n}_{\StochGK,s}(\hat{\rho}s'') = 0$
    or
    $\int_{\sgactions}\sgtrans(s'',\cdot,s')\ \diff(\strat_{\maxplay}(\hat{\rho}s'')(\cdot)) = 0$.

    If
    $\Prob^{\strat_{\maxplay},\strat_{\minplay},n}_{\StochGK,s}(\hat{\rho}s'') = 0$,
    then
    $\Prob^{\starredstrat_{\maxplay},\strat_{\minplay},n}_{\StochGK,s}(\hat{\rho}s'') = 0$
    by induction hypothesis and hence
    $\Prob^{\starredstrat_{\maxplay},\strat_{\minplay},n+1}_{\StochGK,s}(\hat{\rho}s''s') =
    \Prob^{\starredstrat_{\maxplay},\strat_{\minplay},n}_{\StochGK,s}(\hat{\rho}s'')\int_{\sgactions}\sgtrans(s'',\cdot,s')\ \diff(\strat_{\maxplay}(\hat{\rho}s'')(\cdot))=0$.

    If, instead
    $\Prob^{\strat_{\maxplay},\strat_{\minplay},n}_{\StochGK,s}(\hat{\rho}s'') > 0$,
    then
    $\Prob^{\strat_{\maxplay},\strat_{\minplay},n}_{\StochGK,s}(D \Until^n s'') > 0$
    and hence,
    \begin{align}
      \Prob^{\starredstrat_{\maxplay},\strat_{\minplay},n+1}_{\StochGK,s}(\hat{\rho}s''s')
      \hspace{-6em}\notag\\
      & =
      \Prob^{\starredstrat_{\maxplay},\strat_{\minplay},n}_{\StochGK,s}(\hat{\rho}s'')\int_{\sgactions}\sgtrans(s'',\cdot,s')\ \diff(\starredstrat_{\maxplay}(\hat{\rho}s'')(\cdot))
      \label{eq:semimarkov:i:proof:max:b:claim:i}\\
      & =
      \Prob^{\starredstrat_{\maxplay},\strat_{\minplay},n}_{\StochGK,s}(\hat{\rho}s'')
      \int_{\sgactions}\sgtrans(s'',\cdot,s')\mult
      \frac{\sum_{\hat{\rho} \in D^n} \Prob^{\strat_{\maxplay},\strat_{\minplay},n}_{\StochGK,s}(\hat{\rho}s'')\ \diff(\strat_{\maxplay}(\hat{\rho}s'')(\cdot))}{\Prob^{\strat_{\maxplay},\strat_{\minplay}}_{\StochGK,s}(D \Until^n s'')}
      \label{eq:semimarkov:i:proof:max:b:claim:ii}\\
      & =
      \Prob^{\starredstrat_{\maxplay},\strat_{\minplay},n}_{\StochGK,s}(\hat{\rho}s'')\mult
      \frac{\sum_{\hat{\rho} \in D^n} \Prob^{\strat_{\maxplay},\strat_{\minplay},n}_{\StochGK,s}(\hat{\rho}s'')}{\Prob^{\strat_{\maxplay},\strat_{\minplay}}_{\StochGK,s}(D \Until^n s'')}
      \int_{\sgactions}\sgtrans(s'',\cdot,s')\mult\diff(\strat_{\maxplay}(\hat{\rho}s'')(\cdot))
      \label{eq:semimarkov:i:proof:max:b:claim:iii}\\
      & =
      0
      \label{eq:semimarkov:i:proof:max:b:claim:iv}
    \end{align}
    Step (\ref{eq:semimarkov:i:proof:max:b:claim:i}) follows by
    definition of
    $\Prob^{\starredstrat_{\maxplay},\strat_{\minplay},n+1}_{\StochGK,s}$,
    step (\ref{eq:semimarkov:i:proof:max:b:claim:ii}) by definition
    of $\starredstrat_{\minplay}$ and step
    (\ref{eq:semimarkov:i:proof:max:b:claim:iii}) by standard
    calculations.  Since necessarily
    $\int_{\sgactions}\sgtrans(s'',\cdot,s')\ \diff(\strat_{\minplay}(\hat{\rho}s'')(\cdot)) = 0$,
    the proof concludes with step
    (\ref{eq:semimarkov:i:proof:max:b:claim:iv}).

    For $s''\in\sgnmin$, it follows as before only
    differing in the case of
    $\Prob^{\strat_{\maxplay},\strat_{\minplay},n}_{\StochGK,s}(\hat{\rho}s'') > 0$,
    in which case
    $\int_{\sgactions}\sgtrans(s'',\cdot,s')\ \diff(\strat_{\minplay}(\hat{\rho}s'')(\cdot)) = 0$.
    Thus,
    $\Prob^{\starredstrat_{\maxplay},\strat_{\minplay},n+1}_{\StochGK,s}(\hat{\rho}s''s') =
    \Prob^{\starredstrat_{\maxplay},\strat_{\minplay},n}_{\StochGK,s}(\hat{\rho}s'')\int_{\sgactions}\sgtrans(s'',\cdot,s')\ \diff(\strat_{\minplay}(\hat{\rho}s'')(\cdot)) =
    0$.
    \qedclaim
  \end{proofofclaim}

  Now, notice that
  $\Prob^{\strat_{\maxplay},\strat_{\minplay}}_{\StochGK,s}(D \Until^n s'') = 0$
  implies that
  $\sum_{\hat{\rho}\in D^n}\Prob^{\strat_{\maxplay},\strat_{\minplay},n}_{\StochGK,s}(\hat{\rho}s'') = 0$,
  and therefore
  $\Prob^{\strat_{\maxplay},\strat_{\minplay},n}_{\StochGK,s}(\hat{\rho}s'') = 0$
  for all $\hat{\rho}\in D^n$.
  Recall that $s''\in\sgnmax$. Then,
  \begin{align}
    \sum_{\hat{\rho} \in D^n} \Prob^{\strat_{\maxplay},\strat_{\minplay},n+1}_{\StochGK,s}(\hat{\rho}s''s') 
    & =
    \sum_{\hat{\rho} \in D^n} \Prob^{\strat_{\maxplay},\strat_{\minplay},n}_{\StochGK,s}(\hat{\rho}s'')\int_{\sgactions}\sgtrans(s'',\cdot,s')\ \diff(\strat_{\maxplay}(\hat{\rho}s'')(\cdot))
    \label{eq:semimarkov:i:proof:max:b:i}\\
    & =
    0
    \label{eq:semimarkov:i:proof:max:b:ii}\\
    & =
    \sum_{\hat{\rho} \in D^n} \Prob^{\starredstrat_{\maxplay},\strat_{\minplay},n}_{\StochGK,s}(\hat{\rho}s'')\int_{\sgactions}\sgtrans(s'',\cdot,s')\ \diff(\starredstrat_{\maxplay}(\hat{\rho}s'')(\cdot))
    \label{eq:semimarkov:i:proof:max:b:iii}\\
    & =
    \sum_{\hat{\rho} \in D^n} \Prob^{\starredstrat_{\maxplay},\strat_{\minplay},n+1}_{\StochGK,s}(\hat{\rho}s''s')
    \label{eq:semimarkov:i:proof:max:b:iv}
  \end{align}
  In the above calculations,
  (\ref{eq:semimarkov:i:proof:max:b:i}) and
  (\ref{eq:semimarkov:i:proof:max:b:iv}) follow by definition of
  $\Prob^{\strat_{\maxplay},\strat_{\minplay},n+1}_{\StochGK,s}$ and
  $\Prob^{\starredstrat_{\maxplay},\strat_{\minplay},n+1}_{\StochGK,s}$, respectively,
  step (\ref{eq:semimarkov:i:proof:max:b:ii}) follow from the fact that
  $\Prob^{\strat_{\maxplay},\strat_{\minplay},n}_{\StochGK,s}(\hat{\rho}s'') = 0$
  for all $\hat{\rho}\in D^n$ and because of this, the claim yields
  (\ref{eq:semimarkov:i:proof:max:b:iii}).
  This concludes the proof of item~\ref{lm:semimarkov:i}.

  \medskip
  
  For~\ref{lm:semimarkov:ii}, we have that
  \begin{align*}
    \Prob^{\strat_{\maxplay},\strat_{\minplay}}_{\StochGK,s}(\Finally C)
    & =
    \sum_{s'\in C}\sum_{n\geq0}\Prob^{\strat_{\maxplay},\strat_{\minplay}}_{\StochGK,s}((\sgnodes\setminus C) \Until^n s')
    \\
    & =
    \sum_{s'\in C}\sum_{n\geq0}\Prob^{\starredstrat_{\maxplay},\strat_{\minplay}}_{\StochGK,s}((\sgnodes\setminus C) \Until^n s')
    =
    \Prob^{\starredstrat_{\maxplay},\strat_{\minplay}}_{\StochGK,s}(\Finally C)
  \end{align*}
  where the middle equality follow from item~\ref{lm:semimarkov:i},
  and the other two because, for all $n\neq m$ and $s'\in C$,
  $((\sgnodes\setminus C) \Until^n s') \cap ((\sgnodes\setminus C) \Until^m s') = \emptyset$.

  \medskip
  
  Item~\ref{lm:semimarkov:iii} can be calculated as follows.
  \begin{align}
    \Expect^{\strat_{\maxplay},\strat_{\minplay}}_{\StochGK,s}[\GRewards]
    & =
    \lim_{n\to\infty} \sum^{n}_{i=0} \sum_{s' \in \sgnodes} \Prob^{\strat_{\maxplay},\strat_{\minplay}}_{\StochGK,s}(\Finally^i s')\mult\fgen(i,n)\mult\reward(s')
    \label{eq:semimarkov:i:proof:i}\\
    & =
    \lim_{n\to\infty} \sum^{n}_{i=0} \sum_{s' \in \sgnodes} \Prob^{\starredstrat_{\maxplay},\strat_{\minplay}}_{\StochGK,s}(\Finally^i s')\mult\fgen(i,n)\mult\reward(s')
    \label{eq:semimarkov:i:proof:ii}\\
    & =
    \Expect^{\starredstrat_{\maxplay},\strat_{\minplay}}_{\StochGK,s}[\GRewards]
    \label{eq:semimarkov:i:proof:iii}
  \end{align}
  Steps (\ref{eq:semimarkov:i:proof:i}) and
  (\ref{eq:semimarkov:i:proof:iii}) follow from
  Lemma~\ref{lm:encoding:reach:and:expectation}.\ref{lm:encoding:reach:and:expectation:ii},
  while (\ref{eq:semimarkov:i:proof:ii}) follows by
  item~\ref{lm:semimarkov:ii} of this lemma.

  \medskip
  
  Notice that the proof can be replicated mutatis mutandi with
  $\maxplay$ and $\minplay$ exchanged.  Therefore the last part of the
  lemma also holds.
\qed
\end{proof}

\begin{proof}[of Lemma~\ref{lm:xsemimarkov}]
  \newcommand{\convp}{\mathsf{p}}%
  For any $K\in\DSimp(\sgnodes)$, $\mu\in K$ and
  $\hat{\mu}\in\vertices(K)$ define $\convp^K(\mu,\hat{\mu})\in[0,1]$
  such that
  $\sum_{\hat{\mu}\in\vertices(K)}\convp^K(\mu,\hat{\mu})\mult\hat{\mu}=\mu$.
  That is, all $\convp^K(\mu,\hat{\mu})$, $\hat{\mu}\in\vertices(K)$,
  are the unique factors that define the convex combination for $\mu$
  in the simplex $K$.  Therefore, $\convp^K(\mu,\hat{\mu})$ is well
  defined for all $K\in\DSimp(\sgnodes)$, $\mu\in K$ and
  $\hat{\mu}\in\vertices(K)$.  In any other case, let
  $\convp^K(\mu,\hat{\mu})=0$.

  Let $\convp((K,\mu),(K,\hat{\mu}))=\convp^K(\mu,\hat{\mu})$ for all
  $K\in\DSimp(\sgnodes)$, $\mu\in K$ and $\hat{\mu}\in\vertices(K)$,
  and let $\convp(a,b)=0$ for any other $a,b\in\sgactions$.
  For every $(K,\mu)\in\sgactions$ such that $\mu\in K$, let
  $\vertices(K,\mu)=\{(K,\hat{\mu})\mid\hat{\mu}\in\vertices(K)\}$ and
  let $\vertices(K,\mu)=\emptyset$ otherwise.
  Thus, for every $s\in\sgnodes$ and $a\in\sgactions$,
  \begin{equation}\label{eq:xsemimarkov:sgtrans:convp}
    \sgtrans(s,a,\cdot) =
    \sum_{b\in\vertices(a)}\convp(a,b)\mult\sgtrans(s,b,\cdot).
  \end{equation}

  We also extend $\convp$ to measurable sets $B\in\Salg_\sgactions$
  and $a\in\sgactions$ by
  $\convp(a,B)=\sum_{b\in B\cap\vertices(a)}\convp(a,b)$.
  We observe that $\convp(\cdot,B)$ is measurable for any
  $B\in\Salg_\sgactions$, which is a consequence of the following
  calculation, where $p\in[0,1]$ and $\proj_2(B)=\{\mu\mid(K,\mu)\in B\}$,
  \begin{align*}
    \{a\mid\convp(a,B)\geq p\}
    & = \textstyle
    \left\{(K,\mu)\mid\sum_{\hat{\mu}\in\proj_2(B)\cap\vertices(K)}\convp^K(\mu,\hat{\mu})\geq p \right\} \\ 
    & = \textstyle
    \bigcup_{K\in\bigcup_{s\in\psgnodes}\psgtrans(s)}\{K\}\times\left\{\mu\mid\sum_{\hat{\mu}\in\proj_2(B)\cap\vertices(K)}\convp^K(\mu,\hat{\mu})\geq p \right\}  
  \end{align*}
  %
  Notice that, if $p=0$ then $\{a\mid\convp(a,B)\geq p\}=\sgactions$,
  which is measurable. If $p>0$, notice that
  $\bigcup_{s\in\psgnodes}\psgtrans(s)$ is finite and
  $\left\{\mu\mid\sum_{\hat{\mu}\in\proj_2(B)\cap\vertices(K)}\convp^K(\mu,\hat{\mu})\geq p \right\} = \left\{\mu\in K\mid\sum_{\hat{\mu}\in\proj_2(B)\cap\vertices(K)}\convp^K(\mu,\hat{\mu})\geq p \right\}$
  is a convex polytope, hence measurable. Therefore,
  $\{a\mid\convp(a,B)\geq p\}$ is measurable.

  For every $\hat{\rho}\in\sgnodes^*$, $s'\in\sgnodes$ and
  $B\in\Salg_\sgactions$, define $\starredstrat_\maxplay$ by
  \[\starredstrat_\maxplay(\hat{\rho}s')(B) =
  \int_{\sgactions} \convp(\cdot,B)\ \diff(\strat_\maxplay(\hat{\rho}s')(\cdot)) =
  \int_{\enabled(s')} \convp(\cdot,B)\ \diff(\strat_\maxplay(\hat{\rho}s')(\cdot)).\]
  The first equality is the definition and the second equality follows
  from the fact that
  $\strat_\maxplay(\hat{\rho}s')(\sgactions\setminus\enabled(s'))=0$.
  $\starredstrat_\maxplay(\hat{\rho}s')$ is defined so that it assigns
  to each vertex of a simplex the weighted contribution (according to
  $\strat_\maxplay(\hat{\rho}s')$) of each distribution (in the said
  simplex) to such vertex.

  Since $\convp(\cdot,B)$ is measurable, $\starredstrat_\maxplay$ is
  well defined. Moreover, because $\strat_\maxplay$ is semi-Markov, so
  is $\starredstrat_\maxplay$.
  The following calculation shows that $\starredstrat_\maxplay$ is
  also extreme.
  \begin{align}
    \starredstrat_\maxplay(\hat{\rho}s')(\vertices(\sgactions(s')))
    \hspace{-6em}&\notag\\
    & =
    \int_{\enabled(s')} \convp(\cdot,\vertices(\sgactions(s')))\ \diff(\strat_\maxplay(\hat{\rho}s')(\cdot))
    \label{eq:xsemimarkov:starredstrat:is:x:proof:i} \\
    & =
    \int_{\enabled(s')} \convp(\cdot,\{(K',\hat{\mu})\mid K'\in\psgtrans(s'), \hat{\mu}\in\vertices(K')\})\ \diff(\strat_\maxplay(\hat{\rho}s')(\cdot))
    \label{eq:xsemimarkov:starredstrat:is:x:proof:ii} \\
    & =
    \int_{\enabled(s')} \lambda(K,\mu).\left(\sum_{K'\in\psgtrans(s'), \hat{\mu}\in\vertices(K')}\convp((K,\mu),(K',\hat{\mu}))\right)\ \diff(\strat_\maxplay(\hat{\rho}s')(\cdot))
    \label{eq:xsemimarkov:starredstrat:is:x:proof:iii} \\
    & =
    \int_{\enabled(s')} \lambda(K,\mu).\left(\charac_{\psgtrans(s')}(K)\sum_{\hat{\mu}\in\vertices(K)}\convp^K(\mu,\hat{\mu})\right)\ \diff(\strat_\maxplay(\hat{\rho}s')(\cdot))
    \label{eq:xsemimarkov:starredstrat:is:x:proof:iv} \\
    & =
    \int_{\enabled(s')} \diff(\strat_\maxplay(\hat{\rho}s')(\cdot))
    \label{eq:xsemimarkov:starredstrat:is:x:proof:v} \\
    & =
    \strat_\maxplay(\hat{\rho}s')(\enabled(s'))
    \label{eq:xsemimarkov:starredstrat:is:x:proof:vi} \\
    & = 1
    \label{eq:xsemimarkov:starredstrat:is:x:proof:vii}
  \end{align}
  The definition of $\starredstrat_\maxplay$ yields the first step
  (\ref{eq:xsemimarkov:starredstrat:is:x:proof:i}).
  Step (\ref{eq:xsemimarkov:starredstrat:is:x:proof:ii}) follows by
  observing that
  $\vertices(\sgactions(s')) =
  \{(K',\hat{\mu})\in\sgactions(s)\mid\hat{\mu}\in\vertices(K')\} =
  \{(K',\hat{\mu})\mid K'\in\psgtrans(s'), \hat{\mu}\in\vertices(K')\}$
  Using the lambda notation and the definition of $\convp$ on sets, we
  obtain (\ref{eq:xsemimarkov:starredstrat:is:x:proof:iii}).
  By noting that the sum is 0 if $K\notin \psgtrans(s')$, we introduce
  the characteristic function $\charac_{\psgtrans(s')}$ in
  (\ref{eq:xsemimarkov:starredstrat:is:x:proof:iv}), where we also
  apply the deintinion of $\convp$.
  The fact that
  $\sum_{\hat{\mu}\in\vertices(K)}\convp^K(\mu,\hat{\mu})=1$ and that
  $\charac_{\psgtrans(s')}(K)=1$ for all $(K,\mu)\in\enabled(s')$
  (since $K\in\psgtrans(s')$) yields
  (\ref{eq:xsemimarkov:starredstrat:is:x:proof:v}).
  Finally, step (\ref{eq:xsemimarkov:starredstrat:is:x:proof:vi})
  follows by the definition of the integral and step
  (\ref{eq:xsemimarkov:starredstrat:is:x:proof:vii}) because
  $\strat_\maxplay$ is a strategy.

  \medskip

  We now proceed to prove item~\ref{lm:xsemimarkov:i} by induction on
  $n$.  For $n=0$ we calculate:
  \[\Prob^{\strat_{\maxplay},\strat_{\minplay}}_{\StochGK,s}(D \Until^0 s') =
  \Prob^{\strat_{\maxplay},\strat_{\minplay},0}_{\StochGK,s}(s') =
  \Dirac_s(s') =
  \Prob^{\starredstrat_{\maxplay},\strat_{\minplay},0}_{\StochGK,s}(s') =
  \Prob^{\starredstrat_{\maxplay},\strat_{\minplay}}_{\StochGK,s}(D \Until^0 s').\]

  For $n+1>0$, we have that
  \begin{align*}
    \Prob^{\strat_{\maxplay},\strat_{\minplay}}_{\StochGK,s}(D \Until^{n+1} s')
    =
    \sum_{\hat{\rho}\in D^n,s''\in D}\Prob^{\strat_{\maxplay},\strat_{\minplay},n+1}_{\StochGK,s}(\hat{\rho}s''s')
    \hspace{-18em} &
    \\
    & =
    \sum_{\hat{\rho}\in D^n,s''\in(\sgnmax\cap D)}\Prob^{\strat_{\maxplay},\strat_{\minplay},n+1}_{\StochGK,s}(\hat{\rho}s''s') +
    \sum_{\hat{\rho}\in D^n,s''\in(\sgnmin\cap D)}\Prob^{\strat_{\maxplay},\strat_{\minplay},n+1}_{\StochGK,s}(\hat{\rho}s''s')
  \end{align*}
  and similarly for
  $\Prob^{\starredstrat_{\maxplay},\strat_{\minplay}}_{\StochGK,s}(D \Until^{n+1} s')$.
  Therefore, it sufficies to show that
  \begin{align}
    \sum_{\hat{\rho}\in D^n,s''\in\sgnmax}\Prob^{\strat_{\maxplay},\strat_{\minplay},n+1}_{\StochGK,s}(\hat{\rho}s''s')
    & =
    \sum_{\hat{\rho}\in D^n,s''\in\sgnmax}\Prob^{\starredstrat_{\maxplay},\strat_{\minplay},n+1}_{\StochGK,s}(\hat{\rho}s''s') \text{ \ and }
    \label{lm:xsemimarkov:i:obligation:i}\\
    \sum_{\hat{\rho}\in D^n,s''\in\sgnmin}\Prob^{\strat_{\maxplay},\strat_{\minplay},n+1}_{\StochGK,s}(\hat{\rho}s''s')
    & =
    \sum_{\hat{\rho}\in D^n,s''\in\sgnmin}\Prob^{\starredstrat_{\maxplay},\strat_{\minplay},n+1}_{\StochGK,s}(\hat{\rho}s''s').
    \label{lm:xsemimarkov:i:obligation:ii}
  \end{align}

  To prove (\ref{lm:xsemimarkov:i:obligation:i}), we calculate as
  follows:
  \begin{align}
    \sum_{\hat{\rho}\in D^n,s''\in\sgnmax}\Prob^{\starredstrat_{\maxplay},\strat_{\minplay},n+1}_{\StochGK,s}(\hat{\rho}s''s')
    \hspace{-11.4em}\notag\\
    & =
    \sum_{\hat{\rho}\in D^n,s''\in\sgnmax}\Prob^{\starredstrat_{\maxplay},\strat_{\minplay},n}_{\StochGK,s}(\hat{\rho}s'')\int_{\sgactions}\sgtrans(s'',\cdot,s')\ \diff(\starredstrat_{\maxplay}(\hat{\rho}s'')(\cdot))
    \label{eq:xsemimarkov:i:obligation:i:proof:i} \\
    & =
    \sum_{s''\in\sgnmax}\left(\sum_{\hat{\rho}\in D^n}\Prob^{\starredstrat_{\maxplay},\strat_{\minplay},n}_{\StochGK,s}(\hat{\rho}s'')\right)\int_{\sgactions}\sgtrans(s'',\cdot,s')\ \diff(\starredstrat_{\maxplay}(n,s'')(\cdot))
    \label{eq:xsemimarkov:i:obligation:i:proof:ii} \\
    & =
    \sum_{s''\in\sgnmax}\Prob^{\starredstrat_{\maxplay},\strat_{\minplay}}_{\StochGK,s}(D \Until^n s'')\sum_{a\in\vertices(\enabled(s''))}\sgtrans(s'',a,s')\mult\starredstrat_{\maxplay}(n,s'')(\{a\})
    \label{eq:xsemimarkov:i:obligation:i:proof:iii} \\
    & =
    \sum_{s''\in\sgnmax}\Prob^{\strat_{\maxplay},\strat_{\minplay}}_{\StochGK,s}(D \Until^n s'')\sum_{a\in\vertices(\enabled(s''))}\sgtrans(s'',a,s')\int_{\sgactions} \convp(\cdot,\{a\})\ \diff(\strat_\maxplay(n,s'')(\cdot))
    \label{eq:xsemimarkov:i:obligation:i:proof:iv} \\
    & =
    \sum_{s''\in\sgnmax}\left(\sum_{\hat{\rho}\in D^n}\Prob^{\strat_{\maxplay},\strat_{\minplay},n}_{\StochGK,s}(\hat{\rho}s'')\right)
    \notag\\
    &\qquad\qquad\qquad
    \int_{\sgactions} \lambda x.\left(\sum_{a\in\vertices(\enabled(s''))}\convp(x,a)\mult\sgtrans(s'',a,s')\right) \diff(\strat_\maxplay(n,s'')(\cdot))
    \label{eq:xsemimarkov:i:obligation:i:proof:v} \\
    & =
    \sum_{\hat{\rho}\in D^n,s''\in\sgnmax}\Prob^{\strat_{\maxplay},\strat_{\minplay},n}_{\StochGK,s}(\hat{\rho}s'')
    \notag\\
    &\qquad\qquad\qquad 
    \int_{\sgactions} \lambda x.\left(\sum_{a\in\vertices(x)}\convp(x,a)\mult\sgtrans(s'',a,s')\right) \diff(\strat_\maxplay(n,s'')(\cdot))
    \label{eq:xsemimarkov:i:obligation:i:proof:vi} \\
    & =
    \sum_{\hat{\rho}\in D^n,s''\in\sgnmax}\Prob^{\strat_{\maxplay},\strat_{\minplay},n}_{\StochGK,s}(\hat{\rho}s'') \int_{\sgactions} \sgtrans(s'',\cdot,s')\ \diff(\strat_\maxplay(\hat{\rho}s'')(\cdot))
    \label{eq:xsemimarkov:i:obligation:i:proof:vii} \\
    & =
    \sum_{\hat{\rho}\in D^n,s''\in\sgnmax}\Prob^{\strat_{\maxplay},\strat_{\minplay},n+1}_{\StochGK,s}(\hat{\rho}s''s')
    \label{eq:xsemimarkov:i:obligation:i:proof:viii}
  \end{align}
  (\ref{eq:xsemimarkov:i:obligation:i:proof:i}) follows by the
  definition of
  $\Prob^{\starredstrat_{\maxplay},\strat_{\minplay},n+1}_{\StochGK,s}$
  while (\ref{eq:xsemimarkov:i:obligation:i:proof:ii}) follows from
  the fact the $\starredstrat_\maxplay$ is semi-Markov.
  Step (\ref{eq:xsemimarkov:i:obligation:i:proof:iii}) follows from
  the definition of
  $\Prob^{\starredstrat_{\maxplay},\strat_{\minplay}}_{\StochGK,s}(D \Until^n s'')$
  and the fact $\vertices(\enabled(s))$, the support set of
  $\starredstrat_{\maxplay}(n,s'')$, is finite.
  In (\ref{eq:xsemimarkov:i:obligation:i:proof:iv}), induction
  hypothesis is applied as well as the definition of
  $\starredstrat_{\maxplay}(n,s'')(\{a\})$.
  (\ref{eq:xsemimarkov:i:obligation:i:proof:v}) follows by the
  definition of
  $\Prob^{\strat_{\maxplay},\strat_{\minplay}}_{\StochGK,s}(D \Until^n s'')$,
  calculations, the introduction of the $\lambda$ notation and the
  fact that $\convp(x,\{a\})=\convp(x,a)$.
  In (\ref{eq:xsemimarkov:i:obligation:i:proof:vi}), we observe that
  $\convp(x,a)>0$ only if $a\in\vertices(x)$.
  (\ref{eq:xsemimarkov:i:obligation:i:proof:vii}) follows from
  (\ref{eq:xsemimarkov:sgtrans:convp}) and the fact $\enabled(s)$ is
  the support set of $\strat_\maxplay(\hat{\rho}s'')$.
  Finally, the definition of
  $\Prob^{\strat_{\maxplay},\strat_{\minplay},n+1}_{\StochGK,s}$ is
  applied in (\ref{eq:xsemimarkov:i:obligation:i:proof:viii}).

  To prove (\ref{lm:xsemimarkov:i:obligation:ii}), we calculate as
  follows:
  \begin{align}
    \sum_{\hat{\rho}\in D^n,s''\in\sgnmin}\Prob^{\starredstrat_{\maxplay},\strat_{\minplay},n+1}_{\StochGK,s}(\hat{\rho}s''s')
    \hspace{-6em}\notag\\
    & =
    \sum_{\hat{\rho}\in D^n,s''\in\sgnmin}\Prob^{\starredstrat_{\maxplay},\strat_{\minplay},n}_{\StochGK,s}(\hat{\rho}s'')\int_{\sgactions}\sgtrans(s'',\cdot,s')\ \diff(\strat_{\minplay}(\hat{\rho}s'')(\cdot))
    \label{eq:xsemimarkov:i:obligation:ii:proof:i} \\
    & =
    \sum_{s''\in\sgnmin}\left(\sum_{\hat{\rho}\in D^n}\Prob^{\starredstrat_{\maxplay},\strat_{\minplay},n}_{\StochGK,s}(\hat{\rho}s'')\right)\int_{\sgactions}\sgtrans(s'',\cdot,s')\ \diff(\strat_{\minplay}(n,s'')(\cdot))
    \label{eq:xsemimarkov:i:obligation:ii:proof:ii} \\
    & =
    \sum_{s''\in\sgnmin}\Prob^{\starredstrat_{\maxplay},\strat_{\minplay}}_{\StochGK,s}(D \Until^n s'')\int_{\sgactions}\sgtrans(s'',\cdot,s')\ \diff(\strat_{\minplay}(n,s'')(\cdot))
    \label{eq:xsemimarkov:i:obligation:ii:proof:iii} \\
    & =
    \sum_{s''\in\sgnmin}\Prob^{\strat_{\maxplay},\strat_{\minplay}}_{\StochGK,s}(D \Until^n s'')\int_{\sgactions}\sgtrans(s'',\cdot,s')\ \diff(\strat_{\minplay}(n,s'')(\cdot))
    \label{eq:xsemimarkov:i:obligation:ii:proof:iv} \\
    & =
    \sum_{\hat{\rho}\in D^n,s''\in\sgnmin}\Prob^{\strat_{\maxplay},\strat_{\minplay},n+1}_{\StochGK,s}(\hat{\rho}s''s')
    \label{eq:xsemimarkov:i:obligation:ii:proof:v}
  \end{align}
  Step (\ref{eq:xsemimarkov:i:obligation:ii:proof:i}) follows by the
  definition of
  $\Prob^{\starredstrat_{\maxplay},\strat_{\minplay},n+1}_{\StochGK,s}$,
  (\ref{eq:xsemimarkov:i:obligation:ii:proof:ii}) follows from the
  fact the $\strat_\minplay$ is semi-Markov, and
  (\ref{eq:xsemimarkov:i:obligation:ii:proof:iii}) follows from the
  definition of
  $\Prob^{\starredstrat_{\maxplay},\strat_{\minplay}}_{\StochGK,s}(D \Until^n s'')$.
  Induction hypothesis is applied in
  (\ref{eq:xsemimarkov:i:obligation:ii:proof:iv}).
  Finally (\ref{eq:xsemimarkov:i:obligation:ii:proof:v}) follows like
  the first three steps in the inverse order.

  \medskip
  
  For~\ref{lm:xsemimarkov:ii}, we have that
  \begin{align*}
    \Prob^{\strat_{\maxplay},\strat_{\minplay}}_{\StochGK,s}(\Finally C)
    & =
    \sum_{s'\in C}\sum_{n\geq0}\Prob^{\strat_{\maxplay},\strat_{\minplay}}_{\StochGK,s}((\sgnodes\setminus C) \Until^n s')
    \\
    & =
    \sum_{s'\in C}\sum_{n\geq0}\Prob^{\starredstrat_{\maxplay},\strat_{\minplay}}_{\StochGK,s}((\sgnodes\setminus C) \Until^n s')
    =
    \Prob^{\starredstrat_{\maxplay},\strat_{\minplay}}_{\StochGK,s}(\Finally C)
  \end{align*}
  where the middle equality follow from item~\ref{lm:xsemimarkov:i},
  and the other two because, for all $n\neq m$ and $s'\in C$,
  $((\sgnodes\setminus C) \Until^n s') \cap ((\sgnodes\setminus C) \Until^m s') = \emptyset$.

  \medskip

  Item~\ref{lm:xsemimarkov:iii} can be calculated as follows:
  \begin{align*}
    \Expect^{\strat_{\maxplay},\strat_{\minplay}}_{\StochGK,s}[\GRewards]
    & =
    \lim_{n\to\infty} \sum^{n}_{i=0} \sum_{s' \in \sgnodes} \Prob^{\strat_{\maxplay},\strat_{\minplay}}_{\StochGK,s}(\Finally^i s')\mult\fgen(i,n)\mult\reward(s') 
    \tag{by Lemma~\ref{lm:encoding:reach:and:expectation}.\ref{lm:encoding:reach:and:expectation:ii}}\\
    & =
    \lim_{n\to\infty} \sum^{n}_{i=0} \sum_{s' \in \sgnodes} \Prob^{\starredstrat_{\maxplay},\strat_{\minplay}}_{\StochGK,s}(\Finally^i s')\mult\fgen(i,n)\mult\reward(s') 
    \tag{by item~\ref{lm:xsemimarkov:ii}}\\
    & =
    \Expect^{\starredstrat_{\maxplay},\strat_{\minplay}}_{\StochGK,s}[\GRewards]
    \tag{by Lemma~\ref{lm:encoding:reach:and:expectation}.\ref{lm:encoding:reach:and:expectation:ii}}
  \end{align*}

  \medskip
  
  Notice that the proof can be replicated mutatis mutandi with
  $\maxplay$ and $\minplay$ exchanged.  Therefore the last part of the
  lemma also holds.
\qed
\end{proof}

\begin{proposition}\label{prop:StochGK:StochHK}
  Let $\StochGK$ and $\StochHK$ be respectively the interpretation and
  the extreme interpretation of $\StochK$. Then
  \begin{enumerate}
  \item\label{prop:StochGK:StochHK:i}%
    For every
    $\strat_\maxplay\in\XSemiMarkovStrats{\StochGK,\maxplay}$ and
    $\strat_\minplay\in\XSemiMarkovStrats{\StochGK,\minplay}$,
    \begin{enumerate*}
    \item\label{prop:StochGK:StochHK:i:a}%
      $\Prob^{\strat_{\maxplay},\strat_{\minplay}}_{\StochGK,s}(\Finally C) =
      \Prob^{\stratv_{\maxplay},\stratv_{\minplay}}_{\StochHK,s}(\Finally C)$,
      for all $C\subseteq\sgnodes$, and
    \item\label{prop:StochGK:StochHK:i:b}%
      $\Expect^{\strat_{\maxplay},\strat_{\minplay}}_{\StochGK,s}[\GRewards] =
      \Expect^{\stratv_{\maxplay},\stratv_{\minplay}}_{\StochHK,s}[\GRewards]$; and
    \end{enumerate*}
  \item\label{prop:StochGK:StochHK:ii}%
    For every
    $\strat_\maxplay\in\SemiMarkovStrats{\StochHK,\maxplay}$ and
    $\strat_\minplay\in\SemiMarkovStrats{\StochHK,\minplay}$,
    \begin{enumerate*}
    \item\label{prop:StochGK:StochHK:ii:a}%
      $\Prob^{\stratx_{\maxplay},\stratx_{\minplay}}_{\StochGK,s}(\Finally C) =
      \Prob^{\strat_{\maxplay},\strat_{\minplay}}_{\StochHK,s}(\Finally C)$,
      for all $C\subseteq\sgnodes$, and
    \item\label{prop:StochGK:StochHK:ii:b}%
      $\Expect^{\stratx_{\maxplay},\stratx_{\minplay}}_{\StochGK,s}[\GRewards] =
      \Expect^{\strat_{\maxplay},\strat_{\minplay}}_{\StochHK,s}[\GRewards]$.
    \end{enumerate*}
  \end{enumerate}
\end{proposition}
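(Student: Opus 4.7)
The plan is to reduce the proposition to equality (\ref{eq:StochGK:StochHK}), which the paper sketches by induction on $n$. First I would establish the finite-step identity
\[\Prob^{\strat_{\maxplay},\strat_{\minplay},n}_{\StochGK,s} = \Prob^{\stratv_{\maxplay},\stratv_{\minplay},n}_{\StochHK,s}\]
for item~\ref{prop:StochGK:StochHK:i}. The base case $n=0$ is immediate since both measures equal $\Dirac_s$. For the inductive step, consider an extension $\hat{\rho}s''s'$; expand $\Prob^{\cdot,n+1}$ by its defining integral against $\strat_i(\hat{\rho}s'')$. Because $\strat_i$ is extreme, this measure is concentrated on the finite set $\{(K,\mu)\in\sgactions(s'')\mid\mu\in\vertices(K)\}=\vertices(\enabled(s''))$, so the integral collapses to a finite sum indexed by $a\in\vertices(\enabled(s''))$. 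By construction, $\stratv_i(\hat{\rho}s'')(\{a\})=\strat_i(\hat{\rho}s'')(\{a\})$ for every such $a$, and $\sgtrans_\StochHK(s'',a,s')=\sgtrans(s'',a,s')$; this sum is precisely the defining summation for $\Prob^{\stratv_{\maxplay},\stratv_{\minplay},n+1}_{\StochHK,s}(\hat{\rho}s''s')$. Applying the inductive hypothesis to the prefix $\hat{\rho}s''$ concludes the step.

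Item~\ref{prop:StochGK:StochHK:ii} follows by the mirrored induction. For $\strat_i\in\SemiMarkovStrats{\StochHK,i}$, the strategy $\stratx_i$ lives on $\vertices(\sgactions)$, assigns mass $\strat_i(\hat{\rho}s)(\{a\})$ to each $a\in\vertices(\enabled(s))$, and is null on $\sgactions\setminus\vertices(\sgactions)$. The same integral-to-sum collapse applies verbatim, yielding
\[\Prob^{\stratx_{\maxplay},\stratx_{\minplay},n}_{\StochGK,s} = \Prob^{\strat_{\maxplay},\strat_{\minplay},n}_{\StochHK,s}.\]
Once the finite-step equalities are established, Carath\'eodory's extension theorem (as used in the definition of the path measures in Section~\ref{sec:polytopal-games}) lifts them uniquely to the full probability measures on $\Salg_{\sgnodes^\omega}$, giving (\ref{eq:StochGK:StochHK}).

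From (\ref{eq:StochGK:StochHK}), the reachability statements~\ref{prop:StochGK:StochHK:i:a} and~\ref{prop:StochGK:StochHK:ii:a} follow directly, either by observing that $\Finally C\in\Salg_{\sgnodes^\omega}$ and both measures agree on this $\sigma$-algebra, or by the explicit countable disjoint decomposition $\Finally C = \biguplus_{s'\in C,\,n\geq 0} (\sgnodes\setminus C)\Until^n s'$ used at the end of the proofs of Lemmas~\ref{lm:semimarkov} and~\ref{lm:xsemimarkov}. The expected-reward statements~\ref{prop:StochGK:StochHK:i:b} and~\ref{prop:StochGK:StochHK:ii:b} then follow from Lemma~\ref{lm:encoding:reach:and:expectation}.\ref{lm:encoding:reach:and:expectation:ii}, since that formula expresses $\Expect[\GRewards]$ as a limit depending only on the values $\Prob(\Finally^i s')$ and the reward function $\reward$, which are common to both sides.

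The only delicate point is the inductive step's passage from an integral against $\strat_i(\hat{\rho}s)$ over $\sgactions$ to a finite sum over $\vertices(\enabled(s))$; this must be justified by the fact that extreme strategies have finite support contained in $\vertices(\sgactions)$, making $\strat_i(\hat{\rho}s)(\sgactions\setminus\vertices(\sgactions))=0$ and hence reducing the integral to a discrete evaluation. Once this observation is made, the remaining bookkeeping is entirely analogous to the corresponding calculations in the proof of Lemma~\ref{lm:xsemimarkov}, and no new techniques are required. \qed
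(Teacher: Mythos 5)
Your proof is correct and follows essentially the same route as the paper: both establish the finite-step identity $\Prob^{\strat_{\maxplay},\strat_{\minplay},n}_{\StochGK,s}=\Prob^{\stratv_{\maxplay},\stratv_{\minplay},n}_{\StochHK,s}$ by induction (collapsing the integral to a finite sum over $\vertices(\enabled(s''))$ using the finite support of extreme strategies), then derive the reachability equalities via the disjoint decomposition of $\Finally C$ and the expectation equalities via Lemma~\ref{lm:encoding:reach:and:expectation}.\ref{lm:encoding:reach:and:expectation:ii}. Your explicit appeal to Carath\'eodory's extension theorem is a harmless additional remark; the paper obtains the same conclusions directly from the finite-step identities.
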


\begin{proof}[of Proposition~\ref{prop:StochGK:StochHK}]
  To prove item \ref{prop:StochGK:StochHK:i}, we first prove by
  induction that for all $n\geq0$ and $\hat{\rho}\in\sgnodes^{n+1}$,
  \begin{equation}\label{eq:StochGK:StochHK:i:aux}
    \Prob^{\strat_{\maxplay},\strat_{\minplay},n}_{\StochGK,s}(\hat{\rho}) =
    \Prob^{\stratv_{\maxplay},\stratv_{\minplay},n}_{\StochHK,s}(\hat{\rho})
  \end{equation}

  The case $n=0$ is direct. For $n+1>0$, $\hat{\rho}\in\sgnodes^n$,
  $s'\in\sgnodes$ and $s''\in\sgnmax$ we calculate as follows
  \begin{align}
    \Prob^{\strat_{\maxplay},\strat_{\minplay},n+1}_{\StochGK,s}(\hat{\rho}s''s')
    & =
    \Prob^{\strat_{\maxplay},\strat_{\minplay},n}_{\StochGK,s}(\hat{\rho}s'')\int_{\sgactions}\sgtrans(s'',\cdot,s')\ \diff(\strat_{\maxplay}(\hat{\rho}s'')(\cdot))
    \label{eq:StochGK:StochHK:i:aux:i}\\
    & =
    \Prob^{\strat_{\maxplay},\strat_{\minplay},n}_{\StochGK,s}(\hat{\rho}s'')\sum_{a\in\vertices(\enabled(s''))}\sgtrans(s'',a,s')\mult\strat_{\maxplay}(\hat{\rho}s'')(\{a\})
    \label{eq:StochGK:StochHK:i:aux:ii}\\
    & =
    \Prob^{\stratv_{\maxplay},\stratv_{\minplay},n}_{\StochHK,s}(\hat{\rho}s'')\sum_{a\in\enabled_\StochHK(s'')}\sgtrans_\StochHK(s'',a,s')\mult\stratv_{\maxplay}(\hat{\rho}s'')(\{a\})
    \label{eq:StochGK:StochHK:i:aux:iii}\\
    & =
    \Prob^{\stratv_{\maxplay},\stratv_{\minplay},n+1}_{\StochHK,s}(\hat{\rho}s''s')
    \label{eq:StochGK:StochHK:i:aux:iv}
  \end{align}
  Step (\ref{eq:StochGK:StochHK:i:aux:i}) follows from definition of
  $\Prob^{\strat_{\maxplay},\strat_{\minplay},n+1}_{\StochGK,s}(\hat{\rho}s''s')$.
  To obtain (\ref{eq:StochGK:StochHK:i:aux:ii}) we use the fact that
  the support set $\vertices(\enabled(s''))$ of
  $\strat_\maxplay(\hat{\rho}s'')$ is finite.
  In (\ref{eq:StochGK:StochHK:i:aux:iii}) induction hypothesis is
  applied, together with the definitions of
  $\sgtrans_\StochHK(s'',a,s')$ and
  $\stratv_{\maxplay}(\hat{\rho}s'')(\{a\})$.
  Finally, the definition of
  $\Prob^{\stratv_{\maxplay},\stratv_{\minplay},n+1}_{\StochHK,s}(\hat{\rho}s''s')$
  yields (\ref{eq:StochGK:StochHK:i:aux:iv}).
  
  The calculations follow similarly for the case of $s''\in\sgnmin$,
  which proves (\ref{eq:StochGK:StochHK:i:aux}).

  Now item \ref{prop:StochGK:StochHK:i:a} follows from the following
  calculations:
  \begin{align}
    \Prob^{\strat_{\maxplay},\strat_{\minplay}}_{\StochGK,s}(\Finally C)
    & =
    \sum_{n\geq0}\Prob^{\strat_{\maxplay},\strat_{\minplay}}_{\StochGK,s}((\sgnodes\setminus C) \Until^n C)
    \label{eq:StochGK:StochHK:i:a:proof:i}\\
    & =
    \sum_{n\geq0}\sum_{s'\in C}\sum_{\hat{\rho}\in (\sgnodes\setminus C)^n}\Prob^{\strat_{\maxplay},\strat_{\minplay},n}_{\StochGK,s}(\hat{\rho}s')
    \label{eq:StochGK:StochHK:i:a:proof:ii}\\
    & =
    \sum_{n\geq0}\sum_{s'\in C}\sum_{\hat{\rho}\in (\sgnodes\setminus C)^n}\Prob^{\stratv_{\maxplay},\stratv_{\minplay},n}_{\StochHK,s}(\hat{\rho}s')
    \label{eq:StochGK:StochHK:i:a:proof:iii}\\
    & =
    \Prob^{\stratv_{\maxplay},\stratv_{\minplay}}_{\StochHK,s}(\Finally C)
    \label{eq:StochGK:StochHK:i:a:proof:iv}
  \end{align}
  Equality~(\ref{eq:StochGK:StochHK:i:a:proof:i}) follows from the fact
  $((\sgnodes\setminus C) \Until^n C) \cap ((\sgnodes\setminus C) \Until^m C) = \emptyset$
  whenever $n\neq m$ and
  equality~(\ref{eq:StochGK:StochHK:i:a:proof:ii}), from the definition
  of
  $\Prob^{\strat_{\maxplay},\strat_{\minplay}}_{\StochGK,s}((\sgnodes\setminus C) \Until^n C)$.
  The auxiliary result (\ref{eq:StochGK:StochHK:i:aux}) yields
  (\ref{eq:StochGK:StochHK:i:a:proof:iii}) and
  (\ref{eq:StochGK:StochHK:i:a:proof:iv}) follows as the previous steps
  in inverse direction.

  For item~\ref{prop:StochGK:StochHK:i:b}, we calculate as follows
  \begin{align*}
    \Expect^{\strat_{\maxplay},\strat_{\minplay}}_{\StochGK,s}[\GRewards]
    \hspace{-3em} &
    \\
    & =    
    \lim_{n\to\infty} \sum^{n}_{i=0}\sum_{s' \in \sgnodes} \Prob^{\strat_{\maxplay},\strat_{\minplay}}_{\StochGK,s}(\Finally^i s')\mult\fgen(i,n)\mult\reward(s')
    \tag{by Lemma~\ref{lm:encoding:reach:and:expectation}.\ref{lm:encoding:reach:and:expectation:ii}}\\
    & =    
    \lim_{n\to\infty} \sum^{n}_{i=0}\sum_{s' \in \sgnodes}\sum_{\hat{\rho} \in \sgnodes^i} \Prob^{\strat_{\maxplay},\strat_{\minplay},i}_{\StochGK,s}(\hat{\rho}s')\mult\fgen(i,n)\mult\reward(s')
    \tag{by def. of $\Prob^{\strat_{\maxplay},\strat_{\minplay}}_{\StochGK,s}(\Finally^i s')$}\\
    & =    
    \lim_{n\to\infty} \sum^{n}_{i=0}\sum_{s' \in \sgnodes}\sum_{\hat{\rho} \in \sgnodes^i} \Prob^{\stratv_{\maxplay},\stratv_{\minplay},i}_{\StochGK,s}(\hat{\rho}s')\mult\fgen(i,n)\mult\reward(s')
    \tag{because of (\ref{eq:StochGK:StochHK:i:aux})}\\
    & =
    \lim_{n\to\infty} \sum^{n}_{i=0}\sum_{s' \in \sgnodes} \Prob^{\stratv_{\maxplay},\stratv_{\minplay}}_{\StochGK,s}(\Finally^i s')\mult\fgen(i,n)\mult\reward(s')
    \tag{by def. of $\Prob^{\stratv_{\maxplay},\stratv_{\minplay}}_{\StochGK,s}(\Finally^i s')$}\\
    & =
    \Expect^{\stratv_{\maxplay},\stratv_{\minplay}}_{\StochGK,s}[\GRewards]
    \tag{by Lemma~\ref{lm:encoding:reach:and:expectation}.\ref{lm:encoding:reach:and:expectation:ii}}
  \end{align*}

  To prove item \ref{prop:StochGK:StochHK:ii} we proceed similarly.
  Thus the proof boils down to show that for all $n\geq0$ and
  $\hat{\rho}\in\sgnodes^{n+1}$,
  $\Prob^{\stratx_{\maxplay},\stratx_{\minplay},n}_{\StochGK,s}(\hat{\rho})
  =
  \Prob^{\strat_{\maxplay},\strat_{\minplay},n}_{\StochHK,s}(\hat{\rho})$
  which can be done just like for (\ref{eq:StochGK:StochHK:i:aux}).
  \qed
\end{proof}

\begin{proof}[of Proposition~\ref{prop:infsup:supinf:StochGK:StochHK}]
  Let $\strat_i\in\XSemiMarkovStrats{\StochGK,i}$.  Then, for all
  $\hat{\rho}\in\sgnodes^*$, $s\in\sgnodes$, and
  $A\in\Salg_\sgactions$,
  $\supx{(\stratv_i)}(\hat{\rho}s)(A)=\stratv_i(\hat{\rho}s)(A\cap\vertices(\sgactions))=\strat_i(\hat{\rho}s)(A\cap\vertices(\sgactions))=\strat_i(\hat{\rho}s)(A)$.
  All equalities follows from definitions except the last one that
  follows from the fact that $\strat_i$ is extreme.
  Similarly, for $\strat_i\in\SemiMarkovStrats{\StochHK,i}$, and all
  $\hat{\rho}\in\sgnodes^*$, $s\in\sgnodes$, and
  $A\subseteq\vertices(\sgactions)$,
  $\supv{(\stratx_i)}(\hat{\rho}s)(A)=\stratx_i(\hat{\rho}s)(A)=\strat_i(\hat{\rho}s)(A\cap\vertices(\sgactions))=\strat_i(\hat{\rho}s)(A)$.
  Again, all equalities follows from definitions except the last one
  that follows from the fact that $A\subseteq\vertices(\sgactions)$.
  From these observations, it follows that
  \begin{equation}\label{eq:observation:semimarkov:xsemimarkov}
    \SemiMarkovStrats{\StochHK,i}=\{\stratv_i\mid\strat_i\in\XSemiMarkovStrats{\StochGK,i}\}
    \text{ and }
    \XSemiMarkovStrats{\StochGK,i}=\{\stratx_i\mid\strat_i\in\SemiMarkovStrats{\StochHK,i}\}.
  \end{equation}
  These equalities and Proposition~\ref{prop:StochGK:StochHK} leads to
  the result.
  \qed
\end{proof}

\begin{proof}[Proof of Proposition~\ref{prop:stopping:irreducible:StochGK:StochHK}]
  For item~\ref{prop:stopping:irreducible:StochGK:StochHK:i} we have that
  \begin{align*}
    1
    & =
    \inf_{\strat_\minplay\in\Strategies{\StochGK,\minplay}}\inf_{\strat_\maxplay\in\Strategies{\StochGK,\maxplay}}\Prob^{\strat_{\maxplay},\strat_{\minplay}}_{\StochGK,s}(\Finally T)
    \tag{def. of stopping}\\
    & \leq
    \inf_{\strat_\minplay\in\XSemiMarkovStrats{\StochGK,\minplay}}\inf_{\strat_\maxplay\in\XSemiMarkovStrats{\StochGK,\maxplay}}\Prob^{\strat_{\maxplay},\strat_{\minplay}}_{\StochGK,s}(\Finally T)
    \tag{$\XSemiMarkovStrats{\StochGK,i}\subseteq\Strategies{\StochGK,\minplay}$}\\
    & =
    \inf_{\strat_\minplay\in\SemiMarkovStrats{\StochHK,\minplay}}\inf_{\strat_\maxplay\in\SemiMarkovStrats{\StochHK,\maxplay}}\Prob^{\strat_{\maxplay},\strat_{\minplay}}_{\StochHK,s}(\Finally T)
    \tag{by observation~(\ref{eq:observation:semimarkov:xsemimarkov})}\\
    & =
    \inf_{\strat_\minplay\in\Strategies{\StochHK,\minplay}}\inf_{\strat_\maxplay\in\Strategies{\StochHK,\maxplay}}\Prob^{\strat_{\maxplay},\strat_{\minplay}}_{\StochHK,s}(\Finally T)
  \end{align*}
  Since $\StochHK$ is finite, the last step follows from standard
  results on MDP~\cite{Puterman94}.
  
  With a similar reasoning we obtain
  $0 <
  \inf_{\strat_\minplay\in\Strategies{\StochGK,\minplay}}\inf_{\strat_\maxplay\in\Strategies{\StochGK,\maxplay}}\Prob^{\strat_{\maxplay},\strat_{\minplay}}_{\StochGK,s}(\Finally s')
  \leq
  \inf_{\strat_\minplay\in\Strategies{\StochHK,\minplay}}\inf_{\strat_\maxplay\in\Strategies{\StochHK,\maxplay}}\Prob^{\strat_{\maxplay},\strat_{\minplay}}_{\StochHK,s}(\Finally s')$,
  proving thus also item~\ref{prop:stopping:irreducible:StochGK:StochHK:ii}.
  \qed
\end{proof}

\begin{proof}[of Theorem~\ref{th:determinacy:and:discretazation}]
  For item \ref{th:determinacy:and:discretazation:i}, we calculate as
  follows:
  \begin{align*}
    \inf_{\strat_\minplay\in\Strategies{\StochGK,\minplay}}\sup_{\strat_\maxplay\in\Strategies{\StochGK,\maxplay}}\Prob^{\strat_{\maxplay},\strat_{\minplay}}_{\StochGK,s}(\Finally C)
    \hspace{-8em} &
    \\
    & \leq
    \inf_{\strat_\minplay\in\SemiMarkovStrats{\StochGK,\minplay}}\sup_{\strat_\maxplay\in\Strategies{\StochGK,\maxplay}}\Prob^{\strat_{\maxplay},\strat_{\minplay}}_{\StochGK,s}(\Finally C)
    \tag{$\SemiMarkovStrats{\StochGK,\minplay}\subseteq\Strategies{\StochGK,\minplay}$}\\
    & =
    \inf_{\strat_\minplay\in\SemiMarkovStrats{\StochGK,\minplay}}\sup_{\strat_\maxplay\in\SemiMarkovStrats{\StochGK,\maxplay}}\Prob^{\strat_{\maxplay},\strat_{\minplay}}_{\StochGK,s}(\Finally C)
    \tag{by Lemma~\ref{lm:semimarkov}.\ref{lm:semimarkov:ii}}\\
    & =
    \inf_{\strat_\minplay\in\XSemiMarkovStrats{\StochGK,\minplay}}\sup_{\strat_\maxplay\in\XSemiMarkovStrats{\StochGK,\maxplay}}\Prob^{\strat_{\maxplay},\strat_{\minplay}}_{\StochGK,s}(\Finally C)
    \tag{by Corollary~\ref{cor:xsemimarkov}.\ref{cor:xsemimarkov:i}}\\
    & =
    \inf_{\strat_\minplay\in\SemiMarkovStrats{\StochHK,\minplay}}\sup_{\strat_\maxplay\in\SemiMarkovStrats{\StochHK,\maxplay}}\Prob^{\strat_{\maxplay},\strat_{\minplay}}_{\StochHK,s}(\Finally C)
    \tag{by Prop.~\ref{prop:infsup:supinf:StochGK:StochHK}.\ref{prop:infsup:supinf:StochGK:StochHK:i}}\\
    & \leq
    \inf_{\strat_\minplay\in\DetMemorylessStrats{\StochHK,\minplay}}\sup_{\strat_\maxplay\in\SemiMarkovStrats{\StochHK,\maxplay}}\Prob^{\strat_{\maxplay},\strat_{\minplay}}_{\StochHK,s}(\Finally C)
    \tag{$\DetMemorylessStrats{\StochHK,\minplay}\subseteq\SemiMarkovStrats{\StochHK,\minplay}$}\\
    & =
    \inf_{\strat_\minplay\in\DetMemorylessStrats{\StochHK,\minplay}}\sup_{\strat_\maxplay\in\DetMemorylessStrats{\StochHK,\maxplay}}\Prob^{\strat_{\maxplay},\strat_{\minplay}}_{\StochHK,s}(\Finally C)
    \tag{by Prop.~\ref{prop:mdp:results}.\ref{prop:mdp:results:i}}\\
    & =
    \sup_{\strat_\maxplay\in\DetMemorylessStrats{\StochHK,\maxplay}}\inf_{\strat_\minplay\in\DetMemorylessStrats{\StochHK,\minplay}}\Prob^{\strat_{\maxplay},\strat_{\minplay}}_{\StochHK,s}(\Finally C)
    \tag{by \cite[Lemma 6]{Condon92}}\\
    & =
    \sup_{\strat_\maxplay\in\DetMemorylessStrats{\StochHK,\maxplay}}\inf_{\strat_\minplay\in\SemiMarkovStrats{\StochHK,\minplay}}\Prob^{\strat_{\maxplay},\strat_{\minplay}}_{\StochHK,s}(\Finally C)
    \tag{by Prop.~\ref{prop:mdp:results}.\ref{prop:mdp:results:ii}}\\
    & \leq
    \sup_{\strat_\maxplay\in\SemiMarkovStrats{\StochHK,\maxplay}}\inf_{\strat_\minplay\in\SemiMarkovStrats{\StochHK,\minplay}}\Prob^{\strat_{\maxplay},\strat_{\minplay}}_{\StochHK,s}(\Finally C)
    \tag{$\DetMemorylessStrats{\StochHK,\maxplay}\subseteq\SemiMarkovStrats{\StochHK,\maxplay}$}\\
    & =
    \sup_{\strat_\maxplay\in\XSemiMarkovStrats{\StochGK,\maxplay}}\inf_{\strat_\minplay\in\XSemiMarkovStrats{\StochGK,\minplay}}\Prob^{\strat_{\maxplay},\strat_{\minplay}}_{\StochGK,s}(\Finally C)
    \tag{by Prop.~\ref{prop:infsup:supinf:StochGK:StochHK}.\ref{prop:infsup:supinf:StochGK:StochHK:ii}}\\
    & =
    \sup_{\strat_\maxplay\in\SemiMarkovStrats{\StochGK,\maxplay}}\inf_{\strat_\minplay\in\SemiMarkovStrats{\StochGK,\minplay}}\Prob^{\strat_{\maxplay},\strat_{\minplay}}_{\StochGK,s}(\Finally C)
    \tag{by Corollary~\ref{cor:xsemimarkov}.\ref{cor:xsemimarkov:i}}\\
    & =
    \sup_{\strat_\maxplay\in\SemiMarkovStrats{\StochGK,\maxplay}}\inf_{\strat_\minplay\in\Strategies{\StochGK,\minplay}}\Prob^{\strat_{\maxplay},\strat_{\minplay}}_{\StochGK,s}(\Finally C)
    \tag{by Lemma~\ref{lm:semimarkov}.\ref{lm:semimarkov:ii}}\\
    & \leq
    \sup_{\strat_\maxplay\in\Strategies{\StochGK,\maxplay}}\inf_{\strat_\minplay\in\Strategies{\StochGK,\minplay}}\Prob^{\strat_{\maxplay},\strat_{\minplay}}_{\StochGK,s}(\Finally C)
    \tag{$\SemiMarkovStrats{\StochGK,\maxplay}\subseteq\Strategies{\StochGK,\maxplay}$}\\
    & \leq
    \inf_{\strat_\minplay\in\Strategies{\StochGK,\minplay}}\sup_{\strat_\maxplay\in\Strategies{\StochGK,\maxplay}}\Prob^{\strat_{\maxplay},\strat_{\minplay}}_{\StochGK,s}(\Finally C)
    \tag{by prop. of $\sup$ and $\inf$}
  \end{align*}
  Since the last term is equal to the first term in the calculation,
  item~\ref{th:determinacy:and:discretazation:i} is concluded.

  For item~\ref{th:determinacy:and:discretazation:ii} we calculate as follows:
  \begin{align*}
    \inf_{\strat_\minplay\in\Strategies{\StochGK,\minplay}}\sup_{\strat_\maxplay\in\Strategies{\StochGK,\maxplay}}\Expect^{\strat_{\maxplay},\strat_{\minplay}}_{\StochGK,s}(\GRewards)
    \hspace{-8em} &
    \\
    & \leq
    \inf_{\strat_\minplay\in\SemiMarkovStrats{\StochGK,\minplay}}\sup_{\strat_\maxplay\in\Strategies{\StochGK,\maxplay}}\Expect^{\strat_{\maxplay},\strat_{\minplay}}_{\StochGK,s}(\GRewards)
    \tag{$\SemiMarkovStrats{\StochGK,\minplay}\subseteq\Strategies{\StochGK,\minplay}$}\\
    & =
    \inf_{\strat_\minplay\in\SemiMarkovStrats{\StochGK,\minplay}}\sup_{\strat_\maxplay\in\SemiMarkovStrats{\StochGK,\maxplay}}\Expect^{\strat_{\maxplay},\strat_{\minplay}}_{\StochGK,s}(\GRewards)
    \tag{by Lemma~\ref{lm:semimarkov}.\ref{lm:semimarkov:iii}}\\
    & =
    \inf_{\strat_\minplay\in\XSemiMarkovStrats{\StochGK,\minplay}}\sup_{\strat_\maxplay\in\XSemiMarkovStrats{\StochGK,\maxplay}}\Expect^{\strat_{\maxplay},\strat_{\minplay}}_{\StochGK,s}(\GRewards)
    \tag{by Corollary~\ref{cor:xsemimarkov}.\ref{cor:xsemimarkov:ii}}\\
    & =
    \inf_{\strat_\minplay\in\SemiMarkovStrats{\StochHK,\minplay}}\sup_{\strat_\maxplay\in\SemiMarkovStrats{\StochHK,\maxplay}}\Expect^{\strat_{\maxplay},\strat_{\minplay}}_{\StochHK,s}(\GRewards)
    \tag{by Prop.~\ref{prop:infsup:supinf:StochGK:StochHK}.\ref{prop:infsup:supinf:StochGK:StochHK:iii}}\\
    & \leq
    \inf_{\strat_\minplay\in\DetMemorylessStrats{\StochHK,\minplay}}\sup_{\strat_\maxplay\in\SemiMarkovStrats{\StochHK,\maxplay}}\Expect^{\strat_{\maxplay},\strat_{\minplay}}_{\StochHK,s}(\GRewards)
    \tag{$\DetMemorylessStrats{\StochHK,\minplay}\subseteq\SemiMarkovStrats{\StochHK,\minplay}$}\\
    & =
    \inf_{\strat_\minplay\in\DetMemorylessStrats{\StochHK,\minplay}}\sup_{\strat_\maxplay\in\DetMemorylessStrats{\StochHK,\maxplay}}\Expect^{\strat_{\maxplay},\strat_{\minplay}}_{\StochHK,s}(\GRewards)
    \tag{by Prop.~\ref{prop:mdp:results}.\ref{prop:mdp:results:iii}}\\
    & =
    \sup_{\strat_\maxplay\in\DetMemorylessStrats{\StochHK,\maxplay}}\inf_{\strat_\minplay\in\DetMemorylessStrats{\StochHK,\minplay}}\Expect^{\strat_{\maxplay},\strat_{\minplay}}_{\StochHK,s}(\GRewards)
    \tag{*}\\
    & =
    \sup_{\strat_\maxplay\in\DetMemorylessStrats{\StochHK,\maxplay}}\inf_{\strat_\minplay\in\SemiMarkovStrats{\StochHK,\minplay}}\Expect^{\strat_{\maxplay},\strat_{\minplay}}_{\StochHK,s}(\GRewards)
    \tag{by Prop.~\ref{prop:mdp:results}.\ref{prop:mdp:results:iv}}\\
    & \leq
    \sup_{\strat_\maxplay\in\SemiMarkovStrats{\StochHK,\maxplay}}\inf_{\strat_\minplay\in\SemiMarkovStrats{\StochHK,\minplay}}\Expect^{\strat_{\maxplay},\strat_{\minplay}}_{\StochHK,s}(\GRewards)
    \tag{$\DetMemorylessStrats{\StochHK,\maxplay}\subseteq\SemiMarkovStrats{\StochHK,\maxplay}$}\\
    & =
    \sup_{\strat_\maxplay\in\XSemiMarkovStrats{\StochGK,\maxplay}}\inf_{\strat_\minplay\in\XSemiMarkovStrats{\StochGK,\minplay}}\Expect^{\strat_{\maxplay},\strat_{\minplay}}_{\StochGK,s}(\GRewards)
    \tag{by Prop.~\ref{prop:infsup:supinf:StochGK:StochHK}.\ref{prop:infsup:supinf:StochGK:StochHK:iv}}\\
    & =
    \sup_{\strat_\maxplay\in\SemiMarkovStrats{\StochGK,\maxplay}}\inf_{\strat_\minplay\in\SemiMarkovStrats{\StochGK,\minplay}}\Expect^{\strat_{\maxplay},\strat_{\minplay}}_{\StochGK,s}(\GRewards)
    \tag{by Corollary~\ref{cor:xsemimarkov}.\ref{cor:xsemimarkov:ii}}\\
    & =
    \sup_{\strat_\maxplay\in\SemiMarkovStrats{\StochGK,\maxplay}}\inf_{\strat_\minplay\in\Strategies{\StochGK,\minplay}}\Expect^{\strat_{\maxplay},\strat_{\minplay}}_{\StochGK,s}(\GRewards)
    \tag{by Lemma~\ref{lm:semimarkov}.\ref{lm:semimarkov:iii}}\\
    & 
    \leq
    \sup_{\strat_\maxplay\in\Strategies{\StochGK,\maxplay}}\inf_{\strat_\minplay\in\Strategies{\StochGK,\minplay}}\Expect^{\strat_{\maxplay},\strat_{\minplay}}_{\StochGK,s}(\GRewards)
    \tag{$\SemiMarkovStrats{\StochGK,\maxplay}\subseteq\Strategies{\StochGK,\maxplay}$}\\
    & \leq
    \inf_{\strat_\minplay\in\Strategies{\StochGK,\minplay}}\sup_{\strat_\maxplay\in\Strategies{\StochGK,\maxplay}}\Expect^{\strat_{\maxplay},\strat_{\minplay}}_{\StochGK,s}(\GRewards)
    \tag{by prop. of $\sup$ and $\inf$}
  \end{align*}
  Since the last term is equal to the first term in the calculation,
  item~\ref{th:determinacy:and:discretazation:ii} is concluded.
  In particular, step (*) is justified as follows, depending on
  $\GRewards$:
  \begin{itemize}
  \item%
    For $\GRewards=\TRewards$, (*) follows by
    \cite[Theorem~4.2.6]{FilarV96} since, by
    Proposition~\ref{prop:stopping:irreducible:StochGK:StochHK}.\ref{prop:stopping:irreducible:StochGK:StochHK:i},
    the game $\StochHK$ is also almost surely stopping.
  \item%
    For $\GRewards=\DRewards{\gamma}$ (*) follows by
    \cite[Theorem~4.3.2]{FilarV96}.
  \item%
    For $\GRewards=\ARewards$ (*) follows by
    \cite[Theorem~5.1.5]{FilarV96} since, by
    Proposition~\ref{prop:stopping:irreducible:StochGK:StochHK}.\ref{prop:stopping:irreducible:StochGK:StochHK:ii},
    the game $\StochHK$ is also irreducible.
  \qed
  \end{itemize}
\end{proof}

\end{document}